\newtheorem{teor}{Theorem}
\newtheorem{prop}{Proposition}
\newtheorem{corol}{Corollary}
\newtheorem{lem}{Lemma}
\newtheorem{definition}{Definition}
\def\qed{\ifvmode\Realemovelastskip\fi
{\unskip\nobreak\hfil\penalty50\hbox{}\nobreak\hfil \hbox{\vrule
height1.2ex width1.2ex}\parfillskip=0pt \finalhyphendemerits=0
\par\smallskip}}
\def\qedr{\ifvmode\Realemovelastskip\fi
{\unskip\nobreak\hfil\penalty50\hbox{}\nobreak\hfil \hbox{
$\diamond$}\parfillskip=0pt \finalhyphendemerits=0
\par\smallskip}}
\def\ds{\displaystyle}
\newenvironment{proof}{\noindent{\sl Proof:~~~}}{\quad \qed}
\def\beq{\begin{equation}}
\def\eeq{\end{equation}}
\def\bea{\begin{eqnarray}}
\def\eea{\end{eqnarray}}
\def\beann{\begin{eqnarray*}}
\def\eeann{\end{eqnarray*}}
\def\beasn{\begin{sneqnarray}}
\def\eeasn{\end{sneqnarray}}
\def\ben{\begin{enumerate}}
\def\een{\end{enumerate}}
\def\bit{\begin{itemize}}
\def\eit{\end{itemize}}
\def\derpar#1#2{\displaystyle\frac{\partial{#1}}{\partial{#2}}}
\def\derpars#1#2#3{\displaystyle\frac{\partial^2{#1}}{\partial{#2}\partial{#3}}}
\def\restric#1#2{\left.#1\right|_{#2}}
\def\W{{\cal W}}
\def\C{{\cal C}}
\def\vf{{\mathfrak{X}}}
\def\df{{\mit\Omega}}
\def\Lag{{\cal L}}
\def\Leg{{\cal FL}}
\def\d{{\rm d}}
\def\Nat{\mathbb{N}}
\def\Real{\mathbb{R}}
\def\R{\mathbb{R}}
\def\pr{\operatorname{pr}}
\def\Tan{{\rm T}}
\def\Lie{\mathop{\rm L}\nolimits}
\def\inn{\mathop{i}\nolimits}
\def\Cinfty{{\rm C}^\infty}
\def\tabaddress#1{{\small\it\begin{tabular}[t]{c}#1
\\[1.2ex]\end{tabular}}}
\def\qed{\ifvmode\removelastskip\fi
{\unskip\nobreak\hfil\penalty50\hbox{}\nobreak\hfil \hbox{\vrule
height1.2ex width1.2ex}\parfillskip=0pt \finalhyphendemerits=0
\par\smallskip}}
\title{LAGRANGIAN-HAMILTONIAN UNIFIED FORMALISM FOR
AUTONOMOUS HIGHER-ORDER DYNAMICAL SYSTEMS}
\author{
{\sc  Pedro Daniel Prieto-Mart\'\i nez\thanks{{\bf e}-{\it mail}:
   peredaniel@ma4.upc.edu} }\\
   {\sc Narciso Rom\'an-Roy\thanks{{\bf e}-{\it mail}:
   nrr@ma4.upc.edu}}  \\
   \tabaddress{Departamento de Matem\'atica Aplicada IV.
   Edificio C-3, Campus Norte UPC\\
   C/ Jordi Girona 1. 08034 Barcelona. Spain}}
\begin{document}

\maketitle

\pagestyle{myheadings}

\thispagestyle{empty}

\begin{abstract}
The Lagrangian-Hamiltonian unified formalism of R. Skinner and R. Rusk was originally stated
for autonomous dynamical systems in classical mechanics.
It has been generalized for non-autonomous first-order mechanical systems,
as well as for first-order and higher-order field theories.
However, a complete generalization to higher-order mechanical systems has yet to be described.
In this work, after reviewing the natural geometrical setting and the
Lagrangian and Hamiltonian formalisms for higher-order autonomous mechanical systems,
we develop a complete generalization of the Lagrangian-Hamiltonian unified formalism
for these kinds of systems, and we use it to analyze some physical models from this new point of view.
\end{abstract}

 \bigskip
\noindent {\bf Key words}: {\sl Higher-order systems, Lagrangian and 
 Hamiltonian formalisms, Symplectic and presymplectic manifolds.}

\vbox{\raggedleft AMS s.\,c.\,(2000): 70H50, 53C80, 53C15}\null
\markright{\rm P.D. Prieto-Mart\'\i nez, N. Rom\'an-Roy:   \sl Unified formalism for higher-order systems.}

\clearpage

\tableofcontents

\section{Introduction}
\label{section:intro}

In recent decades, a strong development in the intrinsic study of a wide variety of topics in theoretical physics,
control theory and applied mathematics has been done, using methods of differential geometry.
Thus, the intrinsic formulation of Lagrangian and Hamilonian formalisms has been developed
both for autonomous and non-autonomous systems.
This study has been carried out mainly for first-order dynamical systems;
that is, those whose Lagrangian or Hamiltonian functions depend
on the generalized coordinates of position and velocity (or momentum).
From the geometric point of view, this means that the phase space of the system is in most cases
the tangent or cotangent bundle of the smooth manifold representing the configuration space.

However, there are a significant number of relevant systems in which the dynamics have explicit
dependence on accelerations or higher-order derivatives of the generalized coordinates of position.
These systems, usually called {\sl higher-order dynamical systems}, can be modeled geometrically using
higher-order tangent bundles \cite{book:DeLeon_Rodrigues85}.
These models are typical of theoretical physics; for example those describing the interaction of
relativistic particles with spin, string theories from Polyakov and others,
Hilbert's Lagrangian for gravitation or Podolsky's generalization of electromagnetism
(see \cite{art:Batlle_Gomis_Pons_Roman88} and references cited there).
They also appear in a natural way in numerical models arising from the discretization of
first-order dynamical systems that preserve their inherent geometric structures
\cite{art:DeLeon_Martin_Santamaria04}.
There are a lot of works devoted to the development of the formalism of these kinds of theories
and their application to many models in mechanics and field theory (see, for instance, 
\cite{art:Aldaya_Azcarraga78_2},
\cite{art:Aldaya_Azcarraga80},
\cite{art:Banerjee_Mukherjee_Paul10},
\cite{art:Barcelos_Natividade91},
\cite{art:Belvedere_Amaral_Lemos95},
\cite{art:Carinena_Lopez92},
\cite{proc:Garcia_Munoz83},
\cite{art:Krupkova94},
\cite{art:Kuznetsov_Plyushchay94},
\cite{art:Plyushchay91},
\cite{art:Saunders_Crampin90},
\cite{art:Schmidt97}).

Furthermore, a generalization of the Lagrangian and Hamiltonian formalisms exists that compresses them
into a single formalism. This is the so-called {\sl Lagrangian-Hamiltonian unified formalism},
or  {\sl Skinner-Rusk formalism} due to the authors' names of the original paper.
It was originally developed for first-order autonomous mechanical systems \cite{art:Skinner_Rusk83},
and later generalized to non-autonomous dynamical systems
\cite{art:Barbero_Echeverria_Martin_Munoz_Roman08,art:Cortes_Martinez_Cantrijn02},
control systems \cite{art:Barbero_Echeverria_Martin_Munoz_Roman07},
first-order classical field theories \cite{art:DeLeon_Marrero_Martin03,art:Echeverria_Lopez_Marin_Munoz_Roman04}
and, more recently, to higher-order classical field theories
\cite{art:Campos_DeLeon_Martin_Vankerschaver09,art:Vitagliano10}.
Nevertheless, although the geometrization of both higher-order Lagrangian and Hamiltonian formalisms
was already developed for autonomous mechanical systems
\cite{proc:Cantrijn_Crampin_Sarlet86,book:DeLeon_Rodrigues85,art:Gracia_Pons_Roman91,book:Miron10},
a complete generalization of the Skinner-Rusk formalism for higher-order mechanical systems
has yet to be developed.
A first attempt was outlined in \cite{art:Colombo_Martin_Zuccalli10},
with the aim of providing a geometric model for studying optimal control of underactuated systems,
although a deep analysis of the model and its relation with the standard Lagrangian and Hamiltonian formalisms
was not performed.

Thus, the aim of this work is to provide a detailed and complete description of the
Lagrangian-Hamiltonian unified formalism for higher-order autonomous mechanical systems.
Our approach is different from that given in \cite{art:Colombo_Martin_Zuccalli10}
(these differences are commented on Section \ref{section:outlook}).

The paper is organized as follows:

Section \ref{section:structures} consists of a review of the basic definitions and
the geometric structures of higher-order tangent bundles,
some of which are generalizations of the geometric structures of tangent bundles;
namely, the {\sl canonical vector fields}, the {\sl almost-tangent structures} and {\sl semisprays};
whereas others such as the {\sl Tulczyjew derivation} are needed for developing the
Lagrangian and Hamiltonian formalisms of higher-order mechanical systems,
which are also described in this section.
In particular, higher-order regular and singular systems are distinguished.

The main contribution of the work is found in Section \ref{SkinnerRusk},
where the geometric formulation of the Lagrangian-Hamiltonian unified formalism
for higher-order autonomous mechanical systems is described in detail,
including the study of how the Lagrangian and Hamiltonian formalisms are recovered from that formalism.

Finally, in Section \ref{section:examples}, two examples are analyzed in order to show the application of the formalism;
the first is a regular system, the so-called {\sl Pais-Uhlenbeck oscillator},
while the second is a singular one, the {\sl second-order relativistic particle}.

The paper concludes in Section \ref{section:outlook} with a summary of results, discussion and future research.

All the manifolds, the maps and the structures are smooth.
In addition, all the dynamical systems considered are autonomous.
Summation over crossed repeated indices is understood, although on some occasions
the symbol of summation is written explicitly in order to avoid confusion.

\section{Higher-order dynamical systems}
\label{section:structures}

\subsection{Geometric structures of higher-order tangent bundles}

(See \cite{book:DeLeon_Rodrigues85,book:Saunders89,art:Gracia_Pons_Roman91,art:Gracia_Pons_Roman92,phd:Martinez} for details).

\subsubsection{Higher-order tangent bundles}

Let $Q$ be a $n$-dimensional differentiable manifold, and $k\in\Nat$.
The {\sl $k$th-order tangent bundle} of $Q$, denoted by $\Tan^kQ$,  is the $(k+1)n$-dimensional manifold
made of the $k$-jets with source at $0 \in \Real$ and target $Q$; that is, $\Tan^kQ = J_0^k(\Real,Q)$.
It is a submanifold of $J^k(\R,Q)$.

We have the following canonical projections: if $r\leq k$,
$$
\begin{array}{rcclcrccl}
\rho^k_r \colon & \Tan^kQ & \longrightarrow & \Tan^rQ & , &
\beta^k \colon  & \Tan^kQ & \longrightarrow & Q \\
 \\
\ & \tilde{\sigma}^k(0) & \longmapsto & \tilde{\sigma}^r(0) & , &
\ & \tilde{\sigma}^k(0) & \longmapsto & \sigma(0)  \ ,
\end{array}
$$
where $\tilde{\sigma}^k(0)$ denotes a point in $\Tan^kQ$; that is,
the equivalence class of a curve $\sigma \colon I \subset \R \to Q$
by the $k$-jet equivalence relation.
Notice that $\rho^k_0 = \beta^k$, where $\Tan^0Q$ is canonically identified
with $Q$, and $\rho^k_k = {\rm Id}_{\Tan^kQ}$.

If $\left(U,\varphi\right)$ is a local chart in $Q$, with $\varphi = \left(\varphi^A\right)$, $1\leq A \leq n$,
and $\sigma \colon \Real \to Q$ is a curve in $Q$ such that $\sigma(0) \in U$;
by writing $\sigma^A = \varphi^A \circ \sigma$, the $k$-jet $\tilde{\sigma}^k(0)$
is given in $\left(\beta^k\right)^{-1}(U) = \Tan^kU$ by
$\left(q^A,q^A_{1},\ldots,q^A_{k}\right)$, where
$q^A = \sigma^A(0)$ and $\ds q_{i}^A = \frac{d^i\sigma^A}{dt^i}(0)$ ($1\leq i \leq k$).
Usually we write $q_{0}^A$ instead of $q^A$, and so we have the local chart
$\left(\beta^k\right)^{-1}(U)$ in $\Tan^kQ$ with local coordinates $\left(q_{0}^A,q_{1}^A,\ldots,q_{k}^A\right)$.
Local coordinates in $\Tan(\Tan^kQ)$ are denoted by
$\left(q_0^A,q_1^A,\ldots,q_k^A;v_0^A,v_1^A,\ldots,v_k^A\right)$.

Using these coordinates, the local expression of the canonical projections are
$\rho^k_r\left(q_0^A,q_1^A,\ldots,q_k^A\right) = \left(q_0^A,q_1^A,\ldots,q_r^A\right)$,
and then for the tangent maps $\Tan\rho^k_r \colon \Tan(\Tan^kQ) \to \Tan(\Tan^rQ)$,
we have the local expression
$\Tan\rho^k_r\left(q_0^A,q_1^A,\ldots,q_k^A,v_0^A,v_1^A,\ldots,v_k^A\right) = \left(q_0^A,q_1^A,\ldots,q_r^A,v_0^A,v_1^A,\ldots,v_r^A\right)$.

If $\sigma \colon \R \to Q$ is a curve in $Q$,
the {\rm canonical lifting} of $\sigma$ to $\Tan^kQ$
is the curve $\tilde\sigma^k\colon \Real\to\Tan^kQ$ defined as
$\tilde{\sigma}^k(t) = \tilde{\sigma}^k_t(0)$, where $\sigma_t(s) = \sigma(s+t)$,
(that is, the $k$-jet lifting of $\sigma$).
If $k=1$, we will write $\tilde{\sigma}^1 \equiv \tilde{\sigma}$.

Let $V(\rho^k_{r-1})$ be the vertical sub-bundle of $\Tan^kQ$ in $\Tan^{r-1}Q$.
In the above coordinates, for every
 $p \in \Tan^kQ$ and $u\in V_p(\rho^k_{r-1})$, we have that its components are
$u = \left(0,\ldots,0,v_r^A,\ldots,v_k^A\right)$.
Furthermore, if $i_{k-r+1} \colon V(\rho^k_{r-1}) \hookrightarrow \Tan(\Tan^kQ)$
is the canonical embedding, then
$$
i_{k-r+1}\left(q_0^A,\ldots,q_k^A,v_r^A,\ldots,v_k^A\right) = \left(q_0^A,\ldots,q_k^A,0,\ldots,0,v_r^A,\ldots,v_k^A\right) \ .
$$

Consider now the induced bundle of $\tau_{\Tan^{r-1}Q} \colon \Tan(\Tan^{r-1}Q) \to \Tan^{r-1}Q$
by the canonical projection $\rho^k_{r-1}$, denoted by $\Tan^kQ \times_{\Tan^{r-1}Q}\Tan(\Tan^{r-1}Q)$,
which is a vector bundle over $\Tan^kQ$. We have the following commutative diagrams
$$
\xymatrix{
\Tan^kQ \times_{\Tan^{r-1}Q}\Tan(\Tan^{r-1}Q) \ar@{-->}[r] \ar@{-->}[d] 
& \Tan(\Tan^{r-1}Q) \ar[d]^{\tau_{\Tan^{r-1}Q}} \\
\Tan^kQ \ar[r]^{\rho_{r-1}^k} & \Tan^{r-1}Q
}
\quad , \quad
\xymatrix{
\Tan(\Tan^kQ) \ar[rr]^{\Tan\rho_{r-1}^k} \ar[d]^{\tau_{\Tan^kQ}} & \ & \Tan(\Tan^{r-1}Q) \ar[d]^{\tau_{\Tan^{r-1}Q}} \\
\Tan^kQ \ar[rr]^{\rho^k_{r-1}} & \ & \Tan^{r-1}Q \ .
}
$$
Then, there exists a unique bundle morphism
 $s_{k-r+1} \colon \Tan(\Tan^kQ) \to \Tan^kQ \times_{\Tan^{r-1}Q}\Tan(\Tan^{r-1}Q)$
such that the following diagram is commutative:
$$
\xymatrix{
\Tan(\Tan^kQ) \ar@/_/[ddr]_{\tau_{\Tan^kQ}} \ar[dr]^{s_{k-r+1}} \ar@/^/[drr]^{\Tan\rho^k_{r-1}} & \ & \ \\
\ & \Tan^kQ \times_{\Tan^{r-1}Q}\Tan(\Tan^{r-1}Q) \ar@{-->}[r] 
\ar@{-->}[d] & \Tan(\Tan^{r-1}Q) \ar[d]^{\tau_{\Tan^{r-1}Q}} \\
\ & \Tan^kQ \ar[r]^{\rho^k_{r-1}} & \Tan^{r-1}Q  \ . \\
}
$$
It is defined by $s_{k-r+1}(u) = \left(\tau_{\Tan^kQ}(u), \Tan\rho^k_{r-1}(u)\right)$,
for every $u \in \Tan(\Tan^kQ)$.
Its local expression is
$$
s_{k-r+1}\left(q_0^A,\ldots,q_k^A,v_0^A,\ldots,v_k^A\right) = \left(q_0^A,\ldots,q_{r-1}^A,q_{r}^A,\ldots,q_{k}^A,v_0^A,\ldots,v_{r-1}^A\right)\ .
$$
As $s_{k-r+1}$ is a surjective map and ${\rm Im}\,(i_{k-r+1}) = \ker\,(s_{k-r+1})$, we have the exact sequence
$$
0 \longrightarrow V(\rho^k_{r-1})  \stackrel{i_{k-r+1}}{\longrightarrow}
\Tan(\Tan^kQ)  \stackrel{s_{k-r+1}}{\longrightarrow} 
\Tan^kQ \times_{\Tan^{r-1}Q}\Tan(\Tan^{r-1}Q) \longrightarrow 0 \ ,
$$
which is called the {\sl $(k-r+1)$-fundamental exact sequence}.
In local coordinates, it is given by
\begin{align*}
0 \longmapsto & \left(q_0^A,\ldots,q_k^A,v_r^A,\ldots,v_k^A\right) \stackrel{i_{k-r+1}}{\longmapsto}
 \left(q_0^A,\ldots,q_k^A,0,\ldots,0,v_r^A,\ldots,v_k^A\right) \\
& \left(q_0^A,\ldots,q_k^A,v_0^A,\ldots,v_k^A\right) \stackrel{s_{k-r+1}}{\longmapsto}
 \left(q_0^A,\ldots,q_k^A;q_0^A,\ldots,q_{r-1}^A,v_0^A,\ldots,v_{r-1}^A\right) \longmapsto 0 \ .
\end{align*}
Thus, we have $k$ exact sequences
\begin{align*}
1st \colon \ & 0 \longrightarrow V(\rho^k_{k-1}) \stackrel{i_1}{\longrightarrow}
\Tan(\Tan^kQ)  \stackrel{s_1}{\longrightarrow} \Tan^kQ \times_{\Tan^{k-1}Q}\Tan(\Tan^{k-1}Q) \longrightarrow 0 \\
& \vdots \\
rth \colon \ & 0 \longrightarrow V(\rho^k_{k-r}) \stackrel{i_r}{\longrightarrow}
 \Tan(\Tan^kQ) \stackrel{s_r}{\longrightarrow} \Tan^kQ \times_{\Tan^{k-r}Q}\Tan(\Tan^{k-r}Q) \longrightarrow 0 \\
& \vdots \\
kth \colon \ & 0 \longrightarrow V(\beta^k) \stackrel{i_{k}}{\longrightarrow}
 \Tan(\Tan^kQ) \stackrel{s_k}{\longrightarrow} \Tan^kQ \times_{Q}\Tan Q \longrightarrow 0 \ ,
\end{align*}
where $V(\beta^k) \equiv V(\rho^k_0)$ denotes the vertical subbundle of $\Tan^kQ$ on $Q$.
These sequences can be connected by means of the connecting maps
$$
h_{k-r+1} \colon \Tan^kQ \times_{\Tan^{k-r}Q}\Tan(\Tan^{k-r}Q) \longrightarrow V(\rho^k_{r-1})
$$
locally defined as
$$
h_{k-r+1}\left(q_0^A,\ldots,q_{k}^A,v_0^A,\ldots,v_{k-r}^A\right) = 
\left(q_0^A,\ldots,q_k^A, 0,\ldots,0,\frac{r!}{0!}v_0^A,\frac{(r+1)!}{1!}v_1^A,\ldots,\frac{k!}{(k-r)!}v_{k-r}^A\right) \ .
$$
These maps are globally well-defined and are vector bundle isomorphisms over $\Tan^kQ$.
Then we have the following connection between exact sequences:
$$
\xymatrix{
0 \ar[r] & V(\rho^k_{k-r}) \ar[r]^{i_r} & \Tan(\Tan^kQ)
 \ar[r]^-{s_r} & \Tan^kQ \times_{\Tan^{k-r}Q} \Tan(\Tan^{k-r}Q) \ar[dll]^<<{h_{k-r+1}}|(.5){\hole} \ar[r] & 0 \\
0 \ar[r] & V(\rho^k_{r-1}) \ar[r]_{i_{k-r+1}} & \Tan(\Tan^kQ) \ar[r]_-{s_{k-r+1}} 
& \Tan^kQ \times_{\Tan^{r-1}Q} \Tan(\Tan^{r-1}Q) \ar[ull]_<<{h_r} \ar[r] & 0 \ .
}
$$

\subsubsection{Higher-order canonical vector fields. Vertical endomorphisms and almost-tangent structures}
\label{sect:Cap02_LiouvilleVectField}

The {\sl canonical injection} is the map
\begin{equation}
\label{eqn:Cap02_DefCanonicalImmersion}
\begin{array}{rccl}
j_r \colon & \Tan^kQ & \longrightarrow & \Tan(\Tan^{r-1}Q) \\
\ & \tilde{\sigma}^k(0) & \longmapsto & \tilde{\gamma}(0)
\end{array} \quad , \quad (1\leq r\leq k) \ ,
\end{equation}
where
$$
\begin{array}{rccl}
\gamma \colon & \R & \longrightarrow & \Tan^{r-1}Q \\
\ & t & \longmapsto & \gamma(t) = \tilde{\sigma}_t^{r-1}(0) \ .
\end{array}
$$
In local coordinates
\begin{equation}
\label{eqn:Cap02_LocalCoordCanonicalImmersion}
j_r\left(q_0^A,\ldots,q_k^A\right) = \left(q_0^A,\ldots,q_{r-1}^A;q_1^A,q_2^A,\ldots,q_r^A\right) \ .
\end{equation}

Then, the following composition allows us to define a vector field $\Delta_r\in\vf (\Tan^kQ)$,
$$
\xymatrix{
\Tan^kQ \ar[rr]^-{{\rm Id}\times j_{k-r+1}} \ar@/_1.5pc/[rrrrrr]_{\Delta_r} & \ & 
\Tan^kQ \times_{\Tan^{k-r}Q} \Tan(\Tan^{k-r}Q) \ar[rr]^-{h_{k-r+1}} & \ &
 V(\rho^k_{r-1}) \ar[rr]^-{i_{k-r+1}} & \ & \Tan(\Tan^kQ) \ ;
}
$$
that is,
$\Delta_r = i_{k-r+1} \circ h_{k-r+1} \circ \left({\rm Id} \times j_{k-r+1}\right)$.
From the local expressions of $i_{k-r+1}$, $h_{k-r+1}$ and $j_{k-r+1}$ we obtain that
$\Delta_r\left( q_0^A,\ldots,q_k^A \right) = 
\left( q_0^A,\ldots,q_k^A,0,\ldots,0,r!\,q_1^A,(r+1)!\,q_2^A,\ldots,\frac{k!}{(k-r)!}q_{k-r+1}^A \right)$;
or what is equivalent,
$$
\Delta_r = \sum_{i=0}^{k-r} \frac{(r+i)!}{i!} q_{i+1}^A \derpar{}{q_{r+i}^A} = r!\,q_1^A\derpar{}{q_r^A} + 
(r+1)!\,q_2^A\derpar{}{q_{r+1}^A} + \ldots + \frac{k!}{(k-r)!}\,q_{k-r+1}^A\derpar{}{q_k^A} \ .
$$
In particular
$$
\Delta_1 = \sum_{i=1}^{k} i q_i^A \derpar{}{q_{i}^A} = \sum_{i=0}^{k-1} (i+1) q_{i+1}^A \derpar{}{q_{i+1}^A} = 
q_1^A\derpar{}{q_1^A} + 2q_2^A\derpar{}{q_2^A} + \ldots + kq_{k}^A\derpar{}{q_k^A} \ .
$$

\begin{definition}
The vector field $\Delta_r$ is the {\rm $r$th-canonical vector field} in $\Tan^kQ$. In particular,
$\Delta_1$ is called the {\rm Liouville vector field} in $\Tan^kQ$.
\end{definition}


Remember that, if $N$ is a $(k+1)n$-dimensional manifold,
an {\sl almost-tangent structure of order $k$} in $N$ is an endomorphism $J$ in $\Tan N$ such that
$J^{k+1} = 0$ and ${\rm rank}\,J = kn$. Then,
$\Tan^kQ$ is endowed with a canonical almost-tangent structure. In fact:

\begin{definition}
For $1 \leq r \leq k$, let $i_{k-r+1}$, $h_{k-r+1}$, $s_r$ be the morphisms of the fundamental exact sequences
introduced above. The map
$$
J_r = i_{k-r+1} \circ h_{k-r+1} \circ s_r \colon \Tan(\Tan^kQ) \longrightarrow \Tan(\Tan^kQ)
$$
defined by the composition
$$
\xymatrix{
\Tan(\Tan^kQ) \ar[rr]^-{s_r} \ar@/_1.5pc/[rrrrrr]_{J_r} & \ & 
\Tan^kQ \times_{\Tan^{k-r}Q} \Tan(\Tan^{k-r}Q) \ar[rr]^-{h_{k-r+1}} & \ & 
V(\rho^k_{r-1}) \ar[rr]^-{i_{k-r+1}} & \ & \Tan(\Tan^kQ)
}
$$
is called the {\rm $r$th-vertical endomorphism} of $\Tan(\Tan^kQ)$.
\end{definition}

From the local expressions of $s_r$, $h_{k-r+1}$, $i_{k-r+1}$  we obtain that
$$
J_r\left(q_0^A,\ldots,q_k^A,v_0^A,\ldots,v_k^A\right) =
 \left(q_0^A,\ldots,q_k^A,0,\ldots,0,r!\,v_0^A,(r+1)!\,v_1^A,\ldots,\frac{k!}{(k-r)!}\,v_{k-r}^A\right) \ ;
$$
that is,
$\ds J_r = \sum_{i=0}^{k-r} \frac{(r+i)!}{i!} \, dq_i^A \otimes \derpar{}{q_{r+i}^A}$.
In particular,
$\ds J_1 = \sum_{i=0}^{k-1} (i+1) dq_i^A \otimes \derpar{}{q_{i+1}^A}$.

The $r$th-vertical endomorphism $J_r$ has constant rank $(k-r+1)n$ and satisfies that
$$
\left(J_r\right)^s = \begin{cases} 0 & \mbox{\rm if } rs \geqslant k+1 \\ J_{rs} & \mbox{\rm if } rs < k\end{cases} \ .
$$

As a consequence, the $1$st-vertical endomorphism $J_1$ defines an almost-tangent structure
of order $k$ in $\Tan^kQ$, which is called the
{\sl canonical almost-tangent structure} of $\Tan^kQ$.
Then, any other vertical endomorphism $J_r$ is obtained by composing $J_1$ with itself $r$ times.
Furthermore, we have the following relation:
$$
J_r\circ \Delta_s = \begin{cases} 0, & \mbox{\rm if } r+s\geqslant k+1 \\
 \Delta_{r+s}, & \mbox{\rm if } r+s < k+1 \end{cases}
 $$
As a consequence, starting from the Liouville vector field and the vertical endomorphisms,
we can recover all the canonical vector fields. However, as all the vertical endomorphisms are obtained from $J_1$,
we conclude that all the canonical structures in $\Tan^kQ$ are obtained from the Liouville vector field
and the canonical almost-tangent structure.

Consider now the dual maps $J_r^*$ of $J_r$, $1 \leqslant r \leqslant k$;
that is, the endomorphisms in $\Tan^*(\Tan^kQ)$, 
and their natural extensions to the exterior algebra $\bigwedge(\Tan^*(\Tan^kQ))$ 
(also denoted by $J_r^*$). Their action on the set of differential forms is given by
\begin{equation*}
J_r^*\omega(X_1,\ldots,X_p) = \omega(J_r(X_1),\ldots,J_r(X_p)) \ ,
\end{equation*}
for $\omega\in\df^p(\Tan^kQ)$ and $X_1,\ldots,X_p \in \vf(\Tan^kQ)$,
and for every $f\in\Cinfty(\Tan^kQ)$ we write $J_r^*(f) = f$.
The endomorphism
$J_r^* \colon \df(\Tan^kQ) \to \df(\Tan^kQ)$, $1\leq r \leq k$, 
is called the {\sl $r$th-vertical operator}, and it is locally given by
\begin{align*}
&J_r^*(f) = f \quad , \quad \mbox{\rm for every } \ f \in \Cinfty(\Tan^kQ) \\
&J_r^*(\d q_i^A) = \begin{cases} 0, & \mbox{if }i < r \\ \frac{i!}{(i-r)!}\, \d q_{i-r}^A, & \mbox{if }i\geq r
 \end{cases} \ .
\end{align*}

\subsubsection{Vertical derivations and differentials. Tulczyjew's derivation}

The {\sl inner contraction} of the vertical endomorphisms $J_r$ with any differential $p$-form
$\omega\in\df^p(\Tan^kQ)$ is the $p$-form $\inn(J_r)\omega$ defined as follows: for every
$X_1,\ldots,X_p\in\vf(\Tan^kQ)$
$$
\inn(J_r)\omega(X_1,\ldots,X_p) = \sum_{i=1}^{p} \omega(X_1,\ldots,J_r(X_i),\ldots,X_p)\ ,
$$
and taking $\inn(J_r)f = 0$, for every $f\in\Cinfty(\Tan^kQ)$, we can state:

\begin{definition}
The map
$$
\begin{array}{rcl}
 \df(\Tan^kQ) & \longrightarrow & \df(\Tan^kQ) 
\\ \omega & \longmapsto & \inn (J_r)\omega
\end{array}
$$
is a derivation of degree $0$ in $\df(\Tan^kQ)$, which is called the
{\rm $r$th-vertical derivation} in $\df(\Tan^kQ)$.
\end{definition}

Its local expression is
$$
\inn(J_r)(\d q_i^A) = \begin{cases} 0, & \mbox{\rm if }i<r \\ \frac{i!}{(i-r)!}\,\d q_{i-r}^A, & \mbox{\rm if }i\geq r \end{cases} \ .
$$

\begin{definition}
The operator $d_{J_r} = [\inn(J_r),\d]$ is a skew-derivation of degree $1$,
which is called the {\rm $r$th-vertical differential}.
\end{definition}

Its local expression is given by
$$
\begin{array}{l} \displaystyle d_{J_r}(f) = \sum_{i=r}^k \frac{i!}{(i-r)!} \derpar{f}{q_i^A} \d q_{i-r}^A
\quad , \quad \mbox{\rm for every $f\in\Cinfty(\Tan^kQ)$} \\
d_{J_r}(\d q^i) = 0 \end{array} \ .
$$

For $1 \leq r,s \leq k$, we have that
 $ d_{J_r}\d = -\d d_{J_r}$.
 

In the set $\oplus_{k\geqslant 0}\df(\Tan^kQ)$,
we can define the following equivalence relation:
for $\omega \in \df(\Tan^kQ)$ and $\lambda \in \df(\Tan^{k'}Q)$,
$$
\omega \sim \lambda \Longleftrightarrow \begin{cases} \omega = (\rho^k_{k'})^*(\lambda), & \mbox{if }k'\leqslant k \\ 
\lambda = (\rho^{k'}_k)^*(\omega), & \mbox{if }k' \geqslant k \end{cases} \ .
$$
Then we consider the quotient set
$\ds \mit\Omega = \bigoplus_{k\geqslant0}\df(\Tan^kQ)/ \sim$,
which is a commutative graded algebra.
In this set we can define the so-called
{\sl Tulczyjew's derivation} \cite{art:Tulczyjew75_1,book:DeLeon_Rodrigues85}, denoted by $d_T$, as follows:
for every $f \in \Cinfty(\Tan^kQ)$ we construct the function $d_Tf \in \Cinfty(\Tan^{k+1}Q)$
given by
$$
(d_Tf)(\tilde{\sigma}^{k+1}(0)) = (d_{\tilde{\sigma}^k(0)}f)(j_{k+1}(\tilde{\sigma}^{k+1}(0)))
$$
where $j_{k+1} \colon \Tan^{k+1}Q \to \Tan(\Tan^kQ)$ is the canonical injection introduced in
the Section \ref{sect:Cap02_LiouvilleVectField}.
From the coordinate expression for $j_{k+1}$, we obtain that
$$
d_Tf\left(q_0^A,\ldots,q_{k+1}^A\right) =
  \sum_{i=0}^{k}q_{i+1}^A \derpar{f}{q_i^A}(q_0^A,\ldots,q_{k}^A) \ .
$$
This map $d_T$ extends to a derivation of degree $0$ in $\mit\Omega$ and,
as $d_T\d = \d d_T$, it is determined by its action on functions
and by the property
$d_T(\d q_i^A) = \d q_{i+1}^A$.

Furthermore,  the maps $\inn(J_s)$, $d_{J_s}$, $\inn(\Delta_s)$ and $\Lie(\Delta_s)$ extend to $\mit\Omega$
in a natural way.

\subsubsection{Higher-order semisprays}

\begin{definition}
A vector field $X\in\vf(\Tan^kQ)$ is a {\rm semispray of type $r$}, $1 \leq r \leq k$,
if for every integral curve $\sigma$ of $X$, we have that, if
 $\gamma=\beta^k \circ \sigma$, then
 $\tilde\gamma^{k-r+1} = \rho^k_{k-r+1}\circ\sigma$
(where $\tilde\gamma^{k-r+1}$
is the canonical lifting of $\gamma$  to $\Tan^{k-r+1}Q$).
$$
\xymatrix{
\ & \ & \Tan^kQ \ar[d]_{\rho^k_{k-r+1}} \ar@/^2.5pc/[ddd]^{\beta^k} \\
\R \ar@/^1.5pc/[urr]^{\sigma} \ar@/_1.5pc/[ddrr]_{\beta^k\circ\sigma}
 \ar[rr]^-{\rho^k_{k-r+1}\circ\sigma} \ar[drr]_{\widetilde{\gamma}^{k-r+1}} 
 & \ & \Tan^{k-r+1}Q \ar[d]_{{\rm Id}} \\
\ & \ & \Tan^{k-r+1}Q \ar[d]_{\beta^{k-r+1}} \\
\ & \ & Q
}
$$
In particular, $X\in\vf(\Tan^kQ)$ is a {\rm semispray of type $1$}
if for every integral curve $\sigma$ of $X$, we have that
 $\gamma=\beta^k \circ \sigma$, then
 $\tilde\gamma^k=\sigma$.
\end{definition}

The local expression of a semispray of type $r$ is
$$
X = q_1^A\derpar{}{q_0^A} + q_2^A\derpar{}{q_1^A} + \ldots + q_{k-r+1}^A\derpar{}{q_{k-r}^A} + 
X_{k-r+1}^A\derpar{}{q_{k-r+1}^A} + \ldots + X_k^A\derpar{}{q_k^A}
$$

\begin{prop}
The following assertions are equivalent:
\begin{enumerate}
\item
A vector field $X\in\vf(\Tan^kQ)$ is a semispray of type $r$.
\item
 $\Tan\rho^k_{k-r} \circ X = j_{k-r+1}$;
 that is, the following diagram commutes
$$
\xymatrix{
\Tan(\Tan^kQ) \ar[drr]^{\Tan\rho^k_{k-r}} \\
\Tan^kQ \ar[u]^X \ar[rr]^-{j_{k-r+1}} & \ & \Tan(\Tan^{k-r}Q) \ .
}
$$
\item
$J_r(X) = \Delta_r$.
\end{enumerate}
\end{prop}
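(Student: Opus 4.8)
The plan is to reduce all three conditions to a single coordinate characterisation, treating $(2)\Leftrightarrow(3)$ as a pointwise algebraic matter and connecting it to the dynamical statement $(1)$ through a canonical identity relating $j_{k-r+1}$ to the derivative of canonical liftings. Throughout I write $X = \sum_{i=0}^k X_i^A \derpar{}{q_i^A}$ in the chart induced by a chart $(U,\varphi)$ of $Q$. First I would dispose of $(2)\Leftrightarrow(3)$. From the coordinate expression of $\Tan\rho^k_{k-r}$ and of $j_{k-r+1}$ in \eqref{eqn:Cap02_LocalCoordCanonicalImmersion}, the condition $\Tan\rho^k_{k-r}\circ X = j_{k-r+1}$ reads $(q_0^A,\ldots,q_{k-r}^A,X_0^A,\ldots,X_{k-r}^A) = (q_0^A,\ldots,q_{k-r}^A,q_1^A,\ldots,q_{k-r+1}^A)$, that is, $X_i^A = q_{i+1}^A$ for $0\le i\le k-r$. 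On the other hand, from the local expression of $J_r$ one gets $J_r(X) = \sum_{i=0}^{k-r}\frac{(r+i)!}{i!}X_i^A\derpar{}{q_{r+i}^A}$, and matching this against $\Delta_r$ gives $\frac{(r+i)!}{i!}X_i^A = \frac{(r+i)!}{i!}q_{i+1}^A$, hence again $X_i^A = q_{i+1}^A$ for $0\le i\le k-r$ (the numerical factors being nonzero). Thus $(2)$ and $(3)$ are equivalent to the same system, which is exactly the local form of a semispray of type $r$ displayed before the statement.

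Next I would prove $(1)\Rightarrow(2)$ intrinsically. The main ingredient is the identity, valid for every curve $\gamma$ in $Q$,
$$ j_{k-r+1}\big(\tilde{\gamma}^{k-r+1}(t)\big) = \frac{d}{dt}\,\tilde{\gamma}^{k-r}(t) \ , $$
together with the observation that the map $j_{k-r+1}\colon \Tan^kQ \to \Tan(\Tan^{k-r}Q)$ depends only on the coordinates $q_0^A,\ldots,q_{k-r+1}^A$ and therefore factors as $j_{k-r+1}\circ\rho^k_{k-r+1}$ through the top injection $\Tan^{k-r+1}Q \to \Tan(\Tan^{k-r}Q)$; the identity itself is checked by unfolding the definition of $j_{k-r+1}$ on $\tilde{\gamma}_t^{k-r}$. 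Now let $\sigma$ be an integral curve of $X$ and $\gamma=\beta^k\circ\sigma$. The semispray condition $\rho^k_{k-r+1}\circ\sigma = \tilde{\gamma}^{k-r+1}$, composed with $\rho^{k-r+1}_{k-r}$, gives $\rho^k_{k-r}\circ\sigma = \tilde{\gamma}^{k-r}$. Differentiating and using $\Tan\rho^k_{k-r}(\dot\sigma(t)) = \frac{d}{dt}(\rho^k_{k-r}\circ\sigma)(t)$ together with the identity above yields $\Tan\rho^k_{k-r}(X(\sigma(t))) = j_{k-r+1}(\sigma(t))$ along the curve. Since an integral curve passes through every point of $\Tan^kQ$, condition $(2)$ holds everywhere.

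For the converse $(2)\Rightarrow(1)$ I would read $(2)$ in coordinates along an arbitrary integral curve $\sigma=(\sigma_0^A,\ldots,\sigma_k^A)$: combining the integral-curve equations $\dot\sigma_i^A = X_i^A(\sigma)$ with $X_i^A = q_{i+1}^A$ for $0\le i\le k-r$ forces $\dot\sigma_i^A = \sigma_{i+1}^A$ for those indices, so that $\sigma_i^A = d^i\gamma^A/dt^i$ for $0\le i\le k-r+1$; this is precisely $\rho^k_{k-r+1}\circ\sigma = \tilde{\gamma}^{k-r+1}$, i.e.\ condition $(1)$.

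I expect the main obstacle to be the passage between the dynamical condition $(1)$, which is quantified over all integral curves and naturally lives one jet-order higher on $\Tan^{k-r+1}Q$, and the pointwise algebraic conditions $(2)$ and $(3)$, which live on $\Tan(\Tan^{k-r}Q)$. The reconciliation rests on the factorisation of $j_{k-r+1}$ through $\rho^k_{k-r+1}$ and on the canonical identity above; verifying that identity carefully, rather than merely reading it off in coordinates, is the one step that deserves genuine attention.
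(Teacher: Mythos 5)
Your proof is correct. A point of context first: the paper gives no proof of this proposition at all — it sits in the review Section 2, quoted from the cited literature (de~Le\'on--Rodrigues, Saunders, Gr\`acia--Pons--Rom\'an), so there is no argument in the paper to compare yours against; your write-up supplies what the paper omits. On its own merits the argument is sound and complete. The two coordinate computations are exactly right: both $\Tan\rho^k_{k-r}\circ X = j_{k-r+1}$ and $J_r(X)=\Delta_r$ reduce, using the local expressions of $\Tan\rho^k_{k-r}$, $j_{k-r+1}$, $J_r$ and $\Delta_r$, to the single system $X_i^A=q_{i+1}^A$ for $0\le i\le k-r$ (the factors $\frac{(r+i)!}{i!}$ being nonzero), which settles $(2)\Leftrightarrow(3)$. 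The intrinsic part is also correct: the identity $j_{k-r+1}\bigl(\tilde{\gamma}^{k-r+1}(t)\bigr)=\frac{d}{dt}\tilde{\gamma}^{k-r}(t)$ follows from unwinding $\tilde{\gamma}^{m}(t)=\tilde{\gamma}^{m}_t(0)$ in the definition \eqref{eqn:Cap02_DefCanonicalImmersion}, the factorization of $j_{k-r+1}$ through $\rho^k_{k-r+1}$ is immediate from \eqref{eqn:Cap02_LocalCoordCanonicalImmersion}, and together with the existence of an integral curve through each point of $\Tan^kQ$ these give $(1)\Rightarrow(2)$; the converse correctly integrates the chain $\dot{\sigma}_i^A=\sigma_{i+1}^A$, $0\le i\le k-r$, to rebuild $\tilde{\gamma}^{k-r+1}$. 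Two cosmetic remarks: you use the same symbol $j_{k-r+1}$ for the injections with source $\Tan^kQ$ and with source $\Tan^{k-r+1}Q$ — a harmless abuse, but worth making explicit, since the paper's notation suppresses the source order; and the closing sentence of your first paragraph, identifying the common coordinate system with ``the local form of a semispray displayed before the statement,'' should not be leaned on as a justification (that displayed expression is itself only asserted in the paper and is essentially the content of the proposition) — fortunately your proof never actually uses it, since $(1)\Leftrightarrow(2)$ is established independently, so there is no circularity.
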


Obviously, every semispray of type $r$ is a semispray of type $s$, for $s\geq r$.

If $X\in\vf(\Tan^kQ)$ is a semispray of type $r$, a curve $\sigma$ in $Q$
is said to be a {\sl path} or {\sl solution} of $X$ if
 $\tilde{\sigma}^k$ is an integral curve of $X$; that is,
$\widetilde{\tilde{\sigma}^k} = X \circ \tilde{\sigma}^k$,
where $\widetilde{\tilde{\sigma}^k}$ denotes the canonical lifting of $\tilde{\sigma}^k$
from $\Tan^kQ$ to $\Tan(\Tan^kQ)$.
Then, in coordinates, $\sigma$ verifies the following system of differential equations of order $k+1$:
\begin{align*}
\frac{d^{k-r+2}\sigma^A}{dt^{k-r+2}} &= X_{k-r+1}^A\left(\sigma,\frac{d\sigma}{dt},\ldots,\frac{d^k\sigma}{dt^k}\right)\\
& \; \vdots \\
\frac{d^{k+1}\sigma^A}{dt^{k+1}} &= X_k^A\left(\sigma,\frac{d\sigma}{dt},\ldots,\frac{d^k\sigma}{dt^k}\right)
\end{align*}

Observe that, taking $k=1$, then $r=1$ and $\rho^1_{1-1+1} = {\rm Id}_{\Tan Q}$,
we recover the definition of the holonomic vector field ({\sc sode} in $\Tan Q$).
So, semisprays of type $1$ in $\Tan^kQ$ are the analogue to the holonomic vector fields in $\Tan Q$; that is,
they are the vector fields whose integral curves are the canonical liftings
to $\Tan^kQ$ of curves on the basis $Q$.
Their local expressions are
$$
X = q_1^A\derpar{}{q_0^A} + q_2^A\derpar{}{q_1^A} + \ldots + q_k^A\derpar{}{q_{k-1}^A} + X_k^A\derpar{}{q_k^A}\ .
$$

\subsection{Lagrangian formalism}


Let $Q$ be a $n$-dimensional differentiable manifold and $\Lag \in \Cinfty(\Tan^kQ)$.
We say that $\Lag$ is a Lagrangian function of order $k$. 

\begin{definition}
The {\rm Lagrangian $1$-form} $\theta_\Lag \in \df^1(\Tan^{2k-1}Q)$,
associated to $\Lag$ is defined as
$$
\theta_\Lag = \sum_{r=1}^k (-1)^{r-1} \frac{1}{r!} d_T^{r-1} d_{J_r}\Lag  \ .
$$
Then, the {\rm Lagrangian $2$-form}, $\omega_\Lag \in \df^2(\Tan^{2k-1}Q)$,
associated to $\Lag$ is
$$
\omega_\Lag = -\d \theta_\Lag = \sum_{r=1}^k (-1)^r \frac{1}{r!} d_T^{r-1}\d d_{J_r}\Lag \ .
$$
\end{definition}

Observe that the Lagrangian $1$-form is a semibasic form of type $k$ in  $\Tan^{2k-1}Q$ .

We assume that 
 $\omega_\Lag$ has constant rank (we refer to this fact by saying that $\Lag$ is a
 {\sl geometrically admissible Lagrangian}).
 
\begin{definition}
The {\rm Lagrangian energy}, $E_\Lag \in \Cinfty(\Tan^{2k-1}Q)$, associated to $\Lag$
is defined as
$$
E_\Lag = \left(\sum_{r=1}^k (-1)^{r-1} \frac{1}{r!} d_T^{r-1}(\Delta_r(\Lag))\right) - (\rho_k^{2k-1})^*\Lag
$$
\end{definition}

It is usual to write $\Lag$ instead of $(\rho_k^{2k-1})^*\Lag$,
and we will do this in the sequel.

The coordinate expressions of these elements are
\bea
\label{eqn:Cap03_LocalCoordLag1Form}
\theta_\Lag &=& \sum_{r=1}^k \sum_{i=0}^{k-r}(-1)^i d_T^i\left(\derpar{L}{q_{r+i}^A}\right) \d q_{r-1}^A \\
\omega_\Lag &=&
\sum_{r=1}^k \sum_{i=0}^{k-r}(-1)^{i+1} d_T^i\,\d\left(\derpar{\Lag}{q_{r+i}^A}\right) \wedge \d q_{r-1}^A \nonumber \\
\label{eqn:Cap03_LocalCoordLagEnergy}
E_\Lag &=& \sum_{r=1}^{k} q_{r}^A \sum_{i=0}^{k-r} (-1)^i d_T^i\left( \derpar{L}{q_{r+i}^A} \right)- \Lag \ .
\eea

\begin{definition}
A Lagrangian function $\Lag \in \Cinfty(\Tan^kQ)$ is said to be {\rm regular}
if $\omega_\Lag$ is a symplectic form. Otherwise $\Lag$ is a {\rm singular} Lagrangian.
\end{definition}

To say that $\Lag$ is a regular Lagrangian is locally equivalent to saying that
the Hessian matrix
$\ds \left(\frac{\partial^2\Lag}{\partial q_k^B \partial q_k^A}\right)$ is regular at every point of $\Tan^kQ$.

\begin{definition}
A {\rm Lagrangian system of order $k$} is a couple $(\Tan^{2k-1}Q,\Lag)$, where $Q$
represents the configuration space and $\Lag \in \Cinfty(\Tan^kQ)$
is the Lagrangian function.
It is said to be a {\rm regular} (resp. {\rm singular}) Lagrangian system if the Lagrangian function $\Lag$ is
regular (resp. singular).
\end{definition}

Thus, in the Lagrangian formalism, $\Tan^{2k-1}Q$ represents the phase space of the system.
The dynamical trajectories of the system are the integral curves of any
vector field $X_\Lag \in \vf(\Tan^{2k-1}Q)$ satisfying that:
\begin{enumerate}
\item
It is a solution to the equation
\begin{equation}\label{eqn:Cap03_IntrinsicLagEq}
\inn(X_\Lag)\omega_\Lag = \d E_\Lag
\end{equation}
\item 
It is a semispray of type $1$ in $\Tan^{2k-1}Q$.
\end{enumerate}
Equation (\ref{eqn:Cap03_IntrinsicLagEq}) is the
{\sl higher-order Lagrangian equation}, and a vector field
$X_\Lag$ solution to (\ref{eqn:Cap03_IntrinsicLagEq})  (if it exists)
is called a {\sl Lagrangian vector field of order $k$}.
If, in addition, $X_\Lag$ satisfies condition 2, then it is called an
{\sl Euler-Lagrange vector field of order $k$}, and its integral curves on the base are
solutions to the {\sl higher-order Euler-Lagrange equations}.

In natural coordinates of $\Tan^{2k-1}Q$, if
$$
X_\Lag = \sum_{i=0}^{2k-1} f_i^A \derpar{}{q_i^A} = 
f_0^A\derpar{}{q_0^A} + f_1^A \derpar{}{q_1^A} + \ldots + f_{2k-1}^A\derpar{}{q_{2k-1}^A} \	 ,
$$
as
$$
\d E_\Lag = \sum_{r=1}^k \sum_{i=0}^{k-r}(-1)^i d_T^i \left(\derpar{\Lag}{q_{r+i}^A} \right)\d q_r^A +
 \sum_{r=1}^k q_r^A \sum_{i=0}^{k-r} (-1)^i\sum_{j=0}^{k} d_T^i\left(\frac{\partial^2\Lag}{\partial q_j^B\partial q_{r+i}^A}
 \d q_j^B \right) - \sum_{r=0}^{k} \derpar{\Lag}{q_r^A} \d q_r^A \ ,
$$
from (\ref{eqn:Cap03_IntrinsicLagEq}) we obtain
\begin{equation}
\label{eqn:Cap03_LocalCoordLagEq}
\begin{array}{l}
\displaystyle \left(f_0^B-q_1^B\right) \frac{\partial^2\Lag}{\partial q_k^B\partial q_k^A} = 0 \\[10pt]
\displaystyle \left(f_{1}^B - q_{2}^B\right)\frac{\partial^2\Lag}{\partial q_k^B\partial q_k^A} - \left(f_0^B-q_{1}^B \right)(\cdots\cdots) = 0 \\
\qquad \qquad \qquad \qquad \vdots \\
\displaystyle \left(f_{2k-2}^B - q_{2k-1}^B\right)\frac{\partial^2\Lag}{\partial q_k^B\partial q_k^A} -
 \sum_{i=0}^{2k-3} \left(f_{i}^B-q_{i+1}^B \right) (\cdots\cdots) = 0 \\
\displaystyle (-1)^k\left(f_{2k-1}^B - d_T\left(q_{2k-1}^B\right)\right) \frac{\partial^2\Lag}{\partial q_k^B\partial q_k^A} +
 \sum_{i=0}^{k} (-1)^id_T^i\left( \derpar{L}{q_i^A} \right) - \sum_{i=0}^{2k-2} \left(f_{i}^B-q_{i+1}^B \right) (\cdots\cdots) = 0 \,
\end{array}
\end{equation}
where the terms in brackets $(\cdots\cdots)$ contain relations involving partial derivatives
of the Lagrangian and applications of $d_T$, which for simplicity are not written.
These are the local expressions of the Lagrangian equations for $X_\Lag$.

Now, if $\sigma \colon \Real \to \Tan^{2k-1}Q$ is an integral curve of $X_\Lag$,
from (\ref{eqn:Cap03_IntrinsicLagEq}) we obtain that $\sigma$ must satisfy the
{\sl Euler-Lagrange equation}
\begin{equation}
\label{eqn:Cap03_IntrinsicLagEqCI}
\inn(\tilde{\sigma})(\omega_\Lag \circ \sigma) = \d E_\Lag \circ \sigma \ ,
\end{equation}
where $\tilde{\sigma}$ denotes the canonical lifting of $\sigma$ to $\Tan(\Tan^{2k-1}Q)$;
and as $X_\Lag$ is a semispray of type $1$, we have that $\sigma$ is the
canonical lifting of a curve $\gamma \colon \Real \to Q$ to $\Tan^{2k-1}Q$;
that is, $\sigma = \tilde{\gamma}^{2k-1}$.

Now, if $\Lag \in \Cinfty(\Tan^kQ)$ is a regular Lagrangian,
then $\omega_\Lag$ is a symplectic form in $\Tan^{2k-1}Q$, and
as a consequence we have that:

\begin{teor}
\label{prop:Cap03_LagVectFieldRegLag}
Let $(\Tan^{2k-1}Q,\Lag)$ be a regular Lagrangian system of order $k$.
\ben
\item
There exists a unique $X_\Lag \in \vf(\Tan^{2k-1}Q)$
which is a solution to the Lagrangian equation (\ref{eqn:Cap03_IntrinsicLagEq})
and is a semispray of type $1$ in $\Tan^{2k-1}Q$.
\item
If $\gamma \colon \Real \to Q$ is an integral curve of $X_\Lag$ then
$\sigma=\tilde{\gamma}^{2k-1}$ is a solution to the 
 {\rm Euler-Lagrange equations}:
 \begin{equation}
\label{eqn:Cap03_EulerLagrangeEquations}
\derpar{\Lag}{q^0} \circ \tilde{\gamma}^{2k-1} - \frac{d}{dt}\derpar{\Lag}{q^1} \circ \tilde{\gamma}^{2k-1} +
 \frac{d^2}{dt^2}\derpar{\Lag}{q^2} \circ\tilde{\gamma}^{2k-1}+ \ldots + 
 (-1)^k\frac{d^k}{dt^k}\derpar{\Lag}{q^k} \circ \tilde{\gamma}^{2k-1} = 0 \ .
\end{equation}
\een
\end{teor}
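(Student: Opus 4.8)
The plan is to deduce both assertions from the regularity hypothesis, which by definition means that $\omega_\Lag$ is a symplectic form on $\Tan^{2k-1}Q$. For the first item, I would start from the dynamical equation (\ref{eqn:Cap03_IntrinsicLagEq}) alone. Since $\omega_\Lag$ is non-degenerate, the morphism $X \mapsto \inn(X)\omega_\Lag$ is a vector bundle isomorphism from $\Tan(\Tan^{2k-1}Q)$ onto $\Tan^*(\Tan^{2k-1}Q)$; hence the $1$-form $\d E_\Lag$ has a unique preimage $X_\Lag$, which is therefore the unique vector field satisfying (\ref{eqn:Cap03_IntrinsicLagEq}). The substantive point is that this unique solution is automatically a semispray of type $1$, so that adjoining the semispray condition neither removes the solution nor spoils uniqueness.

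To prove the semispray property I would work in local coordinates with the triangular system (\ref{eqn:Cap03_LocalCoordLagEq}), using that regularity makes the Hessian $\left(\derpars{\Lag}{q_k^B}{q_k^A}\right)$ invertible at every point. The first equation is $\left(f_0^B - q_1^B\right)\derpars{\Lag}{q_k^B}{q_k^A} = 0$, so invertibility forces $f_0^A = q_1^A$. Substituting this into the second equation kills its bracketed correction term and leaves $\left(f_1^B - q_2^B\right)\derpars{\Lag}{q_k^B}{q_k^A} = 0$, whence $f_1^A = q_2^A$; iterating down the list gives $f_i^A = q_{i+1}^A$ for $i = 0, \ldots, 2k-2$. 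These are exactly the local conditions characterizing a semispray of type $1$ in $\Tan^{2k-1}Q$, so $X_\Lag$ is such a semispray, and the two requirements of item $1$ single it out uniquely.

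For the second item I would use that, $X_\Lag$ being a semispray of type $1$, every integral curve is the canonical lift $\sigma = \tilde{\gamma}^{2k-1}$ of $\gamma = \beta^{2k-1}\circ\sigma$. Inserting the relations $f_i^A = q_{i+1}^A$ into the last equation of (\ref{eqn:Cap03_LocalCoordLagEq}) cancels all the bracketed terms, and along the lift one further has $f_{2k-1}^A = d_T\left(q_{2k-1}^A\right)$, so the surviving Hessian term vanishes too; the equation thereby reduces to $\sum_{i=0}^{k}(-1)^i d_T^i\left(\derpar{\Lag}{q_i^A}\right) = 0$. Since by its very definition $d_T$ restricts to the total time derivative $d/dt$ along canonical lifts, this becomes the higher-order Euler-Lagrange equations (\ref{eqn:Cap03_EulerLagrangeEquations}). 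The main obstacle is the inductive bookkeeping in the triangular system of item $1$: one must check at each stage that the unspecified terms $(\cdots\cdots)$ are multiplied by factors $\left(f_j^B - q_{j+1}^B\right)$ already known to vanish, so that the Hessian can be inverted cleanly to advance the induction.
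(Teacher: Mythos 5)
Your proposal is correct and follows essentially the same route as the paper: the theorem is presented there as a direct consequence of regularity making $\omega_\Lag$ symplectic (which gives existence and uniqueness of the solution to (\ref{eqn:Cap03_IntrinsicLagEq})) together with the triangular coordinate system (\ref{eqn:Cap03_LocalCoordLagEq}) displayed immediately before it, which is exactly the machinery you invoke. Your write-up simply makes explicit what the paper leaves implicit — the iterative substitution forcing $f_i^A = q_{i+1}^A$ for $i \leq 2k-2$, and the reduction of the last equation to the Euler-Lagrange equations along canonical lifts, where $d_T$ becomes $d/dt$.
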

%

If $\Lag \in \Cinfty(\Tan^kQ)$ is a singular Lagrangian, then $\omega_\Lag$ is a presymplectic form,
so the existence and uniqueness of solutions to the Lagrangian equation (\ref{eqn:Cap03_IntrinsicLagEq}) 
is not assured, except in special cases
(for instance, when $\omega_\Lag$ is a {\sl presymplectic horizontal structure} \cite{book:DeLeon_Rodrigues85}).
In general, in the most favourable cases, equation (\ref{eqn:Cap03_IntrinsicLagEq})
has solutions $X_\Lag \in \vf(\Tan^{2k-1}Q)$ in some submanifold $S_f\hookrightarrow\Tan^{2k-1}Q$,
for which these vector fields solution are tangent.
This submanifold is obtained by applying the well-known constraint algorithms
(see, for instance, \cite{art:Gotay_Nester_Hinds78,art:Gotay_Nester79,art:Munoz_Roman92}).
Nevertheless, these vector fields solution are not necessarily semisprays of type $1$ on $S_f$,
but only on the points of another submanifold $M_f\hookrightarrow S_f\hookrightarrow \Tan^{2k-1}Q$
(see \cite{art:Gotay_Nester79,art:Munoz_Roman92}).
On the points of this last submanifold, the integral curves of
$X_\Lag \in \vf(\Tan^{2k-1}Q)$ are solutions to the Euler-Lagrange equations
(\ref{eqn:Cap03_EulerLagrangeEquations}).

A detailed study of higher-order singular Lagrangian systems
can be found in \cite{art:Gracia_Pons_Roman91,art:Gracia_Pons_Roman92}.

\subsection{Hamiltonian formalism}
\label{subsection:hamform}

\begin{definition}
Let $(\Tan^{2k-1}Q,\Lag)$ be a Lagrangian system.
The {\rm Legendre-Ostrogradsky map} (or {\rm generalized Legendre map\/}) associated to $\Lag$
is the map $\Leg \colon \Tan^{2k-1}Q \to \Tan^*(\Tan^{k-1}Q)$ defined as follows:
 for every $u \in \Tan(\Tan^{2k-1}Q)$,
 $$
\theta_\Lag(u) = \left\langle \Tan \rho_{k-1}^{2k-1}(u), \Leg(\tau_{\Tan^{2k-1}Q}(u)) \right\rangle
$$
\end{definition}

This map verifies that $\pi_{\Tan^{k-1}Q} \circ \Leg = \rho^{2k-1}_{k-1}$,
where $\pi_{\Tan^{k-1}Q} \colon \Tan^*(\Tan^{k-1}Q) \to \Tan^{k-1}Q$ is the natural projection.
Furthermore, if $\theta_{k-1}\in\df^1(\Tan^*(\Tan^{k-1}Q)$ and
$\omega_{k-1}=-\d\theta_{k-1}\in\df^2(\Tan^*(\Tan^{k-1}Q))$
are the canonical $1$ and $2$ forms of the cotangent bundle $\Tan^*(\Tan^{k-1}Q)$, we have that
$$
\Leg^*\theta_{k-1} = \theta_\Lag
\quad , \quad
\Leg^*\omega_{k-1} =  \omega_\Lag \ .
$$

Given a local natural chart in $\Tan^{2k-1}Q$, we can define the following local functions
$$
\hat p^{r-1}_A = \sum_{i=0}^{k-r}(-1)^i d_T^i\left(\derpar{L}{q_{r+i}^A}\right) \ .
$$
Observe that
\begin{align*}
\hat p^{r-1}_A - \derpar{\Lag}{q_r^A} &= \sum_{i=0}^{k-r}(-1)^i d_T^i\left(\derpar{\Lag}{q_{r+i}^A}\right) - \derpar{\Lag}{q_r^A}
= \sum_{i=1}^{k-r} (-1)^i d_T^i\left(\derpar{\Lag}{q_{r+i}^A}\right) \\
&= \sum_{i=0}^{k-r-1} (-1)^{i+1} d_T^{i+1} \left( \derpar{\Lag}{q_{r+i+1}^A}\right)
= -d_T\left(\sum_{i=0}^{k-(r+1)} (-1)^{i}d_T^i \left( \derpar{\Lag}{q_{(r+1)+i}^A}\right) \right) = -d_T(\hat p^r_A)
\end{align*}
and hence
\begin{equation}
\label{eqn:Cap04_MomentumCoordRelation}
\hat p^{r-1}_A = \derpar{\Lag}{q_r^A} - d_T(\hat p^r_A) \quad , \quad 1 \leq r \leq k-1 \ .
\end{equation}
Thus, bearing in mind the local expression (\ref{eqn:Cap03_LocalCoordLag1Form})
of the form $\theta_\Lag$, we can write
$\theta_\Lag = \sum_{r=1}^k \hat p^{r-1}_A\d q_{r-1}^A$, and
we obtain that the expression in natural coordinates of the map $\Leg$ is
$$
\Leg\left(q_0^A,q_1^A,\ldots,q_{2k-1}^A\right) = \left(q_0^A,q_1^A,\ldots,q_{k-1}^A,p^0_A,p^1_A,\ldots,p^{k-1}_A\right) \ , \
\mbox{\rm with $p^i_A\circ\Leg=\hat p^i_A$} \ .
$$

$\Lag$ is a regular Lagrangian if, and only if, 
$\Leg \colon \Tan^{2k-1}Q \to \Tan^*(\Tan^{k-1}Q)$ is a local diffeomorphism.

As a consequence of this, we have that, 
if $\Lag$ is a regular Lagrangian, then the set
$(q_i^A,\hat p^i_A)$, $0\leq i\leq k-1$, is a set of local coordinates in $\Tan^{2k-1}Q$,
and $(\hat p^i_A)$ are called the {\sl Jacobi-Ostrogradsky momentum coordinates}.

Observe that the relation (\ref{eqn:Cap04_MomentumCoordRelation})
means that we can recover all the Jacobi-Ostrogadsky momentum coordinates from the set $(\hat p^{k-1}_A)$.

\begin{definition}
$\Lag \in \Cinfty(\Tan^{k}Q)$ is said to be a {\rm hyperregular Lagrangian} of order $k$
if  $\Leg$ is a global diffeomorphism.
Then, $(\Tan^{2k-1}Q,\Lag)$ is a {\rm hyperregular Lagrangian system} of order $k$.
\end{definition}

As $\pi_{\Tan^{k-1}Q} \circ \Leg = \rho^{2k-1}_{k-1}$, 
this condition is equivalent to demanding that the restriction of
$\rho^{2k-1}_{k-1} \colon \Tan^{2k-1}Q \to \Tan^{k-1}Q$ to every fibre be one-to-one.

In order to explain the construction of the canonical Hamiltonian formalism of a Lagrangian higher-order system,
we first consider the case of hyperregular systems (the regular case is the same, but restricting
on the suitable open submanifolds where $\Leg$ is a local diffeomorphism).

So, $(\Tan^{2k-1}Q,\Lag)$ being a hyperregular Lagrangian system,
since $\Leg$ is a diffeomorphism, there exists a unique function
$h \in \Cinfty(\Tan^*(\Tan^{k-1}Q))$ such that $\Leg^*h = E_\Lag$,
which is called the {\sl Hamiltonian function} associated to this system.
Then the triad $(\Tan^*(\Tan^{k-1}Q),\omega_{k-1},h)$ is called the
{\sl canonical Hamiltonian system} associated to the hyperregular Lagrangian
system $(\Tan^{2k-1}Q,\Lag)$.
Thus, in the Hamiltonian formalism, $\Tan^*(\Tan^{k-1}Q)$ represents the phase space of the system.

The dynamical trajectories of the system are the integral curves of 
a vector field $X_h \in \vf(\Tan^*(\Tan^{k-1}Q))$ which is a solution to the {\sl Hamilton equation}
\begin{equation}
\label{eqn:Cap04_IntrinsicHamEq}
\inn(X_h)\omega_{k-1} = \d h \ .
\end{equation}
As $\omega_{k-1}$ is symplectic, there is a unique vector field $X_h$ solution to this equation,
and it is called the {\sl Hamiltonian vector field}.

In natural coordinates of $\Tan^*(\Tan^{k-1}Q)$, $(q_i^A,p^i_A)$ (with $0\leq i\leq k-1$; $1\leq A\leq n$), 
taking
$\ds X_h = f_i^A \derpar{}{q_i^A} + g^i_A \derpar{}{p^i_A}$,
as $\ds \d h = \derpar{h}{q_i^A}\d q_i^A + \derpar{h}{p^i_A}\d p^i_A$, and
$\omega_{k-1} =\d q_i^A \wedge\d p^i_A$, from (\ref{eqn:Cap04_IntrinsicHamEq}) we obtain that
$$
f_i^A  = \derpar{h}{p^i_A} \quad , \quad g^i_A = -\derpar{h}{q_i^A} \ .
$$

Now, if $\sigma \colon \Real \to\Tan^*(\Tan^{k-1}Q)$ is an integral curve of $X_h$,
we have that $\sigma$ must satisfy the {\sl Hamiltonian equation}
$$
\inn(\tilde{\sigma})(\omega_{k-1} \circ \sigma) = \d h \circ \sigma \ ,
$$
and, if $\sigma(t)=(q_i^A(t),p^i_A(t))$ in coordinates, it gives the classical
expression of the Hamilton equations:
$$
\frac{dq_i^A}{dt} = \derpar{h}{p^i_A} \circ \sigma \quad , \quad \frac{dp^i_A}{dt} = -\derpar{h}{q_i^A} \circ \sigma \ .
$$

For the case of singular higher-order Lagrangian systems,
in general there is no way to associate a canonical Hamiltonian formalism,
unless some minimal regularity condition are imposed \cite{art:Gracia_Pons_Roman91}.
In particular:

\begin{definition}
A Lagrangian $\Lag \in \Cinfty(\Tan^kQ)$ is said to be an
{\rm almost-regular Lagrangian function} of order $k$ if:
\begin{enumerate}
\item 
$\Leg(\Tan^{2k-1}Q) = P_o$ is a closed submanifold of $\Tan^*(\Tan^{k-1}Q)$.

(We denote the natural embedding by $j_{P_o} \colon P_o \hookrightarrow \Tan^*(\Tan^{k-1}Q)$).
\item
$\Leg$ is a surjective submersion on its image.
\item
For every $p \in \Tan^{2k-1}Q$, the fibers $\Leg^{-1}(\Leg(p))$
are connected submanifolds of $\Tan^{2k-1}Q$.
\end{enumerate}
Then $(\Tan^{2k-1}Q,\Lag)$  is an {\rm almost-regular Lagrangian system} of order $k$.
\end{definition}

Denoting the map defined by $\Leg = j_{P_o} \circ \Leg_o$
by $\Leg_o \colon \Tan^{2k-1}Q \to P_o$, we have that
the Lagrangian energy $E_\Lag$ is a $\Leg_o$-projectable function, and then
there is a unique function $h_o\in \Cinfty(P_o)$ such that $\Leg_o^*h_o = E_L$
(see \cite{art:Gracia_Pons_Roman91}). 

This $h_o$ is the {\sl canonical Hamiltonian function} of the almost-regular Lagrangian system
and, taking $\omega_o = j_{P_o}^*\omega_{k-1}$, the triad $(P_o,\omega_o,h_o)$
is the canonical Hamiltonian system associated to the almost regular Lagrangian system
$(\Tan^{2k-1}Q,\Lag)$.
For this system we have the Hamilton equation
\beq
\label{sub0}
\inn(X_{h_o})\omega_o = \d h_o \quad , \quad X_{h_o}\in\vf(P_o) \ .
\eeq

As $\omega_o$ is, in general, a presymplectic form, in the best cases,
this equation has some vector field $X_{h_o}$ solution only on the points of some submanifold
$P_f\hookrightarrow P_o\hookrightarrow \Tan^*(\Tan^{k-1}Q)$, for which $X_{h_o}$ is tangent to $P_f$.
This vector field is not unique, in general.
It can be proved that $P_f = \Leg(S_f)$,
where $S_f\hookrightarrow\Tan^{2k-1}Q$ is the submanifold where
there are vector field solutions to the Lagrangian equation (\ref{eqn:Cap03_IntrinsicLagEq})
which are tangent to $S_f$ (see the above section).
Furthermore, as $\Leg_o$ is a submersion, for every vector field $X_{h_o} \in \vf(\Tan^*(\Tan^{k-1}Q))$
which is a solution to the Hamilton equation (\ref{sub0}) on $P_f$, and tangent to $P_f$,
there exists some semispray of type $1$, $X_\Lag \in \vf(\Tan^{2k-1}Q)$, 
which is a solution of the Euler-Lagrange equation on $S_f$, and tangent to $S_f$,
such that ${\Leg_o}_*X_\Lag= X_{h_o}$.
This $\Leg_o$-projectable semispray of type $1$ could be defined only on the
points of another submanifold $M_f\hookrightarrow S_f$.
(See \cite{art:Gracia_Pons_Roman91,art:Gracia_Pons_Roman92} for a detailed exposition of all these topics).

\section{Skinner-Rusk unified formalism}
\label{SkinnerRusk}

\subsection{Unified phase space. Geometric and dynamical structures}

Let $\Lag \in \Cinfty(\Tan^{k}Q)$ be the Lagrangian function of order $k$ of the system.
First we construct the {\sl unified phase space}
$$
\W = \Tan^{2k-1}Q \times_{\Tan^{k-1}Q} \Tan^*(\Tan^{k-1}Q) 
$$
(the fiber product of the above bundles), which
is endowed with the canonical projections
$$
\pr_1 \colon \Tan^{2k-1}Q \times_{\Tan^{k-1}Q} \Tan^*(\Tan^{k-1}Q) \to \Tan^{2k-1}Q
\quad ; \quad
\pr_2 \colon \Tan^{2k-1}Q \times_{\Tan^{k-1}Q} \Tan^*(\Tan^{k-1}Q) \to \Tan^*(\Tan^{k-1}Q) \ ,
$$
and also with the canonical projections onto $\Tan^{k-1}Q$.
So we have the diagram:
$$
\xymatrix{
\ & \W \ar[dl]_-{\pr_1} \ar[dr]^-{\pr_2} & \ \\
\Tan^{2k-1}Q \ar[dr]_-{\rho^{2k-1}_{k-1}} \ar@/_1.3pc/[ddr]_-{\beta^{2k-1}} & \ &
 \Tan^*(\Tan^{k-1}Q) \ar[dl]^-{\pi_{\Tan^{k-1}Q}} \\
\ & \Tan^{k-1}Q \ar[d]_-{\beta^{k-1}} & \ \\
\ & Q & \
}
$$

If $(U,q_0^A)$ is a local chart of coordinates in $Q$, denoting by 
$((\beta^{2k-1})^{-1}(U);q_0^A,q_1^A,\ldots,q_{2k-1}^A)$ and
$((\pi_{\Tan^{k-1}Q}\circ \beta^{k-1})^{-1}(U);q_0^A,q_1^A,\ldots,q_{k-1}^A,p^0_A,p^0_A,\ldots,p^{k-1}_A)$
the induced charts in $\Tan^{2k-1}Q$ and in $\Tan^*(\Tan^{k-1}Q)$, respectively,
we have that $(q_0^A,\ldots,q_{k-1}^A;q_{k}^A,\ldots,q_{2k-1}^A;p^0_A,\ldots,p^{k-1}_A)$
are the natural coordinates in the suitable open domain $W\subset\W$.
Note that $\dim(\W) = 3kn$.

The bundle $\W$ is endowed with some canonical geometric structures.
First, let $\omega_{k-1}\in\df^2(\Tan^*(\Tan^{k-1}Q))$ be the canonical symplectic form
 of $\Tan^*(\Tan^{k-1}Q)$. Then we define
$$
\Omega = \pr_2^*\omega_{k-1} \in \df^2(\W) \ ,
$$
which is a presymplectic form in $\W$, whose local expression is
\begin{equation}\label{eqn:Cap06_LocalCoordOmega}
\Omega = \pr_2^*\omega_{k-1} = 
\pr_2^*\left(\d q_i^A \wedge \d p^i_A\right) = \d q_i^A \wedge \d p^i_A \ .
\end{equation}
Observe that 
\beq
\label{eqn:Cap06_LocalBasisKerOmega}
\ker\,\Omega = \left\langle \derpar{}{q^k},\ldots,\derpar{}{q^{2k-1}} \right\rangle= \vf^{V(\pr_2)}(\W) \ .
\eeq

The second relevant canonical structure in $\W$ is the following:

\begin{definition}
Let $p \in \Tan^{2k-1}Q$, its projection $q = \rho^{2k-1}_{k-1}(p)$ to $\Tan^{k-1}Q$, and a
covector $\alpha_q \in \Tan_q^*(\Tan^{k-1}Q)$. 
The {\rm coupling function} $\C \in \Cinfty(\W)$ is defined as follows:
\begin{equation}\label{eqn:Cap06_DefCouplingFunc}
\begin{array}{rcl} \C \colon \Tan^{2k-1}Q \times_{\Tan^{k-1}Q} \Tan^*(\Tan^{k-1}Q) 
& \longrightarrow & \R \\ (p,\alpha_q) & \longmapsto & \langle \alpha_q \mid j_{k}(p)_q \rangle \end{array} \ ,
\end{equation}
where $j_{k} \colon \Tan^{2k-1}Q \to \Tan(\Tan^{k-1}Q)$
is the canonical injection introduced in (\ref{eqn:Cap02_DefCanonicalImmersion}),
 $j_{k}(p)_q$ is the corresponding tangent vector to $\Tan^{k-1}Q$ in $q$,
 and $\langle \alpha_q \mid j_{k}(p)_q \rangle \equiv \alpha_q(j_{k}(p)_q)$
 denotes the canonical pairing between vectors of  $\Tan_q(\Tan^{k-1}Q)$ and
 covectors of $\Tan^*_q(\Tan^{k-1}Q)$.
\end{definition}

Note that, in this case, $j_k \colon \Tan^{2k-1}Q \to \Tan(\Tan^{k-1}Q)$ is a diffeomorphism.

In local coordinates, if $p = (q_0^A,\ldots,q_{k-1}^A,q_k^A,\ldots,q_{2k-1}^A)$,
then $q = \rho^{2k-1}_{k-1}(p) = (q_0^A,\ldots,q_{k-1}^A)$,
and bearing in mind the local expression (\ref{eqn:Cap02_LocalCoordCanonicalImmersion}) of $j_{k}$,
we have $j_{k}(p) = (q_0^A,\ldots,q_{k-1}^A,q_1^A,\ldots,q_k^A)$. Therefore if
$\ds j_{k}(p)_q = q_{i+1}^A\restric{\derpar{}{q_i^A}}{q} \in \Tan_q(\Tan^{k-1}Q)$,
and if 
$\ds \alpha_q = p^i_A \restric{\d q_i^A}{q}$
we obtain the following local expression for the coupling function $\C$
\begin{equation}\label{eqn:Cap06_LocalCoordCouplingFunc}
\C(p,\alpha_q) = \langle \alpha_q \mid j_{k}(p)_q \rangle = 
\left\langle p^i_A \restric{dq_i^A}{q} \bigg| \, q_{i+1}^A
\restric{\derpar{}{q_i^A}}{q} \right\rangle = \restric{p^i_Aq_{i+1}^A}{q} \ .
\end{equation}
Observe that, if $k=1$, the map $j_1 \colon \Tan Q \to \Tan Q$ is the identity on $\Tan Q$,
and we recover the standard canonical coupling between vectors in $\Tan_pQ$ and covectors in $\Tan^*_pQ$.

Using the coupling function, given a Lagrangian function
$\Lag\in\Cinfty(\Tan^kQ)$, we can define the
{\sl Hamiltonian function} $H \in \Cinfty(\W)$ as
\begin{equation}\label{eqn:Cap06_DefUnifiedHamiltFunc}
H = \C - (\rho^{2k-1}_k\circ\pr_1)^*\Lag \ ,
\end{equation}
whose coordinate expression is
\begin{equation}
\label{eqn:Cap06_LocalCoordUnifiedHamiltFunc}
H = p^i_Aq_{i+1}^A - \Lag(q_0^A,\ldots,q_k^A) \ .
\end{equation}

Now, $(\W,\Omega,H)$ is a presymplectic Hamiltonian system.

Finally, in order to give a complete description of the dynamics of higher-order Lagrangian systems,
we need to introduce the following concept:

\begin{definition}
A vector field $X\in\vf(\W)$ is said to be a {\rm semispray of type $r$} in $\W$ if,
for every integral curve $\sigma \colon I \subset\R \to \W$ of $X$,
the curve $\sigma_1 = \pr_1 \circ \sigma \colon I \to \Tan^{2k-1}Q$ satisfies that, if $\gamma = \beta^{2k-1}\circ\sigma_1$,
$\tilde{\gamma}^{2k-r} = \rho^{2k-1}_{2k-r}\circ\sigma_1$.

In particular, $X\in\vf(\W)$ is a {\rm semispray of type $1$}
if $\tilde{\gamma}^{2k-1} = \sigma_1$.
\end{definition}
The local expression of a semispray of type $r$ in $\W$ is
$$
X = \sum_{i=0}^{2k-1-r}q_{i+1}^A\derpar{}{q_i^A} + \sum_{i=2k-r}^{2k-1}X_i^A\derpar{}{q_i^A}
+\sum_{i=0}^{k-1}G^i_A\derpar{}{p^i_A} \ ,
$$
and, in particular, for a semispray of type $1$ in $\W$ we have
$$
X = \sum_{i=0}^{2k-2}q_{i+1}^A\derpar{}{q_i^A} + X_{2k-1}^A\derpar{}{q_{2k-1}^A}
+\sum_{i=0}^{k-1}G^i_A\derpar{}{p^i_A} \ .
$$

\subsection{Dynamical vector fields}

\subsubsection{Dynamics in $\W = \Tan^{2k-1}Q \times_{\Tan^{k-1}Q} \Tan^*(\Tan^{k-1}Q)$}
\label{section:DynamicsW}

As we know, the dynamical equation of the presymplectic Hamiltonian system $(\W,\Omega,H)$
is geometrically written as
\begin{equation}\label{eqn:Cap06_EqDinImp}
\inn(X)\Omega = \d H \quad ; \quad X \in \vf(\W) \ .
\end{equation}
Then, according to \cite{art:Gotay_Nester_Hinds78} we have:

\begin{prop}
\label{prop:Cap06_ExistSolEqDin}
Given the presymplectic Hamiltonian system $(\W, \Omega,H)$,
a solution $X \in \vf(\W)$ to equation (\ref{eqn:Cap06_EqDinImp})
exists only on the points of the submanifold $\W_c \hookrightarrow \W$ defined by
\beq
\label{W0}
\W_c = \left\{ p \in \W \colon \xi(p) \equiv (\inn(Y)\d H)(p) = 0 \ , \ \forall \, Y \in \ker\,\Omega \right\} \ .
\eeq
\end{prop}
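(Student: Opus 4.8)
The plan is to apply the standard presymplectic constraint analysis of Gotay--Nester--Hinds to the equation $\inn(X)\Omega=\d H$. Since $\Omega$ is only presymplectic, not every point of $\W$ admits a solution $X$: a necessary condition is that $\d H$ annihilates the characteristic distribution $\ker\,\Omega$. The submanifold $\W_c$ is precisely the zero locus of these compatibility constraints, so the heart of the argument is to show that the obstruction to solvability at a point $p$ is exactly the vanishing of $\xi=\inn(Y)\d H$ for all $Y\in\ker\,\Omega$.

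First I would recall the pointwise linear-algebra picture. At each $p\in\W$, the map $v\mapsto \inn(v)\Omega_p$ sends $\Tan_p\W$ into $\Tan_p^*\W$, with image equal to the annihilator $(\ker\,\Omega_p)^\circ$, by elementary symplectic linear algebra (the image of a skew form equals the annihilator of its kernel). Hence equation (\ref{eqn:Cap06_EqDinImp}) admits a solution $X_p$ at $p$ if and only if $\d H(p)\in(\ker\,\Omega_p)^\circ$, which is exactly the condition that $\langle \d H(p),Y_p\rangle = (\inn(Y)\d H)(p)=0$ for every $Y\in\ker\,\Omega$. This characterizes $\W_c$ as the set of points where a solution exists.

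Next I would make this explicit in coordinates to confirm the definition is consistent and that $\W_c$ is genuinely a submanifold (and in general a proper one). Using (\ref{eqn:Cap06_LocalBasisKerOmega}), a basis of $\ker\,\Omega$ is $\partial/\partial q^k,\ldots,\partial/\partial q^{2k-1}$, so the constraint functions are $\xi_j=\partial H/\partial q_j^A$ for $k\leq j\leq 2k-1$. From the expression (\ref{eqn:Cap06_LocalCoordUnifiedHamiltFunc}), $H=p^i_Aq_{i+1}^A-\Lag$, these constraints read $p^{k-1}_A=\partial\Lag/\partial q_k^A$ for the top block and $p^{i}_A$-type relations for the lower ones; since $\Lag$ depends only on $q_0^A,\ldots,q_k^A$, these are nontrivial functional relations among the coordinates, cutting out $\W_c$. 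I would verify that these functions are functionally independent (so $\W_c$ is an embedded submanifold), which follows from inspecting how each $\xi_j$ introduces a distinct momentum or velocity coordinate.

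Finally I would assemble these observations into the statement: a solution $X$ to (\ref{eqn:Cap06_EqDinImp}) exists at $p$ precisely when $p\in\W_c$, and on $\W_c$ local solutions exist by the surjectivity of $v\mapsto\inn(v)\Omega_p$ onto $(\ker\,\Omega_p)^\circ$. The main obstacle I anticipate is purely a matter of rigor rather than difficulty: one must be careful that the pointwise existence statement is what Proposition~\ref{prop:Cap06_ExistSolEqDin} asserts, and that patching pointwise solutions into a smooth vector field defined on $\W_c$ is legitimate. The latter is standard (one chooses a complement to $\ker\,\Omega$ and inverts $\Omega$ there, obtaining a smooth particular solution, to which any section of $\ker\,\Omega$ may be added), so I would invoke \cite{art:Gotay_Nester_Hinds78} for the general framework and present the coordinate computation as the concrete verification.
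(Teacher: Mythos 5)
Your core argument is correct, and it is essentially the argument the paper itself relies on: the paper offers no proof of this proposition at all, deferring entirely to the citation \cite{art:Gotay_Nester_Hinds78}, and your pointwise linear-algebra step --- the image of $v \mapsto \inn(v)\Omega_p$ equals the annihilator of $\ker\,\Omega_p$, so $\d H(p)$ lies in that image if and only if it annihilates the kernel --- is precisely the content of that reference. Your remark about patching pointwise solutions into a smooth vector field (choose a complement to $\ker\,\Omega$, invert there, add an arbitrary section of the kernel) is also sound, and goes slightly beyond what the paper makes explicit.

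However, your coordinate verification contains a slip worth correcting, because it conflates $\W_c$ with the strictly smaller submanifold $\W_o$ of Proposition \ref{prop:Cap06_W0GrafFL}. Since $H = p^i_A q_{i+1}^A - \Lag(q_0^A,\ldots,q_k^A)$ with $0 \leq i \leq k-1$ (equation (\ref{eqn:Cap06_LocalCoordUnifiedHamiltFunc})), the functions $\xi_j = \derpar{H}{q_j^A}$ vanish \emph{identically} for $k+1 \leq j \leq 2k-1$: the coupling term involves only $q_1^A,\ldots,q_k^A$, and $\Lag$ only $q_0^A,\ldots,q_k^A$. Hence the only nontrivial constraints cutting out $\W_c$ are the $n$ functions $p^{k-1}_A - \derpar{\Lag}{q_k^A}$, which are independent (their differentials contain $\d p^{k-1}_A$ with coefficient $1$), so $\W_c$ is an embedded submanifold of codimension $n$. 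There are no ``$p^i_A$-type relations for the lower ones'' at this stage; those relations (\ref{eqn:Cap04_MomentumCoordRelation}) are what single out the graph of the Legendre--Ostrogradsky map $\W_o$ inside $\W_c$, and they do \emph{not} follow from $\inn(Y)\d H = 0$. Indeed, for $k \geq 2$ one has $\dim \W_c = 3kn - n > 2kn = \dim \W_o$, and the fact that the compatibility condition only recovers the highest-order momenta is one of the differences between the first-order and higher-order formalisms that the paper emphasizes in its conclusions. Your proof of the proposition as stated is unaffected, since the abstract argument never uses the coordinate form, but keeping $\W_c$ and $\W_o$ distinct matters for everything that follows.
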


We have the following result:

\begin{prop}
\label{prop:Cap06_W0GrafFL}
The submanifold $\W_c \hookrightarrow \W$ contains a submanifold
$\W_o \hookrightarrow \W_c$ which is the graph of the 
Legendre-Ostrogradsky map defined by $\Lag$; that is, $\W_o = {\rm graph}\,\Leg$.
\end{prop}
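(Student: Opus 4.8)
The plan is to exhibit ${\rm graph}\,\Leg$ explicitly as a subset of $\W$ that satisfies the defining equation of $\W_c$ in (\ref{W0}). First I would verify that $\W_o := {\rm graph}\,\Leg$ actually sits inside $\W$: since $\pi_{\Tan^{k-1}Q}\circ\Leg = \rho^{2k-1}_{k-1}$, for every $p\in\Tan^{2k-1}Q$ the two components of $(p,\Leg(p))$ project to the same point of $\Tan^{k-1}Q$, so the pair belongs to the fibre product $\W$. The map $\Phi\colon p\mapsto(p,\Leg(p))$ is then a smooth embedding of $\Tan^{2k-1}Q$ into $\W$ satisfying $\pr_1\circ\Phi = {\rm Id}$, so its image $\W_o$ is an embedded submanifold of $\W$ of dimension $2kn$.

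The heart of the argument is the local computation of the constraint function $\xi$. Using the basis of $\ker\Omega$ displayed in (\ref{eqn:Cap06_LocalBasisKerOmega}), namely $\{\partial/\partial q_j^A : k\leq j\leq 2k-1,\ 1\leq A\leq n\}$, together with the coordinate form (\ref{eqn:Cap06_LocalCoordUnifiedHamiltFunc}) of $H$, I would compute $\d H$ and contract it with each generator. The decisive observation is that $\d H$ contains $\d q_j^A$ only for $0\leq j\leq k$: the coupling term $p^i_A\,\d q_{i+1}^A$ contributes $\d q_j^A$ only up to $j=k$ (through $i=k-1$), while $\Lag$ depends solely on $q_0^A,\ldots,q_k^A$, so $\partial\Lag/\partial q_j^A = 0$ for $j>k$. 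Hence $\inn(\partial/\partial q_j^A)\d H = 0$ identically for $k<j\leq 2k-1$, and the only surviving family of constraints is
$$
\inn\left(\derpar{}{q_k^A}\right)\d H = p^{k-1}_A - \derpar{\Lag}{q_k^A} \ .
$$
Consequently $\W_c$ is locally the codimension-$n$ submanifold cut out by the $n$ equations $p^{k-1}_A = \partial\Lag/\partial q_k^A$.

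It remains to identify this constraint with the top-order piece of the Legendre-Ostrogradsky map. By the definition of the functions $\hat p^{r-1}_A$, the case $r=k$ makes the sum collapse to its single $i=0$ term, giving $\hat p^{k-1}_A = \partial\Lag/\partial q_k^A$. On $\W_o = {\rm graph}\,\Leg$ we have $p^i_A = \hat p^i_A$ for all $0\leq i\leq k-1$ by the coordinate expression of $\Leg$; in particular $p^{k-1}_A = \partial\Lag/\partial q_k^A$ holds there, so the constraint $\xi$ vanishes identically on $\W_o$. This proves $\W_o\subseteq\W_c$, and since $\W_o$ is an embedded submanifold of $\W$ contained in $\W_c$, it is an embedded submanifold of $\W_c$, as claimed.

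I expect the only delicate point to be the bookkeeping of index ranges in the contraction: making sure the coupling term feeds $\d q_k^A$ but no $\d q_j^A$ with $j>k$, and that $\Lag\in\Cinfty(\Tan^kQ)$ kills the derivatives $\partial\Lag/\partial q_j^A$ for $j>k$. Conceptually there is no real obstacle; once the single surviving constraint is isolated, its coincidence with the highest Jacobi-Ostrogradsky momentum $\hat p^{k-1}_A$ is immediate. Note that the argument only locates $\W_o$ inside $\W_c$ and does not assert equality --- indeed the dimension count $\dim\W_c = (3k-1)n \geq 2kn = \dim\W_o$ shows $\W_c$ is in general strictly larger.
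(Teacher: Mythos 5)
Your proof is correct and takes essentially the same approach as the paper's: both compute $\d H$ in coordinates, contract it with the basis $\partial/\partial q_j^A$ ($k\leq j\leq 2k-1$) of $\ker\Omega$, observe that only the contraction with $\partial/\partial q_k^A$ survives, and identify the resulting constraint $p^{k-1}_A-\partial\Lag/\partial q_k^A=0$ with the top Jacobi-Ostrogradsky momentum $\hat p^{k-1}_A$. The only differences are presentational — you run the inclusion in the cleaner direction (checking that ${\rm graph}\,\Leg$ satisfies the constraint cutting out $\W_c$, rather than adjoining the remaining momentum relations (\ref{eqn:Cap04_MomentumCoordRelation}) to carve $\W_o$ out of $\W_c$), and you add the minor but welcome verifications that the graph sits in the fibre product, is embedded, and has codimension making the inclusion strict for $k>1$.
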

\begin{proof}
As $\W_c$ is defined by (\ref{W0}), it suffices to prove that
the constraints defining $\W_c$ give rise to those defining the graph of the 
Legendre-Ostrogradsky map associated to $\Lag$.
We make this calculation in coordinates.
Taking the local expression (\ref{eqn:Cap06_LocalCoordUnifiedHamiltFunc})
of the Hamiltonian function $H \in \Cinfty(\W)$, we have 
$$
\d H = \sum_{i=0}^{k-1}(q_{i+1}^A\d p^i_A + p^i_A\d q_{i+1}^A) - \sum_{i=0}^{k} \derpar{\Lag}{q_i^A}\d q_i^A \ ,
$$
and using the local basis of $\ker\,\Omega$ given in (\ref{eqn:Cap06_LocalBasisKerOmega}),
we obtain that the equations defining the submanifold $\W_c$ are
$$
\inn(Y)\d H = 0 \Longleftrightarrow p^{k-1}_A - \derpar{\Lag}{q_k^A} = 0 \ .
$$
Observe that these expressions relate the momentum coordinates
$p^{k-1}_A$ with the Jacobi-Ostrogadsky functions
$\ds \hat p^{k-1}_A= \partial \Lag / \partial q_k^A$, and so we obtain
the last group of equations of the Legendre-Ostrogradsky map.
Furthermore, in Section \ref{subsection:hamform} we have seen that the other Jacobi-Ostrogradsky functions
$\hat p^{r-1}_A$ ($1\leq r\leq k-1$) satisfy the relations (\ref{eqn:Cap04_MomentumCoordRelation}).
Thus we can consider that $\W_c$ contains a submanifold $\W_o$ which
can be identified with the graph of a map
$$
\begin{array}{rcl} F \colon \Tan^{2k-1}Q & \longrightarrow & \Tan^*(\Tan^{k-1}Q) \\ 
(q_i^A) & \longmapsto & (q_0^A,\ldots,q_{k-1}^A,p^0_A,\ldots,p^{k-1}_A) 
\end{array}
$$
which we identify with the Legendre-Ostrogradsky map
by making the identification $p^{r-1}_A=\hat p^{r-1}_A$.
\end{proof}

\textbf{Remark}:
The submanifold $\W_o$ can be obtained from $\W_c$ using a constraint algorithm.
Hence, $\W_o$ acts as the initial phase space of the system.

We denote by $j_o \colon \W_o \hookrightarrow \W$ the natural embedding
and by $\vf_{\W_o}(\W)$ the set of vector fields in $\W$ at support on $\W_o$.
Hence, we look for vector fields $X\in\vf_{\W_o}(\W)$ which are solutions
to equation (\ref{eqn:Cap06_EqDinImp}) at support on $\W_o$; that is
\begin{equation}\label{eqn:Cap06_EqDinSupW0}
\restric{\left[\inn(X)\Omega - \d H\right]}{\W_o} = 0 \ .
\end{equation}

In natural coordinates a generic vector field in $\vf(\W)$ is
$$
X = \sum_{i=0}^{k-1}f_i^A\derpar{}{q_i^A} + \sum_{i=k}^{2k-1}F_i^A\derpar{}{q_i^A} +
 \sum_{i=0}^{k-1}G^i_A\derpar{}{p^i_A} \ ,
$$
bearing in mind the local expressions of $\Omega$ and $\d H$, from (\ref{eqn:Cap06_EqDinImp}),
we obtain the following system of $(2k+1)n$ equations
\bea
 f_i^A  = q_{i+1}^A \ , \label{eqn:Cap06_SODE}\\
 G^0_A  = \displaystyle \derpar{\Lag}{q_0^A} \quad , \quad
 G^i_A = \displaystyle \derpar{\Lag}{q_i^A} -p^{i-1}_A = d_T(p^i_A) \ ,
  \label{eqn:Cap06_EqDin} \\
p^{k-1}_A - \derpar{\Lag}{q_k^A} = 0 \ , \label{eqn:Cap06_FL}
\eea
where $0 \leqslant i \leqslant k-1$ in (\ref{eqn:Cap06_SODE}) and $1 \leqslant i \leqslant k-1$ in (\ref{eqn:Cap06_EqDin}).
Therefore
\beq
\label{Xcoor}
X = \sum_{i=0}^{k-1}q_{i+1}^A\derpar{}{q_i^A} + \sum_{i=k}^{2k-1}F_i^A\derpar{}{q_i^A} + 
\derpar{\Lag}{q_0^A}\derpar{}{p^0_A} + \sum_{i=1}^{k-1}d_T(p^i_A)\derpar{}{p^i_A} \ .
\eeq
We can observe that equations (\ref{eqn:Cap06_FL}) are just a compatibility condition
that, together with the other conditions for the momenta,
say that the vector fields $X$ exist only with support on the submanifold
defined by the graph of the Legendre-Ostrogradsky map. So we recover,
in coordinates, the result stated in Propositions \ref{prop:Cap06_ExistSolEqDin} and \ref{prop:Cap06_W0GrafFL}.
Furthermore, this local expression shows that $X$ is a semispray of type $k$ in $\W$.

The component functions $F_i^A$, $k \leqslant i \leqslant 2k-1$, are undetermined.
Nevertheless, we must study the tangency of $X$ to the submanifold $\W_o$;
that is, we have to impose that $\restric{\Lie(X)\xi}{\W_o} \equiv \restric{X(\xi)}{\W_o} = 0$,
for every constraint function $\xi$ defining $\W_o$.
So, taking into account Prop. \ref{prop:Cap06_W0GrafFL},
these conditions lead to
\begin{align*}
&\left(\sum_{i=0}^{k-1}q_{i+1}^A\derpar{}{q_i^A} + \sum_{i=k}^{2k-1}F_i^A\derpar{}{q_i^A} + 
\derpar{\Lag}{q_0^A}\derpar{}{p^0_A} + \sum_{i=1}^{k-1}d_T(p^i_A)\derpar{}{p^i_A}\right) \left( p^{k-1}_A -
 \derpar{\Lag}{q_k^A} \right) = 0 \\
&\left(\sum_{i=0}^{k-1}q_{i+1}^A\derpar{}{q_i^A} + \sum_{i=k}^{2k-1}F_i^A\derpar{}{q_i^A} + 
\derpar{\Lag}{q_0^A}\derpar{}{p^0_A} + \sum_{i=1}^{k-1}d_T(p^i_A)\derpar{}{p^i_A}\right) \left( p^{k-2}_A -
 \sum_{i=0}^{1}(-1)^i d_T^i\left(\derpar{\Lag}{q_{k-1+i}^A}\right)  \right) = 0 \\
&\qquad \qquad \qquad \qquad \vdots \\
&\left(\sum_{i=0}^{k-1}q_{i+1}^A\derpar{}{q_i^A} + \sum_{i=k}^{2k-1}F_i^A\derpar{}{q_i^A} + 
\derpar{\Lag}{q_0^A}\derpar{}{p^0_A} + \sum_{i=1}^{k-1}d_T(p^i_A)\derpar{}{p^i_A}\right) \left( p^{1}_A - 
\sum_{i=0}^{k-2}(-1)^i d_T^i\left(\derpar{\Lag}{q_{2+i}^A}\right)  \right) = 0 \\
&\left(\sum_{i=0}^{k-1}q_{i+1}^A\derpar{}{q_i^A} + \sum_{i=k}^{2k-1}F_i^A\derpar{}{q_i^A} + 
\derpar{\Lag}{q_0^A}\derpar{}{p^0_A} + \sum_{i=1}^{k-1}d_T(p^i_A)\derpar{}{p^i_A}\right) \left( p^{0}_A - 
\sum_{i=0}^{k-1}(-1)^i d_T^i\left(\derpar{\Lag}{q_{1+i}^A}\right)  \right) = 0 \ ,
\end{align*}
and, from here, we obtain the following $kn$ equations
\begin{equation}
\label{eqn:Cap06_TanVectFieldX}
\begin{array}{l}
\displaystyle \left(F_k^B-q_{k+1}^B\right)\derpars{\Lag}{q_k^B}{q_k^A} = 0 \\[10pt]
\displaystyle \left(F_{k+1}^B - q_{k+2}^B\right)\derpars{\Lag}{q_k^B}{q_k^A} - 
\left(F_k^B-q_{k+1}^B \right) d_T\left(\derpars{\Lag}{q_k^B}{q_k^A}\right) = 0 \\
\qquad \qquad \qquad \qquad \vdots \\
\displaystyle \left(F_{2k-2}^B - q_{2k-1}^B\right)\derpars{\Lag}{q_k^B}{q_k^A} -
 \sum_{i=0}^{k-3} \left(F_{k+i}^B-q_{k+i+1}^B \right) (\cdots\cdots) = 0 \\
\displaystyle (-1)^k\left(F_{2k-1}^B - d_T\left(q_{2k-1}^B\right)\right) \derpars{\Lag}{q_k^B}{q_k^A} + 
\sum_{i=0}^{k} (-1)^id_T^i\left( \derpar{\Lag}{q_i^A} \right) - \sum_{i=0}^{k-2} \left(F_{k+i}^B-q_{k+i+1}^B \right)
 (\cdots\cdots) = 0 \ ,
\end{array}
\end{equation}
where the terms in brackets $(\cdots\cdots)$ contain relations involving partial derivatives
of the Lagrangian and applications of $d_T$ which for simplicity are not written.
These are just the Lagrangian equations for the components of $X$,
as we have seen in (\ref{eqn:Cap03_LocalCoordLagEq}).
These equations can be compatible or not,
and a sufficient condition to ensure compatibility is the regularity of the
Lagrangian function.  In particular, we have:

\begin{prop}
\label{prop:Cap06_RegLag}
If $\Lag \in \Cinfty(\Tan^kQ)$ is a regular Lagrangian function, then
there exists a unique vector field $X \in \vf_{\W_o}(\W)$ which is a solution
to equation (\ref{eqn:Cap06_EqDinSupW0}); it  is  tangent to $\W_o$,
and is a semispray of type $1$ in $\W$.
\end{prop}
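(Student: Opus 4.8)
The plan is to argue in the natural coordinates of $\W$, building on the computations already made for equations (\ref{eqn:Cap06_EqDinImp}) and (\ref{eqn:Cap06_EqDinSupW0}). By Proposition \ref{prop:Cap06_W0GrafFL}, the dynamical equation (\ref{eqn:Cap06_EqDinSupW0}) admits solutions at support on $\W_o={\rm graph}\,\Leg$, and the system (\ref{eqn:Cap06_SODE})--(\ref{eqn:Cap06_FL}) shows that every such solution has the form (\ref{Xcoor}): the coefficients $f_i^A=q_{i+1}^A$ and $G^i_A$ $(0\le i\le k-1)$ are completely fixed, whereas the components $F_i^A$ $(k\le i\le 2k-1)$ --- which are exactly the components of $X$ along $\ker\,\Omega$ --- remain undetermined. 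Thus existence of \emph{a} solution to the dynamical equation is already granted; what must be proved is that regularity picks out a \emph{unique} admissible choice of the $F_i^A$ and that this choice makes $X$ a semispray of type $1$.

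First I would impose tangency of $X$ to $\W_o$, i.e.\ $\restric{X(\xi)}{\W_o}=0$ for the constraint functions $\xi$ defining $\W_o$ (the relations $p^{r-1}_A=\hat p^{r-1}_A$). This yields precisely the system (\ref{eqn:Cap06_TanVectFieldX}) of $kn$ equations in the unknowns $F_k^A,\ldots,F_{2k-1}^A$. The decisive structural feature is that this system is lower triangular, with the Hessian block $\left(\partial^2\Lag/\partial q_k^B\partial q_k^A\right)$ on the diagonal: the $j$th equation couples $F_{k+j-1}^A$ linearly through the Hessian to cross terms that all carry factors $(F_{k+i}^B-q_{k+i+1}^B)$ of strictly lower index.

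The key step is to solve (\ref{eqn:Cap06_TanVectFieldX}) recursively using regularity, which (as noted after the definition of regular Lagrangian) means that the Hessian $\left(\partial^2\Lag/\partial q_k^B\partial q_k^A\right)$ is invertible everywhere. The first equation $(F_k^B-q_{k+1}^B)\,\partial^2\Lag/\partial q_k^B\partial q_k^A=0$ then forces $F_k^A=q_{k+1}^A$; substituting into the second equation kills its cross term and forces $F_{k+1}^A=q_{k+2}^A$; iterating gives $F_i^A=q_{i+1}^A$ for $k\le i\le 2k-2$, each step justified by invertibility of the Hessian and by the vanishing of the lower-index differences already obtained. In the final equation all cross terms $(F_{k+i}^B-q_{k+i+1}^B)(\cdots)$ then vanish, leaving
$$
(-1)^k\left(F_{2k-1}^B-d_T(q_{2k-1}^B)\right)\derpars{\Lag}{q_k^B}{q_k^A}+\sum_{i=0}^{k}(-1)^i d_T^i\left(\derpar{\Lag}{q_i^A}\right)=0,
$$
which, again by invertibility of the Hessian, determines $F_{2k-1}^A$ uniquely. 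This settles existence and uniqueness, and tangency holds by construction, since we have solved exactly the tangency equations.

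Finally I would read off the semispray property. Because the recursion gives $F_i^A=q_{i+1}^A$ for all $k\le i\le 2k-2$, and the components $f_i^A=q_{i+1}^A$ $(0\le i\le k-1)$ were already fixed, expression (\ref{Xcoor}) collapses to $X=\sum_{i=0}^{2k-2}q_{i+1}^A\,\partial/\partial q_i^A+F_{2k-1}^A\,\partial/\partial q_{2k-1}^A+\sum_{i=0}^{k-1}G^i_A\,\partial/\partial p^i_A$, which is exactly the local form of a semispray of type $1$ in $\W$. I expect the main obstacle to be organizing the recursive elimination cleanly: one must check at each stage that the bracketed cross terms genuinely carry the lower-index factors that have just been shown to vanish, so that the single invertible Hessian block propagates a unique solution through all $kn$ equations. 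Once this triangular structure is made explicit the conclusion is immediate; and since the construction is canonical and the solution unique, the locally defined fields glue to a global $X\in\vf_{\W_o}(\W)$.
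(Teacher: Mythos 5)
Your proof is correct and follows essentially the same route as the paper: both impose the tangency conditions on $\W_o$, obtain the system (\ref{eqn:Cap06_TanVectFieldX}), and use invertibility of the Hessian $\left(\derpars{\Lag}{q_k^B}{q_k^A}\right)$ to conclude $F_i^A=q_{i+1}^A$ for $k\le i\le 2k-2$ and to determine $F_{2k-1}^A$ uniquely, which is exactly the semispray-of-type-$1$ form. The only difference is that you make explicit the lower-triangular structure that justifies the recursive elimination, a step the paper's proof asserts without elaboration.
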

\begin{proof}
As the Lagrangian function $\Lag$ is regular, the Hessian matrix
$\ds \left(\derpars{\Lag}{q_k^B}{q_k^A}\right)$ is regular at every point,
and this allows us to solve the above $k$ systems of $n$ equations (\ref{eqn:Cap06_TanVectFieldX})
determining all the functions $F_i^A$ uniquely, as follows
\bea
\label{eqn:Cap06_LagEqReg}
 F_i^A = q_{i+1}^A \quad , \quad  (k \leqslant i \leqslant 2k-2) \\
 (-1)^k\left(F_{2k-1}^B - d_T\left(q_{2k-1}^B\right)\right) \derpars{\Lag}{q_k^B}{q_k^A} +
 \sum_{i=0}^{k} (-1)^id_T^i\left( \derpar{\Lag}{q_i^A} \right) = 0 \ .
 \nonumber
\eea
In this way, the tangency condition holds for $X$ at every point on $\W_o$.
Furthermore, the equalities (\ref{eqn:Cap06_LagEqReg}) show that
$X$ is a semispray of type $1$ in $\W$
\end{proof}

However, if $\Lag$ is  not regular, the equations (\ref{eqn:Cap06_TanVectFieldX})
can be compatible or not. In the most favourable cases,
there is a submanifold $\W_f \hookrightarrow \W_o$ (it could be $\W_f = \W_o$)
such that there exist vector fields $X\in\vf_{\W_o}(\W)$, tangent to $\W_f$,
which are solutions to the equation
\begin{equation}
\label{eqn:Cap06_EqDinSupWf}
\restric{\left[\inn(X)\Omega - dH\right]}{\W_f} = 0 \ .
\end{equation}
In this case, the equations (\ref{eqn:Cap06_TanVectFieldX})
are not compatible, and the compatibility condition gives rise to new constraints.

\subsubsection{Dynamics in $\Tan^{2k-1}Q$}

Now we study how to recover the Lagrangian dynamics from
the dynamics in the unified formalism, using the dynamical vector fields.

First we have the following results:

\begin{prop}
\label{prop:Cap06_pr1Difeo}
The map $\overline{\pr}_1 = \pr_1 \circ j_o \colon \W_o \to \Tan^{2k-1} Q$ is a diffeomorphism.
\end{prop}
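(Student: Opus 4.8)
The plan is to exploit the description of $\W_o$ as the graph of the Legendre-Ostrogradsky map obtained in Proposition \ref{prop:Cap06_W0GrafFL}, together with the elementary fact that the projection of a graph onto its domain is always a diffeomorphism. So the whole argument reduces to producing an explicit smooth inverse and checking that everything lives in the right fibre product.

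First I would note that, by Proposition \ref{prop:Cap06_W0GrafFL}, every point of $\W_o$ has the form $(p,\Leg(p))$ with $p\in\Tan^{2k-1}Q$, so that $\overline{\pr}_1(p,\Leg(p))=p$. The candidate inverse is the graph map $\Phi=(\mathrm{Id}_{\Tan^{2k-1}Q},\Leg)\colon\Tan^{2k-1}Q\to\W$, $p\mapsto(p,\Leg(p))$, which is smooth because both $\mathrm{Id}_{\Tan^{2k-1}Q}$ and $\Leg$ are. The one point that must be verified carefully is that $\Phi$ actually takes values in the fibre product $\W$ (and hence in $\W_o$): this is precisely the compatibility relation $\pi_{\Tan^{k-1}Q}\circ\Leg=\rho^{2k-1}_{k-1}$ recorded after the definition of $\Leg$, which guarantees that $p$ and $\Leg(p)$ have the same projection to $\Tan^{k-1}Q$. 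Granting this, the identities $\overline{\pr}_1\circ\Phi=\mathrm{Id}_{\Tan^{2k-1}Q}$ and $\Phi\circ\overline{\pr}_1=\mathrm{Id}_{\W_o}$ are immediate, so $\overline{\pr}_1$ is a bijection with smooth inverse $\Phi$, i.e.\ a diffeomorphism.

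Alternatively, and perhaps more transparently for the reader, I would give the same argument in natural coordinates. Since $\W_o$ is cut out of $\W$ by the constraints $p^i_A=\hat p^i_A(q_0^A,\ldots,q_{2k-1}^A)$ coming from the Legendre-Ostrogradsky map, the functions $(q_0^A,\ldots,q_{2k-1}^A)$ restrict to a global coordinate system on $\W_o$; in these coordinates $\overline{\pr}_1$ is simply $(q_0^A,\ldots,q_{2k-1}^A)\mapsto(q_0^A,\ldots,q_{2k-1}^A)$, with smooth inverse $(q_0^A,\ldots,q_{2k-1}^A)\mapsto(q_0^A,\ldots,q_{2k-1}^A,\hat p^0_A,\ldots,\hat p^{k-1}_A)$. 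There is no genuine obstacle in this proof: the result is essentially a restatement of the graph property, and the only substantive ingredient beyond bookkeeping is the compatibility $\pi_{\Tan^{k-1}Q}\circ\Leg=\rho^{2k-1}_{k-1}$, which ensures $\Phi$ lands in $\W$.
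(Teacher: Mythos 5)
Your proof is correct, and it takes a slightly different route from the paper's. Both arguments rest entirely on Proposition \ref{prop:Cap06_W0GrafFL} ($\W_o = \mathrm{graph}\,\Leg$), but from there the paper argues by rank: it asserts that $\overline{\pr}_1$ is a surjective submersion and then, since $\dim\W_o = \dim\Tan^{2k-1}Q$, concludes it is also an injective immersion and hence a diffeomorphism. You instead exhibit the smooth inverse explicitly, namely the graph map $\Phi = (\mathrm{Id}_{\Tan^{2k-1}Q},\Leg)$, and you isolate the one genuinely nontrivial verification: that $\Phi$ lands in the fibre product $\W$, which is exactly the compatibility relation $\pi_{\Tan^{k-1}Q}\circ\Leg = \rho^{2k-1}_{k-1}$. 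Your version is more elementary and arguably more self-contained, since it avoids having to justify the submersion claim (which the paper states without proof) and replaces the dimension count by two composition identities. Two small caveats, neither of which is a gap: the natural coordinates $(q_0^A,\ldots,q_{2k-1}^A)$ on $\W_o$ are local, induced by a chart on $Q$, so "global coordinate system" is an overstatement unless $Q$ admits a global chart; and strictly speaking one should remark that $\Phi$, being smooth as a map into $\W$ with image contained in the embedded submanifold $\W_o$, is therefore smooth as a map into $\W_o$ --- a standard fact, but it is the step that turns your bijection-with-smooth-inverse claim into the diffeomorphism statement at the level of $\W_o$ itself.
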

\begin{proof}
As $\W_o = {\rm graph}\,\Leg$, we have that $\Tan^{2k-1} Q \simeq \W_o$.
Furthermore, $\overline{\pr}_1$ is a surjective submersion and, by the equality between dimensions,
it is also an injective immersion and hence it is a diffeomorphism.
\end{proof}

\begin{lem}
\label{lemma:Cap06_LagForm}
If $\omega_{k-1} \in \df^2(\Tan^*(\Tan^{k-1}Q))$ is the canonical symplectic $2$-form in 
$\Tan^*(\Tan^{k-1}Q)$, and $\omega_{\Lag} = \Leg^*\omega_{k-1}$ is the
Lagrangian $2$-form,  then $\Omega = \pr_1^*\omega_\Lag$.
\end{lem}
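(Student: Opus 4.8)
The plan is to reduce everything to functoriality of the pullback together with the two identities already at our disposal, namely $\omega_\Lag = \Leg^*\omega_{k-1}$ and $\Omega = \pr_2^*\omega_{k-1}$. First I would rewrite the right-hand side as
$$
\pr_1^*\omega_\Lag = \pr_1^*\Leg^*\omega_{k-1} = (\Leg\circ\pr_1)^*\omega_{k-1},
$$
so that both $\Omega$ and $\pr_1^*\omega_\Lag$ are pullbacks of the \emph{single} canonical $2$-form $\omega_{k-1}$ of $\Tan^*(\Tan^{k-1}Q)$, by the maps $\pr_2$ and $\Leg\circ\pr_1$ respectively. Thus the whole statement is reduced to comparing these two maps $\W\to\Tan^*(\Tan^{k-1}Q)$.

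The key observation is that these two maps coincide precisely on $\W_o$. Indeed, by Proposition \ref{prop:Cap06_W0GrafFL} we have $\W_o={\rm graph}\,\Leg$, so a point of $\W_o$ is of the form $(p,\Leg(p))$ with $p\in\Tan^{2k-1}Q$; evaluating gives $\pr_2(p,\Leg(p)) = \Leg(p) = \Leg(\pr_1(p,\Leg(p)))$, whence $\pr_2\circ j_o = \Leg\circ\pr_1\circ j_o = \Leg\circ\overline{\pr}_1$. Pulling the chain of equalities back along the embedding $j_o\colon\W_o\hookrightarrow\W$ then yields
$$
j_o^*\Omega = (\pr_2\circ j_o)^*\omega_{k-1} = (\Leg\circ\overline{\pr}_1)^*\omega_{k-1} = \overline{\pr}_1^*(\Leg^*\omega_{k-1}) = \overline{\pr}_1^*\omega_\Lag = j_o^*\pr_1^*\omega_\Lag,
$$
which is the asserted equality. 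In natural coordinates this is transparent: on $\W_o$ one has $p^i_A = \hat p^i_A$ (the defining relations of the graph), so that $\Omega = \d q_i^A\wedge\d p^i_A$ becomes $\d q_i^A\wedge\d\hat p^i_A$, and the latter is exactly $\pr_1^*\omega_\Lag$ once one recalls $\theta_\Lag=\sum_r\hat p^{r-1}_A\,\d q_{r-1}^A$ and hence $\omega_\Lag=\sum_r\d q_{r-1}^A\wedge\d\hat p^{r-1}_A$.

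The step I would be most careful about is precisely the domain. Off the graph the maps $\pr_2$ and $\Leg\circ\pr_1$ differ (their fibre components are the coordinates $p^i_A$ versus the functions $\hat p^i_A$), and since $\pr_1^*\omega_\Lag$ is a combination of $\d q$'s alone it cannot match the $\d q\wedge\d p$ part of $\Omega$ over all of $\W$; the identity is genuinely an equality of forms along $\W_o$, the locus singled out by Propositions \ref{prop:Cap06_ExistSolEqDin} and \ref{prop:Cap06_W0GrafFL}, on which $\overline{\pr}_1$ is the diffeomorphism of Proposition \ref{prop:Cap06_pr1Difeo}. A more pedestrian but equivalent route, avoiding coordinates, is to work one degree down: pull the already-established relation $\Leg^*\theta_{k-1}=\theta_\Lag$ back by $\pr_1$, compare $\pr_1^*\theta_\Lag=(\Leg\circ\pr_1)^*\theta_{k-1}$ with $\pr_2^*\theta_{k-1}$ on $\W_o$, and then apply $-\d$; since $j_o^*$ commutes with $\d$, this delivers the $2$-form identity directly while reusing the relation between the canonical and Lagrangian $1$-forms.
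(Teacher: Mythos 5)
Your proof is correct, and its core is the same computation the paper itself uses: rewrite $\pr_1^*\omega_\Lag = (\Leg\circ\pr_1)^*\omega_{k-1}$ and compare the two maps $\Leg\circ\pr_1$ and $\pr_2$ into $\Tan^*(\Tan^{k-1}Q)$. The genuine difference is your domain caveat, and on this point you are right where the paper is careless. The paper's entire proof is the chain $\pr_1^*\omega_\Lag=\pr_1^*(\Leg^*\omega_{k-1})=(\Leg\circ\pr_1)^*\omega_{k-1}=\pr_2^*\omega_{k-1}=\Omega$, whose third equality tacitly asserts $\Leg\circ\pr_1=\pr_2$ on all of $\W$. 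That identity is false off the graph: at a point $(p,\alpha_q)\in\W$ one has $\pr_2(p,\alpha_q)=\alpha_q$ while $(\Leg\circ\pr_1)(p,\alpha_q)=\Leg(p)$, and $\alpha_q$ is an arbitrary covector at $q$. Consequently the lemma as literally stated fails on $\W$, for exactly the reason you give: every $\derpar{}{p^i_A}$ lies in the kernel of $\pr_1^*\omega_\Lag$ (being $\pr_1$-vertical), whereas $\inn\left(\derpar{}{p^i_A}\right)\Omega=-\d q_i^A\neq 0$; compare with the kernel (\ref{eqn:Cap06_LocalBasisKerOmega}). Your restricted statement $j_o^*\Omega=j_o^*\pr_1^*\omega_\Lag$, obtained from $\pr_2\circ j_o=\Leg\circ\overline{\pr}_1$, is the correct version of the lemma, and it is also the only version the paper ever needs: in the proof of Theorem \ref{thm:Cap06_CorrX-XL} the identity is invoked only under restriction to $\W_o$, where, combined with the tangency of $X$ to $\W_o$ (so that contraction with $X$ commutes with $j_o^*$), your pulled-back identity carries the argument through unchanged. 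The same caveat in fact applies to Lemma \ref{lemma:Cap06_LagEnergy}: $\pr_1^*E_\Lag=H$ likewise holds only along $\W_o$, since $H$ depends on the momenta $p^i_A$ and $\pr_1^*E_\Lag$ does not. Your alternative route through the $1$-forms, pulling back $\Leg^*\theta_{k-1}=\theta_\Lag$ along $\pr_1$, restricting to $\W_o$ and applying $-\d$, is equally valid and buys nothing extra beyond avoiding coordinates; the essential content in either variant is the observation that the equality lives on ${\rm graph}\,\Leg$ and not on all of $\W$.
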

\begin{proof}
In fact,
$$
\pr_1^*\omega_{\Lag} = \pr_1^*(\Leg^*\omega_{k-1})=(\Leg \circ \pr_1)^*\omega_{k-1}=
\pr_2^*\omega_{k-1}= \Omega \ .
$$
\end{proof}

\begin{lem}
\label{lemma:Cap06_LagEnergy}
There exists a unique function $E_\Lag \in \Cinfty(\Tan^{2k-1} Q)$ such that $\pr_1^*E_\Lag=H$.
This function $E_\Lag$ is the Lagrangian energy.
\end{lem}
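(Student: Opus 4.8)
The plan is to build $E_\Lag$ by transporting $H$ through the diffeomorphism $\overline{\pr}_1 = \pr_1 \circ j_o \colon \W_o \to \Tan^{2k-1}Q$ of Proposition \ref{prop:Cap06_pr1Difeo}, and then to identify the result with the Lagrangian energy by a short coordinate computation. I would first stress that $H$ is \emph{not} projectable through $\pr_1$ on the whole of $\W$, since $\partial H/\partial p^i_A = q_{i+1}^A$ does not vanish identically; as with the companion Lemma \ref{lemma:Cap06_LagForm}, the equality $\pr_1^* E_\Lag = H$ is therefore to be read at support on $\W_o = \operatorname{graph}\Leg$, that is, as $\overline{\pr}_1^* E_\Lag = j_o^* H$.

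With this reading, existence and uniqueness are immediate from Proposition \ref{prop:Cap06_pr1Difeo}. Since $\overline{\pr}_1$ is a diffeomorphism, it is surjective, so a function on $\Tan^{2k-1}Q$ is determined by its $\overline{\pr}_1$-pullback, which gives uniqueness; and setting $E_\Lag = (\overline{\pr}_1^{-1})^*(j_o^* H)$ produces, by construction, a function on $\Tan^{2k-1}Q$ with $\overline{\pr}_1^* E_\Lag = j_o^* H$. The fact that $j_o^* H$ descends at all is exactly the statement that $\overline{\pr}_1$ is a diffeomorphism, so no projectability has to be checked by hand.

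The substantive step is to recognize this $E_\Lag$ as the Lagrangian energy, which I would do in natural coordinates. By Proposition \ref{prop:Cap06_W0GrafFL}, on $\W_o$ one has $p^i_A = \hat p^i_A$, where the Jacobi--Ostrogradsky functions satisfy $\hat p^i_A = \sum_{j=0}^{k-1-i}(-1)^j d_T^j(\partial \Lag/\partial q_{i+1+j}^A)$ and the recursion (\ref{eqn:Cap04_MomentumCoordRelation}). Substituting this into the coordinate expression (\ref{eqn:Cap06_LocalCoordUnifiedHamiltFunc}) of $H$ gives
$$
j_o^* H = \sum_{i=0}^{k-1} q_{i+1}^A \sum_{j=0}^{k-1-i}(-1)^j d_T^j\left(\derpar{\Lag}{q_{i+1+j}^A}\right) - \Lag \ .
$$
Reindexing the outer sum by $r = i+1$ converts this into $\sum_{r=1}^{k} q_r^A \sum_{j=0}^{k-r}(-1)^j d_T^j(\partial \Lag/\partial q_{r+j}^A) - \Lag$, which is precisely the coordinate expression (\ref{eqn:Cap03_LocalCoordLagEnergy}) of the Lagrangian energy. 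Because $\overline{\pr}_1$ acts as the identity on the coordinates $(q_0^A,\ldots,q_{2k-1}^A)$ and only forgets the momenta, this identifies the $E_\Lag$ constructed above with the Lagrangian energy $E_\Lag \in \Cinfty(\Tan^{2k-1}Q)$.

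The only genuine obstacle here is conceptual rather than computational: seeing that the equality lives on $\W_o$, not on $\W$, and hence that both halves of the lemma must be carried out after restricting along $j_o$ and pushing forward through $\overline{\pr}_1$. Once that is acknowledged, the argument is a direct substitution of the Legendre--Ostrogradsky relations followed by the reindexing above.
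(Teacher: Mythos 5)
Your proof is correct and follows essentially the same route as the paper: define $E_\Lag$ by transporting $H$ through the diffeomorphism $\overline{\pr}_1$, and identify it with the Lagrangian energy by substituting the graph relations $p^i_A = \hat p^i_A$ into the coordinate expression of $H$ and reindexing. Your preliminary remark that the equality $\pr_1^*E_\Lag = H$ can only be read at support on $\W_o$ (since $H$ depends on the momenta while $\pr_1^*E_\Lag$ does not) is a genuine precision that the paper's wording glosses over, but it does not alter the argument.
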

\begin{proof}
As $\overline{\pr}_1$ is a diffeomorphism, we can define the function
$E_\Lag=(\overline{\pr}_1^{-1}\circ j_o)^*H \in \Cinfty(\Tan^{2k-1} Q)$,
which obviously verifies that $\pr_1^*E_\Lag=H$. 

In order to prove that $E_\Lag$ is the Lagrangian energy defined previously, we calculate its local expression in coordinates.
Thus, from (\ref{eqn:Cap06_LocalCoordUnifiedHamiltFunc}) we obtain that
$$
\overline{\pr}_1^*E_\Lag = H=\sum_{i=0}^{k-1} p^i_Aq_{i+1}^A - \Lag(q_0^A,\ldots,q_k^A) \ ,
$$
but $\W_o\hookrightarrow \W$ is the graph of the Legendre-Ostrogradsky map, and by Prop.
\ref{prop:Cap06_W0GrafFL} we have
$\ds p^i_A = \sum_{j=0}^{k-i-1}(-1)^j d_T^j\left(\derpar{\Lag}{q_{i+1+j}^A}\right)$, and then
\begin{align*}
\overline{\pr}_1^*E_\Lag &= \sum_{i=0}^{k-1}\sum_{j=0}^{k-i-1} q_{i+1}^A(-1)^j d_T^j\left(\derpar{\Lag}{q_{i+1+j}^A}\right) - 
\Lag(q_0^A,\ldots,q_k^A) \\
&= \sum_{i=1}^k \sum_{j=0}^{k-i}q_i^A (-1)^jd_T^j\left(\derpar{\Lag}{q_{i+j}^A}\right) - \Lag(q_0^A,\ldots,q_k^A) \ .
\end{align*}
Now, as $\overline{\pr}_1 = \pr_1 \circ j_o$ and $\pr_1^*q_i^A = q_i^A$, we obtain finally
\begin{equation*}
E_\Lag = \sum_{i=1}^k\sum_{j=0}^{k-i} q_i^A (-1)^jd_T^j\left(\derpar{\Lag}{q_{i+j}^A}\right) -\Lag(q_0^A,\ldots,q_k^A)
\end{equation*}
which is the local expression (\ref{eqn:Cap03_LocalCoordLagEnergy}) of the Lagrangian energy.
\end{proof}

Using these results, we can recover an Euler-Lagrange vector field in $\Tan^{2k-1}Q$
starting from a vector field $X \in \vf_{\W_0}(\W)$ tangent to $\W_o$, a
solution to (\ref{eqn:Cap06_EqDinSupW0}). First we have:

\begin{lem}
\label{lemma:Cap06_LagVectField}
Let $X \in \vf(\W)$ be a vector field tangent to $\W_o$. 
Then there exists a unique vector field $X_\Lag\in \vf(\Tan^{2k-1}Q)$ such that
$X_\Lag \circ \pr_1 \circ j_o = \Tan\pr_1 \circ X \circ j_o$.
\end{lem}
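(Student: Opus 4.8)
The plan is to show that the vector field $X\in\vf(\W)$, which is tangent to $\W_o$, is $\pr_1$-projectable, and that its projection is the desired $X_\Lag$. The key observation is that $\overline{\pr}_1 = \pr_1\circ j_o \colon \W_o \to \Tan^{2k-1}Q$ is a diffeomorphism by Proposition \ref{prop:Cap06_pr1Difeo}. Since $X$ is tangent to $\W_o$, its restriction $X\circ j_o$ takes values in $\Tan\W_o$ (identified with its image in $\Tan\W$ under $\Tan j_o$), so I can push it forward by the diffeomorphism $\overline{\pr}_1$ to obtain a genuine vector field on $\Tan^{2k-1}Q$.

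Concretely, I would define
$$
X_\Lag = \Tan\overline{\pr}_1 \circ \left(X\circ j_o\right) \circ \overline{\pr}_1^{-1} \ .
$$
Because $\overline{\pr}_1$ is a diffeomorphism, $\Tan\overline{\pr}_1$ is a fiberwise isomorphism, so this composition is well-defined and yields a smooth section of $\tau_{\Tan^{2k-1}Q}$, i.e.\ an element of $\vf(\Tan^{2k-1}Q)$. Uniqueness then follows immediately: any $X_\Lag$ satisfying the stated intertwining relation is forced to equal this expression, since $\overline{\pr}_1$ and $\Tan\overline{\pr}_1$ are bijective on the relevant spaces.

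It then remains to verify the defining identity $X_\Lag \circ \pr_1 \circ j_o = \Tan\pr_1 \circ X \circ j_o$. Here I would use that $\pr_1\circ j_o = \overline{\pr}_1$ and that $\Tan\pr_1$ restricted to vectors tangent to $\W_o$ agrees with $\Tan\overline{\pr}_1 = \Tan\pr_1\circ\Tan j_o$; substituting the definition of $X_\Lag$ and cancelling $\overline{\pr}_1^{-1}\circ\overline{\pr}_1 = \operatorname{Id}$ gives the result. The step requiring the most care is confirming that tangency of $X$ to $\W_o$ genuinely guarantees that $X\circ j_o$ lies in the image of $\Tan j_o$, so that pushing forward by the diffeomorphism is legitimate; once this is granted, the rest is a formal diagram chase. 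I expect this tangency-to-projectability passage to be the main (though mild) obstacle, while the construction and uniqueness are routine consequences of $\overline{\pr}_1$ being a diffeomorphism.
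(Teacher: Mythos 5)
Your proposal is correct and takes essentially the same route as the paper: tangency of $X$ to $\W_o$ gives a restricted field $X_o\in\vf(\W_o)$ with $\Tan j_o\circ X_o = X\circ j_o$, which is then pushed forward through the diffeomorphism $\overline{\pr}_1$ of Proposition \ref{prop:Cap06_pr1Difeo}, with uniqueness and the intertwining identity following by the same diagram chase. The only cosmetic difference is that the paper makes the intermediate field $X_o$ explicit rather than writing the pushforward formula $\Tan\overline{\pr}_1\circ(X\circ j_o)\circ\overline{\pr}_1^{-1}$ directly, which is precisely the identification you flag as the delicate step.
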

\begin{proof}
As $X \in \vf(\W)$ is tangent to $\W_o$, there exists a vector field $X_o \in \vf(\W_o)$ such that
$\Tan j_o \circ X_o = X \circ j_o$.
Furthermore, as $\overline{\pr}_1$ is a diffeomorphism, there is a unique vector field
$X_\Lag \in \vf(\Tan^{2k-1}Q)$ which is $\overline{\pr}_1$-related with $X_o$; that is,
$X_\Lag \circ \overline{\pr}_1 = \Tan\overline{\pr}_1 \circ X_o$. Then
$$
X_\Lag \circ \pr_1 \circ j_o = X_\Lag \circ \overline{\pr_1} = \Tan\overline{\pr}_1 \circ X_o = 
\Tan\pr_1 \circ \Tan j_o \circ X_o = \Tan\pr_1 \circ X \circ j_o
$$
\end{proof}

And as a consequence we obtain:

\begin{teor}
\label{thm:Cap06_CorrX-XL}
Let $X \in \vf_{\W_o}(\W)$ be a vector field solution to equation (\ref{eqn:Cap06_EqDinSupW0}) and
 tangent to $\W_o$ (at least on the points of a submanifold $\W_f \hookrightarrow \W_o$).
 Then there exists a unique semispray of type $k$, $X_\Lag \in \vf(\Tan^{2k-1} Q)$, 
 which is a solution to the equation
\begin{equation}
\label{eqn:Cap06_EqLag}
\inn(X_\Lag)\omega_\Lag - dE_\Lag = 0
\end{equation}
(at least on the points of $S_f = \pr_1(\W_f)$). In addition, if $\Lag \in \Cinfty(\Tan^kQ)$
is a regular Lagrangian, then $X_\Lag$ is a semispray of type $1$, and 
hence it is the Euler-Lagrange vector field.

\noindent Conversely, if $X_\Lag \in \vf(\Tan^{2k-1}Q)$ is a semispray of type $k$ (resp., of type $1$),
which is a solution to equation (\ref{eqn:Cap06_EqLag}) (at least on the points of a submanifold
 $S_f \hookrightarrow \Tan^{2k-1}Q$), then there exists a unique vector field $X \in \vf_{\W_o}(\W)$
 which is a solution to equation (\ref{eqn:Cap06_EqDinSupW0}) 
 (at least on $\W_f = \overline{\pr}_1^{-1}(S_f) \hookrightarrow \W_o \hookrightarrow \W$),
 and it is a semispray of type $k$ in $\W$ (resp., of type $1$).
\end{teor}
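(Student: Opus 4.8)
The plan is to derive the entire correspondence from the three structural facts already in place: that $\overline{\pr}_1=\pr_1\circ j_o\colon\W_o\to\Tan^{2k-1}Q$ is a diffeomorphism (Prop.~\ref{prop:Cap06_pr1Difeo}), that $\Omega=\pr_1^*\omega_\Lag$ (Lemma~\ref{lemma:Cap06_LagForm}), and that $H=\pr_1^*E_\Lag$ (Lemma~\ref{lemma:Cap06_LagEnergy}). Since both the presymplectic form and the Hamiltonian on $\W$ are $\pr_1$-pullbacks of the Lagrangian objects, and since $\pr_1$ is a submersion, the equations (\ref{eqn:Cap06_EqDinSupW0}) and (\ref{eqn:Cap06_EqLag}) will turn out to be two readings of a single covector identity once $X$ and $X_\Lag$ are made $\pr_1$-related.

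For the direct implication I would first use the tangency of $X$ to $\W_o$ (available on $\W_f$) together with Lemma~\ref{lemma:Cap06_LagVectField} and the diffeomorphism $\overline{\pr}_1$ to produce the unique $X_\Lag\in\vf(\Tan^{2k-1}Q)$, defined on $S_f=\pr_1(\W_f)$ and characterised by $X_\Lag\circ\overline{\pr}_1=\Tan\pr_1\circ X\circ j_o$. Then, for $p\in\W_f$ and any $v\in\Tan_p\W$, I would substitute the two pullback identities and this $\pr_1$-relation to obtain $(\inn(X)\Omega-\d H)_p(v)=(\inn(X_\Lag)\omega_\Lag-\d E_\Lag)_{\pr_1(p)}(\Tan\pr_1\,v)$. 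Because $\pr_1$ is a submersion, $\Tan\pr_1\,v$ ranges over all of $\Tan_{\pr_1(p)}\Tan^{2k-1}Q$, so the vanishing of the left-hand side (which holds on $\W_o\supseteq\W_f$ by (\ref{eqn:Cap06_EqDinSupW0})) forces $\inn(X_\Lag)\omega_\Lag-\d E_\Lag=0$ on $S_f$, i.e. (\ref{eqn:Cap06_EqLag}). The semispray character is then inherited: the coordinate form (\ref{Xcoor}) shows $X$ is a semispray of type $k$ in $\W$, and since the notion of semispray in $\W$ is defined through $\pr_1$, the integral curves of $X$ project under $\pr_1$ to integral curves of $X_\Lag$ obeying exactly the type-$k$ holonomy condition in $\Tan^{2k-1}Q$. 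In the regular case Prop.~\ref{prop:Cap06_RegLag} upgrades $X$ to a semispray of type $1$, whence $X_\Lag$ is of type $1$ and is the Euler--Lagrange vector field.

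For the converse I would reverse the construction: transport the given $X_\Lag$ to $X_o=(\Tan\overline{\pr}_1)^{-1}\circ X_\Lag\circ\overline{\pr}_1\in\vf(\W_o)$, and set $X=\Tan j_o\circ X_o$, a vector field at support on $\W_o$ and tangent to it; by construction $X$ is $\pr_1$-related to $X_\Lag$. The same covector computation, run backwards and using that $\Omega$ and $H$ are genuine $\pr_1$-pullbacks (so the identity holds for every $v\in\Tan_p\W$, not merely for $v\in\Tan_p\W_o$), shows that $(\inn(X)\Omega-\d H)|_{\W_o}$ vanishes as a covector on $\W$, i.e. $X$ solves (\ref{eqn:Cap06_EqDinSupW0}) on $\W_f=\overline{\pr}_1^{-1}(S_f)$. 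Uniqueness follows because $X$ is forced to be $\pr_1$-related to $X_\Lag$ and tangent to $\W_o$, and $\overline{\pr}_1$ is a diffeomorphism; the semispray type transfers by the same $\pr_1$-argument as before.

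The step I expect to be the main obstacle is the careful bookkeeping of supports and of the semispray type, rather than the covector identity itself, which is essentially immediate once the three structural facts are invoked. One must verify that the type-$k$ condition, automatic for every solution of (\ref{eqn:Cap06_EqDinSupW0}) by (\ref{Xcoor}), descends through the $\pr_1$-defined notion of semispray to the type-$k$ condition in $\Tan^{2k-1}Q$, and that the upgrade to type $1$ in the regular case is compatible on both sides via Prop.~\ref{prop:Cap06_RegLag} and the semispray local expressions. Equally delicate is tracking the constraint submanifolds so that $S_f=\pr_1(\W_f)$ and $\W_f=\overline{\pr}_1^{-1}(S_f)$ hold and the two dynamical equations are asserted on matching domains, since in the singular case the tangency required by Lemma~\ref{lemma:Cap06_LagVectField} to even define $X_\Lag$ is only available over $\W_f$.
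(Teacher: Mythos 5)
Your proposal is correct and follows essentially the same route as the paper: both proofs rest on Lemmas \ref{lemma:Cap06_LagForm}, \ref{lemma:Cap06_LagEnergy} and \ref{lemma:Cap06_LagVectField}, transfer the dynamical equation through the pullback identities $\Omega=\pr_1^*\omega_\Lag$ and $H=\pr_1^*E_\Lag$ using that $\pr_1$ is a surjective submersion, and obtain the converse by reversing the construction through the diffeomorphism $\overline{\pr}_1$. The only cosmetic difference is that you transfer the semispray type via the $\pr_1$-based definition (projected integral curves), whereas the paper derives the local expression of $X_\Lag$ from (\ref{Xcoor}) and verifies $J_k(X_\Lag)=\Delta_k$ (and, in the regular case, $J_1(X_\Lag)=\Delta_1$ via (\ref{eqn:Cap06_LagEqReg})) in coordinates.
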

\begin{proof}
Applying Lemmas \ref{lemma:Cap06_LagForm}, \ref{lemma:Cap06_LagEnergy},
 and \ref{lemma:Cap06_LagVectField}, we have:
$$
0 = \restric{\left[\inn(X)\Omega - \d H\right]}{\W_o}=
\restric{\left[\inn(X)\pr_1^*\omega_\Lag - \d\pr_1^*E_\Lag\right]}{\W_o} =
\pr_1^*\restric{\left[\inn(X_\Lag)\omega_\Lag - \d E_\Lag\right]}{\W_o} \ ,
$$
but, as $\pr_1$ is a surjective submersion, this is equivalent to
$$
0=\restric{\left[\inn(X_\Lag)\omega_\Lag - \d E_\Lag\right]}{\pr_1(\W_o)} = 
\restric{\left[\inn(X_\Lag)\omega_\Lag - \d E_\Lag\right]}{\Tan^{2k-1}Q} = 0 \ ,
$$
since $\pr_1(\W_o) = \Tan^{2k-1}Q$.
The converse is immediate, reversing this reasoning.

In order to prove that $X_\Lag$ is a semispray of type $k$, we proceed in coordinates.
From the local expression (\ref{Xcoor}) for the vector field $X$
(where the functions $F_i^A$ are the solutions of the equations (\ref{eqn:Cap06_TanVectFieldX})),
and using Lemma \ref{lemma:Cap06_LagVectField},
we obtain that the local expression of $X_\Lag \in \vf(\Tan^{2k-1}Q)$ is
$$
X_\Lag = \sum_{i=0}^{k-1}q_{i+1}^A\derpar{}{q_i^A} + \sum_{i=k}^{2k-1}F_i^A\derpar{}{q_i^A}\ ,
$$
and then
$$
 J_k(X_\Lag) = \sum_{i=0}^{k-1}\frac{(k+i)!}{i!}q_{i+1}^A\derpar{}{q_{k+i}^A} = \Delta_k \ ;
 $$
so $X_\Lag$ is a semispray of type $k$ in $\Tan^{2k-1}Q$.

Finally, if $\Lag \in \Cinfty(\Tan^kQ)$ is a regular Lagrangian, equations (\ref{eqn:Cap06_TanVectFieldX})
become (\ref{eqn:Cap06_LagEqReg}), and hence the local expression of $X$ is
$$
X = \sum_{i=0}^{2k-2}q_{i+1}^A\derpar{}{q_i^A} + F_{2k-1}^A\derpar{}{q_{2k-1}^A} + 
\derpar{\Lag}{q_0^A}\derpar{}{p^0_A} + \sum_{i=1}^{k-1}d_T(p^i_A)\derpar{}{p^i_A} \ .
$$
Therefore
$$
X_\Lag = \sum_{i=0}^{2k-2}q_{i+1}^A\derpar{}{q_i^A} + F_{2k-1}^A\derpar{}{q_{2k-1}^A} \ ,
$$
and then $\ds J_1(X_\Lag) = \sum_{i=0}^{2k-2}(i+1)q_{i+1}^A\derpar{}{q_{i+1}^A} = \Delta_1$,
which shows that $X_\Lag$ is a semispray of type $1$ in $\Tan^{2k-1}Q$.
\end{proof}

{\bf Remarks}:
\bit
\item
It is important to point out that, if $\Lag$ is not a regular Lagrangian,
then $X$ is a semispray of type $k$ in $\W$, but not necessarily a semispray of type $1$.
This means that $X_\Lag$ is a Lagrangian vector field, but it is not necessarily an Euler-Lagrange
vector field (it is not a semispray of type $1$,
but just a semispray of type $k$). Thus, for singular Lagrangians,
this must be imposed as an additional condition in order that the
integral curves of $X_\Lag$ verify the Euler-Lagrange equations.
This is a different from the case of first-order dynamical systems ($k=1$),
where this condition ($X_\Lag$ is a semispray of type $1$; that is, a holonomic vector field)
is obtained straightforwardly in the unified formalism.

In general, only in the most interesting cases have we assured the existence of
a submanifold  $\W_f \hookrightarrow \W_o$ 
and vector fields $X \in \vf_{\W_0}(\W)$ tangent to $\W_f$
which are solutions to the equation (\ref{eqn:Cap06_EqDinSupWf}).
Then, considering the submanifold
$S_f=\pr_1(\W_f)\hookrightarrow \Tan^{2k-1}Q$,
in the best cases (see \cite{art:Batlle_Gomis_Pons_Roman88, art:Gracia_Pons_Roman91,art:Gracia_Pons_Roman92}),
we have that those Euler-Lagrange vector fields $X_\Lag$ exist,
perhaps on another submanifold $M_f\hookrightarrow S_f$
where they are tangent, and are solutions to the equation
\beq
\restric{\left[i_{X_\Lag}\omega_\Lag - \d E_\Lag\right]}{M_f} = 0 \ .
\label{finaleq}
\eeq
\item
Observe also that Theorem \ref{thm:Cap06_CorrX-XL} states that there is a one-to-one correspondence
between vector fields $X \in \vf_{\W_o}(\W)$ which are solutions to equation
(\ref{eqn:Cap06_EqDinSupW0}) and $X_\Lag \in \vf(\Tan^{2k-1}Q)$ solutions to
(\ref{eqn:Cap06_EqLag}), but not uniqueness, unless $\Lag$ is regular. In fact:
\eit

\begin{corol}
\label{corol:Cap06_RegLag}
If $\Lag \in \Cinfty(\Tan^kQ)$ is a regular Lagrangian, then there is a unique
$X \in \vf_{\W_o}(\W)$ tangent to $\W_o$ which is a solution to equation (\ref{eqn:Cap06_EqDinSupW0}),
and it is a semispray of type $1$.
\end{corol}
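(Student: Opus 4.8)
The plan is to read this off as a direct consequence of the correspondence already established in Theorem~\ref{thm:Cap06_CorrX-XL}, combined with the uniqueness result for regular higher-order Lagrangian systems stated in Theorem~\ref{prop:Cap03_LagVectFieldRegLag}. The key point is that the regularity of $\Lag$ promotes both the existence statement and, crucially, the uniqueness coming from the Lagrangian side of that correspondence, so that no fresh dynamical analysis is needed in $\W$.

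First I would set up the bijection. By Theorem~\ref{thm:Cap06_CorrX-XL}, vector fields $X\in\vf_{\W_o}(\W)$ tangent to $\W_o$ and solving (\ref{eqn:Cap06_EqDinSupW0}) are in one-to-one correspondence with semisprays $X_\Lag\in\vf(\Tan^{2k-1}Q)$ solving the Lagrangian equation (\ref{eqn:Cap06_EqLag}); moreover, since $\Lag$ is regular, that same theorem guarantees each such $X_\Lag$ is in fact a semispray of type $1$, i.e.\ an Euler-Lagrange vector field.

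Next I would invoke uniqueness on the Lagrangian side. Because $\omega_\Lag$ is symplectic for a regular $\Lag$, Theorem~\ref{prop:Cap03_LagVectFieldRegLag} supplies one and only one semispray of type $1$ solving (\ref{eqn:Cap06_EqLag}). Transporting this through the correspondence: given two solutions $X_1,X_2\in\vf_{\W_o}(\W)$, their associated Euler-Lagrange vector fields coincide by this uniqueness, whence $X_1=X_2$ by injectivity of the correspondence. Existence is furnished by the converse direction of Theorem~\ref{thm:Cap06_CorrX-XL} applied to the (already guaranteed) Euler-Lagrange field, and that same converse, in the type-$1$ case, shows the resulting $X$ is a semispray of type $1$ in $\W$.

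The only point that requires care is that the correspondence of Theorem~\ref{thm:Cap06_CorrX-XL} really is a bijection on the relevant classes of vector fields, and that the semispray-of-type-$1$ attribute is preserved in both directions under the regularity hypothesis; both facts are already contained in that theorem, so this is where I expect the argument to lean rather than where it should break. As a sanity check I would note the alternative coordinate route, bypassing the correspondence as in Proposition~\ref{prop:Cap06_RegLag}: the regularity of the Hessian $\ds\left(\derpars{\Lag}{q_k^B}{q_k^A}\right)$ collapses (\ref{eqn:Cap06_TanVectFieldX}) into (\ref{eqn:Cap06_LagEqReg}), fixing every undetermined component $F_i^A$ uniquely and exhibiting $X$ as a semispray of type $1$. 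I expect the abstract route to be cleaner, since it isolates exactly where regularity enters, namely in the uniqueness clause of Theorem~\ref{prop:Cap03_LagVectFieldRegLag}.
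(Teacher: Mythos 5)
Your proposal is correct and follows essentially the same route as the paper's own proof: the paper likewise combines the uniqueness of the Euler--Lagrange vector field for regular $\Lag$ (Theorem~\ref{prop:Cap03_LagVectFieldRegLag}) with the one-to-one correspondence of Theorem~\ref{thm:Cap06_CorrX-XL} to transport existence, uniqueness, and the semispray-of-type-$1$ property back to $\W$. Your additional unpacking of the injectivity of the correspondence and the coordinate sanity check via Proposition~\ref{prop:Cap06_RegLag} are elaborations of, not departures from, that argument.
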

\begin{proof}
As $\Lag$ is regular, by Proposition \ref{prop:Cap03_LagVectFieldRegLag}
there is a unique semispray of type $1$, $X_\Lag \in \vf(\Tan^{2k-1} Q)$
which is a solution to equation (\ref{eqn:Cap06_EqLag}) on $\Tan^{2k-1}Q$.
Then, by Theorem \ref{thm:Cap06_CorrX-XL}, there is a unique $X \in \vf_{\W_o}(\W)$,
tangent to $\W_o$, which is a solution to (\ref{eqn:Cap06_EqDinSupW0}) on $\W_o$.
\end{proof}

\subsubsection{Dynamics in $\Tan^*(\Tan^{k-1}Q)$}

\paragraph{Hyperregular and regular Lagrangians}\

In order to recover the Hamiltonian formalism, we distinguish between the regular and non-regular cases.
We start with the regular case, although by simplicity we analyze the hyperregular case
(the regular case is recovered from this by restriction on the corresponding open sets
where the Legendre-Ostrogadsky map is a local diffeomorphism).
For this case we have the following commutative diagram
$$
\xymatrix{
\ & \Tan\W \ar@/_/[ddl]_-{\Tan\pr_1} \ar@/^/[ddr]^-{\Tan\pr_2} & \ \\
\ & \Tan\W_o \ar[dl]_-{\Tan\overline{\pr}_1}|<<<<{\hole} \ar[dr]^-{\Tan\overline{\pr}_2} \ar@{^{(}->}[u]_-{\Tan j_o} & \ \\
\Tan(\Tan^{2k-1} Q) & \ & \Tan(\Tan^*(\Tan^{k-1}Q)) \\
\ & \W \ar[ddl]_-{\pr_1} \ar[ddr]^-{\pr_2}|<<<<{\hole} \ar@/^1.95pc/[uuu]^(.35){X} & \ \\
\ & \W_o = {\rm graph}\,{\Leg} \ar[dl]^-{\overline{\pr}_1} \ar[dr]_-{\overline{\pr}_2} \ar@{^{(}->}[u]^-{j_o} 
\ar@/_1.75pc/[uuu]_-{X_o} & \ \\
\Tan^{2k-1}Q \ar[dr]_-{\rho^{2k-1}_{k-1}} \ar[rr]^-{\Leg} \ar[uuu]^-{X_\Lag} \ar@/_1.3pc/[ddr]_-{\beta^{2k-1}} 
& \ & \Tan^*(\Tan^{k-1}Q) \ar[dl]^{\pi_{\Tan^{k-1}Q}} \ar[uuu]_-{X_h} \\
\ & \Tan^{k-1}Q \ar[d]_-{\beta^{k-1}} & \ \\
\ & Q & \ \\
}
$$

\begin{teor}
\label{thm:Cap06_CorrX-Xh}
Let $\Lag \in \Cinfty(\Tan^kQ)$ be a hyperregular Lagrangian, $h \in \Cinfty(\Tan^*(\Tan^{k-1}Q))$
the Hamiltonian function such that $\Leg^*h = E_\Lag$, and
$X \in \vf_{\W_o}(\W)$ the vector field solution to the equation (\ref{eqn:Cap06_EqDinSupW0}),
tangent to $\W_o$. Then, there exists a unique vector field $X_h = \Leg_*X_\Lag \in \vf(\Tan^*(\Tan^{k-1}Q))$
which is a solution to the equation
\begin{equation}
\label{eqn:Cap06_EqHam}
\inn(X_h)\omega_{k-1} - \d h = 0
\end{equation}
Conversely, if $X_h \in \vf(\Tan^*(\Tan^{k-1}Q))$ is a solution to equation (\ref{eqn:Cap06_EqHam}),
then there exists a unique vector field $X \in \vf_{\W_o}(\W)$, tangent to $\W_o$,
which is a solution to equation (\ref{eqn:Cap06_EqDinSupW0}).
\end{teor}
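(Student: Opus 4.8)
The plan is to transport the already-established Lagrangian dynamics through the diffeomorphism $\Leg$. Recall that, since $\Lag$ is hyperregular, Theorem \ref{thm:Cap06_CorrX-XL} (together with Theorem \ref{prop:Cap03_LagVectFieldRegLag}) provides a unique semispray of type $1$, $X_\Lag \in \vf(\Tan^{2k-1}Q)$, solving the Lagrangian equation $\inn(X_\Lag)\omega_\Lag = \d E_\Lag$; and recall the two compatibility identities established in the Hamiltonian formalism, namely $\Leg^*\omega_{k-1} = \omega_\Lag$ and $\Leg^* h = E_\Lag$. Because $\Leg$ is a global diffeomorphism, the pushforward $X_h := \Leg_* X_\Lag$ is a well-defined vector field on $\Tan^*(\Tan^{k-1}Q)$, so the only thing to check is that it satisfies the Hamilton equation (\ref{eqn:Cap06_EqHam}).

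First I would reduce (\ref{eqn:Cap06_EqHam}) to an identity on $\Tan^{2k-1}Q$ by pulling back along $\Leg$. Since $X_\Lag$ and $X_h$ are $\Leg$-related by construction, the naturality of the interior product under a diffeomorphism gives $\Leg^*(\inn(X_h)\omega_{k-1}) = \inn(X_\Lag)(\Leg^*\omega_{k-1}) = \inn(X_\Lag)\omega_\Lag$, while $\Leg^*(\d h) = \d(\Leg^* h) = \d E_\Lag$. Hence
$$
\Leg^*\left(\inn(X_h)\omega_{k-1} - \d h\right) = \inn(X_\Lag)\omega_\Lag - \d E_\Lag = 0 \ ,
$$
the last equality being exactly the Lagrangian equation (\ref{eqn:Cap06_EqLag}) satisfied by $X_\Lag$. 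As $\Leg$ is a diffeomorphism, $\Leg^*$ is injective on forms, so $\inn(X_h)\omega_{k-1} - \d h = 0$, which is (\ref{eqn:Cap06_EqHam}). Uniqueness is then immediate: $\omega_{k-1}$ is symplectic, so the Hamilton equation has exactly one solution; alternatively, it follows from the uniqueness of $X_\Lag$ and the fact that $\Leg_*$ is a bijection on vector fields.

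For the converse I would run this argument backwards. Given $X_h$ solving (\ref{eqn:Cap06_EqHam}), set $X_\Lag := (\Leg^{-1})_* X_h$; pulling back (\ref{eqn:Cap06_EqHam}) by $\Leg$ as above shows $\inn(X_\Lag)\omega_\Lag - \d E_\Lag = 0$, and since $\Lag$ is regular this unique solution of the Lagrangian equation is automatically the semispray of type $1$ furnished by Theorem \ref{prop:Cap03_LagVectFieldRegLag}. Applying the converse part of Theorem \ref{thm:Cap06_CorrX-XL} then yields the unique $X \in \vf_{\W_o}(\W)$, tangent to $\W_o$, solving the unified equation (\ref{eqn:Cap06_EqDinSupW0}).

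I do not expect a serious obstacle here, since everything rests on $\Leg$ being a diffeomorphism together with the two intertwining relations $\Leg^*\omega_{k-1}=\omega_\Lag$ and $\Leg^* h = E_\Lag$. The only point requiring a little care is the naturality step $\Leg^*\inn(\Leg_* X_\Lag)\omega_{k-1} = \inn(X_\Lag)\Leg^*\omega_{k-1}$: this is the standard fact that the interior product commutes with pullback for $\Leg$-related fields, and it is precisely what makes the symplectomorphism $\Leg$ carry the Lagrangian dynamical equation onto the Hamilton equation.
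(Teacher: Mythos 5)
Your proof is correct, but it takes a genuinely different route from the paper's. You factor the argument through the Lagrangian level: invoke Theorem \ref{thm:Cap06_CorrX-XL} to pass from $X$ to the unique semispray of type $1$, $X_\Lag$, and then transport the Lagrangian equation to the Hamilton equation using the fact that $\Leg$ is a symplectomorphism ($\Leg^*\omega_{k-1}=\omega_\Lag$) intertwining the Hamiltonian and the energy ($\Leg^*h=E_\Lag$); this is the classical Lagrangian--Hamiltonian equivalence argument, and your uniqueness claim follows cleanly from the symplecticity of $\omega_{k-1}$. The paper instead never leaves the unified space: it observes that $\overline{\pr}_2=\Leg\circ\overline{\pr}_1$ is a diffeomorphism (so $X$, $X_o$ and $X_h$ correspond bijectively under pushforward), derives $H=\pr_2^*h$ from Lemma \ref{lemma:Cap06_LagEnergy} and $\Leg\circ\pr_1=\pr_2$, and then pulls the Hamilton equation back through $\pr_2$,
$$
0 = \restric{\left[\inn(X)\Omega-\d H\right]}{\W_o}
= \pr_2^*\restric{\left[\inn(X_h)\omega_{k-1}-\d h\right]}{\W_o}\ ,
$$
concluding by surjectivity of $\pr_2$ restricted to $\W_o$ (here hyperregularity guarantees $\pr_2(\W_o)=\Tan^*(\Tan^{k-1}Q)$). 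This mirrors, with $\pr_2$ in place of $\pr_1$, the proof of Theorem \ref{thm:Cap06_CorrX-XL}, and so exhibits the Hamiltonian dynamics as dropping directly out of the unified formalism by projection, symmetric to the Lagrangian recovery. Your version buys brevity and makes the formula $X_h=\Leg_*X_\Lag$ in the statement completely transparent; the paper's version buys conceptual self-containedness within the Skinner--Rusk picture, avoiding the Lagrangian formalism as an intermediary. One small point worth keeping in your write-up: your observation that, for regular $\Lag$, the \emph{unique} solution of the Lagrangian equation must coincide with the semispray of type $1$ furnished by Theorem \ref{prop:Cap03_LagVectFieldRegLag} is exactly the right way to justify applying the converse of Theorem \ref{thm:Cap06_CorrX-XL}, since that converse requires the semispray hypothesis.
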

\begin{proof}
If $\Lag$ is hiperregular, then $\overline{\pr}_2 = \Leg \circ \overline{\pr}_1$ is a diffeomorphism,
since it is a composition of diffeomorphisms; then there exists a unique vector field
$X_o \in \vf(\W_o)$ such that ${\overline{\pr}_2}_*X_o = X_h$,
and there is a unique $X \in \vf_{\W_o}(\W)$ such that ${j_o}_*X_o = \restric{X}{\W_o}$.

Now, as $\Leg^*h = E_\Lag$, by applying Lemma \ref{lemma:Cap06_LagEnergy}
we have that
$\pr_1^*(\Leg^*(h)) = \pr_1^*E_\Lag = H$; but
$\Leg \circ \pr_1 = \pr_2$, and then
$\pr_2^*h = H$.  Therefore, by the definition of $\Omega$, we have
 $$
0 =\restric{\left[\inn(X)\Omega - \d H\right]}{\W_o} = 
\restric{\left[\inn(X)\pr_2^*\omega_{k-1} - \d \pr_2^*h\right]}{\W_o} = 
\pr_2^*\restric{\left[\inn(X_h)\omega_{k-1} - \d h\right]}{\W_o} \ .
$$
However, as $\pr_2$ is a surjective submersion and $\pr_2(\W_o) = \Tan^*(\Tan^{k-1}Q)$,
we finally obtain that
$$
\\ 0 = \restric{\left[\inn(X_h)\omega_{k-1} - \d h\right]}{\pr_2(\W_o)}=
\restric{\left[\inn(X_h)\omega_{k-1} - \d h\right]}{\Tan^*(\Tan^{k-1}Q)}
$$
\end{proof}

\paragraph{Singular (almost-regular) Lagrangians} \

Remember that, for almost-regular Lagrangians,
only in the most interesting cases have we assured the existence of
a submanifold  $\W_f \hookrightarrow \W_o$ 
and vector fields $X \in \vf_{\W_0}(\W)$ tangent to $\W_f$
which are solutions to equation (\ref{eqn:Cap06_EqDinSupWf}).
In this case, the dynamical vector fields in the
Hamiltonian formalism cannot be obtained straightforwardly from
the solutions in the unified formalism, but rather by passing through the
Lagrangian formalism and using the Legendre-Ostrogadsky map.

Thus, we can consider the submanifolds
$S_f=\pr_1(\W_f)\hookrightarrow \Tan^{2k-1}Q$
and $P_f=\pr_2(\W_f)= \Leg(S_f)\hookrightarrow \Tan^*(\Tan^{k-1}Q)$.
Then, using Theorem \ref{thm:Cap06_CorrX-XL}, from the vector fields
$X \in \vf_{\W_o}(\W)$ we obtain the corresponding
$X_\Lag \in \vf(\Tan^{2k-1}Q)$, and from these the
semisprays of type $1$ (if they exist) which are perhaps defined
on a submanifold $M_f\hookrightarrow S_f$, are tangent to $M_f$
and are solutions to equation (\ref{finaleq}).
So we have the following commutative diagram
$$
\xymatrix{
\ & \ & \W \ar@/_1.25pc/[dddl]_-{\pr_1} \ar@/^1.25pc/[dddr]^-{\pr_2} & \ \\
\ & \ & \ & \ \\
\ & \ & \W_P = \Tan^{2k-1}Q \times_{\Tan^{k-1}Q} P_o
 \ar@{^{(}->}[uu]^-{j_{\W_P}} \ar[dl]_-{\pr_{1,\W_P}} \ar[dr]^-{\pr_{2,\W_P}} \ar[ddr]_-{\pr_{2,P_o}}  & \ \\
\ & \Tan^{2k-1}Q & \ & \Tan^*(\Tan^{k-1}Q) \\
\ & \ & \W_o = {\rm graph}\,(\Leg_o) \ar@{^{(}->}[uu]^-{j_o} \ar[ul]_-{\overline{\pr}_{1,P_o}} \ar[r]^-{\overline{\pr}_{2,P_o}} 
& P_o \ar@{^{(}->}[u]^{j_{P_o}}  \\
\ & \ & \W_f \ar@{^{(}->}[u]^-{j_{\W_f}} \ar[dl] \ar[dr] & \ \\
M_f \ar@{^{(}->}[r] & S_f \ar@{^{(}->}[uuu]^-{j_{S_f}} & \ & P_f \ar@{^{(}->}@/_1.25pc/[uuu]_-{j_{P_f}}
}
$$

Now, it is proved (\cite{art:Gracia_Pons_Roman92}) that there are Euler-Lagrange
vector fields (perhaps only on the points of another
submanifold $\bar M_f\hookrightarrow M_f$), which are $\Leg$-projectable
on $P_f = \Leg(S_f) \hookrightarrow P_o \hookrightarrow \Tan^*(\Tan^{k-1}Q)$.
These vector fields $X_{h_o}=\Leg_*X_{\Lag} \in \vf(\Tan^*(\Tan^{k-1}Q))$
are tangent to $P_f$ and are solutions to the equation
$$
\restric{\left[ i_{X_{h_o}}\omega_o - \d h_o \right]}{P_f} = 0 \ .
$$

Conversely, as $\Leg_o$ is a submersion, for every solution 
$X_{h_o} \in \vf(\Tan^*(\Tan^{k-1}Q))$ to the last equation,
 there is a semispray of type $1$, 
$X_\Lag \in \vf(\Tan^{2k-1}Q)$, such that ${\Leg_o}_*X_\Lag = X_{h_o}$,
and we can recover solutions to equation (\ref{eqn:Cap06_EqDinSupWf})
using Theorem \ref{thm:Cap06_CorrX-XL}.

\subsection{Integral curves}

After studying the vector fields which are solutions to the dynamical equations,
we analyze their integral curves, showing how to recover the
Lagrangian  and Hamiltonian dynamical trajectories from the
dynamical trajectories in the unified formalism.

Let $X \in \vf_{\W_o}(\W)$ be a vector field tangent to $\W_o$
which is a solution to equation (\ref{eqn:Cap06_EqDinSupW0}), 
and let $\sigma \colon I \subset \R \to \W$ be an integral curve of $X$,
on $\W_o$. As $\tilde{\sigma} = X \circ \sigma$, this means that the following equation holds
\begin{equation}\label{eqn:Cap06_EqDinImpCI}
\inn({\tilde{\sigma}})(\Omega \circ \sigma) = \d H \circ \sigma \ .
\end{equation}
Furthermore, if $\sigma_o \colon I \to \W_o$ is a curve on $\W_o$ such that $j_o \circ \sigma_o = \sigma$,
we have that $\sigma_o$ is an integral curve of the vector field $X_o\in\vf(\W_o)$ associated to $X$,
and $\tilde{\sigma}_o = X_o \circ \sigma_o$.

In local coordinates, if $\sigma(t) = (q_i^A(t),p^j_A(t))$, we have that
\begin{align*}
\dot{q}_i^A(t) = q_{i+1}^A\circ\sigma  & \quad (0 \leqslant i \leqslant k-1) \quad  ; \quad&
\dot{q}_i^A(t) = F_i^A\circ\sigma  \quad (k \leqslant i \leqslant 2k-1) \\
\dot{p}^0_A(t) = \derpar{\Lag}{q_0^A}\circ\sigma  & \qquad \qquad\qquad\qquad\ ; \quad &
\dot{p}^i_A(t) = d_T(p^i_A)\circ\sigma  \quad (1 \leqslant i \leqslant k-1) \ ,
\end{align*}
where $F_i^A$ are solutions to equations (\ref{eqn:Cap06_TanVectFieldX}).

Now, for the Lagrangian dynamical trajectories we have the following result:

\begin{prop}
\label{prop:Cap06_XLCI}
Let $\sigma \colon I \subset\R \to \W$ be an integral curve of a vector field $X$
solution to (\ref{eqn:Cap06_EqDinSupW0}), on $\W_o$.
Then the curve $\sigma_\Lag = \pr_1 \circ \sigma \colon I \to \Tan^{2k-1}Q$ 
is an integral curve of $X_\Lag$.
\end{prop}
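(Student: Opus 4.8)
The plan is to reduce the claim to the defining relation of $X_\Lag$ supplied by Lemma \ref{lemma:Cap06_LagVectField}, combined with the elementary chain rule for the velocity (tangent lift) of a curve under a smooth map. Recall that if $f\colon M \to N$ is smooth and $c\colon I \to M$ is a curve, then $\widetilde{f\circ c} = \Tan f \circ \tilde{c}$, where $\tilde{c}$ denotes the velocity of $c$.

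First I would exploit that $\sigma$ is an integral curve of $X$ lying on $\W_o$ and that $X$ is tangent to $\W_o$. Since $j_o\colon \W_o \hookrightarrow \W$ is the natural embedding and $\sigma$ takes values in $\W_o = {\rm Im}\,(j_o)$, the curve factors as $\sigma = j_o \circ \sigma_o$, with $\sigma_o = j_o^{-1}\circ\sigma \colon I \to \W_o$ a well-defined smooth curve (in fact $\sigma_o$ is an integral curve of the associated field $X_o \in \vf(\W_o)$, though only the factorization is needed). Consequently $\sigma_\Lag = \pr_1 \circ \sigma = \pr_1 \circ j_o \circ \sigma_o = \overline{\pr}_1 \circ \sigma_o$.

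Next I would compute the velocity of $\sigma_\Lag$. By the chain rule above, $\tilde{\sigma}_\Lag = \widetilde{\pr_1 \circ \sigma} = \Tan\pr_1 \circ \tilde{\sigma}$. As $\sigma$ is an integral curve of $X$, we have $\tilde{\sigma} = X \circ \sigma = X \circ j_o \circ \sigma_o$, so $\tilde{\sigma}_\Lag = \Tan\pr_1 \circ X \circ j_o \circ \sigma_o$. Now Lemma \ref{lemma:Cap06_LagVectField} gives precisely $\Tan\pr_1 \circ X \circ j_o = X_\Lag \circ \pr_1 \circ j_o = X_\Lag \circ \overline{\pr}_1$, and substituting yields $\tilde{\sigma}_\Lag = X_\Lag \circ \overline{\pr}_1 \circ \sigma_o = X_\Lag \circ \sigma_\Lag$, which is exactly the statement that $\sigma_\Lag$ is an integral curve of $X_\Lag$.

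The only point requiring care — rather than a genuine obstacle — is the factorization $\sigma = j_o \circ \sigma_o$ with $\sigma_o$ smooth; this is guaranteed by the hypothesis that $\sigma$ lives on $\W_o$ together with $j_o$ being an embedding, and is consistent with $X$ being tangent to $\W_o$ (so that integral curves issuing from $\W_o$ remain there). Everything else is a direct application of the chain rule and of the intertwining relation defining $X_\Lag$, so the argument is short.
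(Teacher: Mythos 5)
Your proof is correct and follows essentially the same route as the paper's: factor $\sigma = j_o \circ \sigma_o$ through $\W_o$, apply the chain rule to the tangent lift, and close the computation with the intertwining relation of Lemma \ref{lemma:Cap06_LagVectField}. The only, immaterial, difference is that you use the integral-curve condition $\tilde{\sigma} = X \circ \sigma$ directly in $\W$, whereas the paper passes through the induced field $X_o \in \vf(\W_o)$ via $\tilde{\sigma}_o = X_o \circ \sigma_o$ and $\Tan j_o \circ X_o = X \circ j_o$; both are valid instantiations of the same argument.
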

\begin{proof}
As $\sigma = j_o \circ \sigma_o$,
using that $\Tan j_o \circ X_o = X \circ j_o$ and that
$\Tan\pr_1 \circ X = X_\Lag \circ \pr_1$, we have
\beann
\tilde{\sigma}_\Lag &=& \widetilde{\pr_1 \circ \sigma} = \widetilde{\pr_1 \circ j_o \circ \sigma_o}=
 \Tan\pr_1 \circ \Tan j_o \circ \tilde{\sigma}_o = \Tan\pr_1 \circ \Tan j_o \circ X_o \circ \sigma_o 
  \\ &=&
   \Tan \pr_1 \circ X \circ j_o \circ \sigma_o = X_\Lag \circ \pr_1 \circ j_o \circ \sigma_o = X_\Lag \circ \sigma_\Lag \ .
\eeann
\end{proof}

\begin{corol}
If $\Lag \in \Cinfty(\Tan^kQ)$ is a regular Lagrangian, then the curve
$\sigma_\Lag = \pr_1 \circ \sigma \colon I \to \Tan^{2k-1}Q$ is the canonical lifting of a curve on $Q$;
 that is, there exists $\gamma \colon I \subset \R \to Q$ such that $\sigma_\Lag = \tilde{\gamma}^{2k-1}$.
\end{corol}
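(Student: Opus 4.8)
The plan is to obtain the statement as a direct synthesis of the two immediately preceding results, with regularity entering at exactly one point. First I would invoke Proposition~\ref{prop:Cap06_XLCI}, which already establishes that the projected curve $\sigma_\Lag = \pr_1 \circ \sigma \colon I \to \Tan^{2k-1}Q$ is an integral curve of the vector field $X_\Lag \in \vf(\Tan^{2k-1}Q)$ associated to $X$. So the only thing that remains is to identify the special shape that $\sigma_\Lag$ must have once $\Lag$ is regular.

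Next, I would use that $\Lag$ is regular to upgrade the type of $X_\Lag$. By Theorem~\ref{thm:Cap06_CorrX-XL} (equivalently, by Corollary~\ref{corol:Cap06_RegLag}), in the regular case the tangency equations~(\ref{eqn:Cap06_TanVectFieldX}) collapse to~(\ref{eqn:Cap06_LagEqReg}), forcing $J_1(X_\Lag) = \Delta_1$, so that $X_\Lag$ is not merely a semispray of type $k$ but in fact a semispray of type $1$ in $\Tan^{2k-1}Q$. This is the sole place where the regularity hypothesis is needed; for a singular Lagrangian one would only get a semispray of type $k$, and the conclusion of the corollary would fail.

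Finally I would apply the definition of a semispray of type $1$ on a higher-order tangent bundle, read with the base bundle $\Tan^{2k-1}Q$ (i.e.\ the notion for a $(2k-1)$-th order tangent bundle). Setting $\gamma = \beta^{2k-1} \circ \sigma_\Lag \colon I \to Q$, the defining property of a type-$1$ semispray states precisely that any integral curve equals the canonical lifting of its base projection, so $\sigma_\Lag = \tilde{\gamma}^{2k-1}$, which is the claimed statement. I do not expect any genuine obstacle here: the mathematical content is entirely carried by Proposition~\ref{prop:Cap06_XLCI} and the regularity clause of Theorem~\ref{thm:Cap06_CorrX-XL}, and the only point deserving minor care is the index bookkeeping—``semispray of type $1$'' must be understood on $\Tan^{2k-1}Q$, whose type-$1$ integral curves are by definition the liftings $\tilde{\gamma}^{2k-1}$ of curves on $Q$.
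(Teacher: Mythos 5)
Your proposal is correct and follows exactly the paper's own route: the paper proves this corollary by citing Proposition~\ref{prop:Cap06_XLCI} (to get that $\sigma_\Lag$ is an integral curve of $X_\Lag$) together with Theorem~\ref{thm:Cap06_CorrX-XL} (to get that $X_\Lag$ is a semispray of type $1$ when $\Lag$ is regular), which is precisely your argument. You merely spell out the final step—unwinding the definition of a type-$1$ semispray on $\Tan^{2k-1}Q$—which the paper leaves implicit as ``straightforward''.
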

\begin{proof}
It is a straighforward consequence of Proposition \ref{prop:Cap06_XLCI}
and Theorem \ref{thm:Cap06_CorrX-XL}.
\end{proof}

And for the Hamiltonian trajectories, we have:

\begin{prop}
\label{prop:Cap06_XhCI}
Let $\sigma \colon I \subset\R \to \W$ be an integral curve of a vector field $X$
solution to (\ref{eqn:Cap06_EqDinSupW0}), on $\W_o$.
Then the curve $\sigma_h = \Leg \circ \sigma_\Lag \colon I \to \Tan^*(\Tan^{k-1}Q)$ 
is an integral curve of $X_h = \Leg_*(X_\Lag)$.
\end{prop}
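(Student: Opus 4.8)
The plan is to recognize this as a special case of the general principle that a smooth map carries integral curves of a vector field to integral curves of any vector field it is related to, and then to reduce the claim to a short chain-rule computation on canonical lifts of curves, exactly mirroring the argument already used in the proof of Proposition \ref{prop:Cap06_XLCI}.

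First I would recall the two ingredients that are already available. By Proposition \ref{prop:Cap06_XLCI}, the curve $\sigma_\Lag = \pr_1 \circ \sigma$ is an integral curve of $X_\Lag$, which in lifted form reads $\tilde{\sigma}_\Lag = X_\Lag \circ \sigma_\Lag$. Next, the vector field $X_h$ is defined in the statement (and produced in Theorem \ref{thm:Cap06_CorrX-Xh}) precisely as $X_h = \Leg_* X_\Lag$; in the hyperregular case $\Leg$ is a diffeomorphism, so this push-forward is well-defined and simply encodes that $X_\Lag$ and $X_h$ are $\Leg$-related, that is $\Tan\Leg \circ X_\Lag = X_h \circ \Leg$.

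The core of the proof is then a single computation combining these facts. Using the functoriality of the canonical lift, namely $\widetilde{\Leg \circ \sigma_\Lag} = \Tan\Leg \circ \tilde{\sigma}_\Lag$, I would write
$$
\tilde{\sigma}_h = \widetilde{\Leg \circ \sigma_\Lag} = \Tan\Leg \circ \tilde{\sigma}_\Lag = \Tan\Leg \circ X_\Lag \circ \sigma_\Lag = X_h \circ \Leg \circ \sigma_\Lag = X_h \circ \sigma_h \ ,
$$
where the third equality uses Proposition \ref{prop:Cap06_XLCI}, the fourth uses the $\Leg$-relatedness $\Tan\Leg \circ X_\Lag = X_h \circ \Leg$, and the last uses the definition $\sigma_h = \Leg \circ \sigma_\Lag$. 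Since $\tilde{\sigma}_h = X_h \circ \sigma_h$ is exactly the condition for $\sigma_h$ to be an integral curve of $X_h$, the claim follows.

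I do not expect a genuine obstacle here, as the result is a formal consequence of relatedness; the only point that requires care is that the push-forward $\Leg_* X_\Lag$ be meaningful, which is guaranteed in the hyperregular (and, by restriction to the appropriate open sets, regular) setting in which $X_h$ was constructed. For the almost-regular case the same reasoning applies verbatim after replacing $\Leg$ by $\Leg_o$ and restricting to the relevant constraint submanifold, invoking $\Leg_o$-relatedness of $X_\Lag$ and $X_{h_o}$ in place of invertibility.
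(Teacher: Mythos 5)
Your proof is correct and coincides with the paper's own argument: both establish $\tilde{\sigma}_h = X_h \circ \sigma_h$ through the identical chain of equalities, using Proposition \ref{prop:Cap06_XLCI}, the functoriality of the canonical lift, and the $\Leg$-relatedness $\Tan\Leg \circ X_\Lag = X_h \circ \Leg$. Your closing remarks on well-definedness of the push-forward and the almost-regular case are sensible additions but do not change the substance of the argument.
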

\begin{proof}
Given that $\sigma_\Lag$ is an integral curve of $X_\Lag$, Proposition \ref{prop:Cap06_XLCI},
and the definitions of $X_h$ and $\sigma_h$, we obtain
$$
\tilde{\sigma}_h = \widetilde{\Leg \circ \sigma_\Lag} = \Tan\Leg \circ \tilde{\sigma}_\Lag
= \Tan\Leg \circ X_\Lag \circ \sigma_\Lag = X_h \circ \Leg \circ \sigma_\Lag = X_h \circ \sigma_h \ .
$$
Thus $\sigma_h$ is an integral curve of $X_h$.
\end{proof}

The relation among all these integral curves is summarized in the following diagram
$$
\xymatrix{
\ & \ & \W \ar[ddll]_-{\pr_1} \ar[ddrr]^-{\pr_2} & \ & \ \\
\ & \ & \W_o \ar[dll]_(.45){\overline{\pr}_1}|(.19){\hole} \ar[drr]^-{\overline{\pr}_2} \ar@{^{(}->}[u]_-{j_o} & \ & \ \\
\Tan^{2k-1}Q \ar[ddrr]_-{\rho^{2k-1}_{k-1}} 
\ar@/_1.3pc/[dddrr]_-{\beta^{2k-1}} \ar[rrrr]^-{\Leg}|(.39){\hole}|(.56){\hole} & \ & \ & \ & \Tan^*(\Tan^{k-1}Q) \ar[ddll]^-{\pi_{\Tan^{k-1}Q}} \\
\ & \ & \R \ar@/^1.9pc/[dd]^-{\gamma}|(.31){\hole} \ar@/_1.3pc/[uu]_(.65){\sigma_o} 
\ar@/^1.65pc/[uuu]^(.45){\sigma} \ar[ull]_-{\sigma_\Lag} \ar[urr]^-{\sigma_h} & \ & \ \\
\ & \ & \Tan^{k-1}Q \ar[d]_-{\beta^{k-1}} & \ & \ \\
\ & \ & Q & \ & \
}
$$

{\bf Remark}:
Observe that in Propositions \ref{prop:Cap06_XLCI} and \ref{prop:Cap06_XhCI}
 we make no assumption
on the regularity of the system.
The only considerations in the almost-regular case are that, in general,
the curves are defined in some submanifolds which are determined by the constraint algorithm,
and that $\sigma_\Lag$ is not necessarily the lifting of any curve in $Q$ and this condition
must be imposed. In particular:
\begin{itemize}
\item
If the Lagrangian is regular (or hiperregular), then
${\rm Im}\,(\sigma) \subset \W_o$, ${\rm Im}\,(\sigma_\Lag) \subset \Tan^{2k-1} Q$ 
and ${\rm Im}\, (\sigma_h) \subset \Tan^*(\Tan^{k-1}Q)$.
\item
If the Lagrangian is almost-regular, then
 ${\rm Im}\,(\sigma) \subset \W_f \hookrightarrow \W_o$, 
 ${\rm Im}\,(\sigma_\Lag) \subset S_f \hookrightarrow \Tan^{2k-1} Q$ 
 and ${\rm Im}\,(\sigma_h) \subset P_f \hookrightarrow P_o \hookrightarrow \Tan^*(\Tan^{k-1}Q)$.
\end{itemize}

\section{Examples}
\label{section:examples}

\subsection{The Pais-Uhlenbeck oscillator}

The Pais-Uhlenbeck oscillator is one of the simplest
(regular) systems that can be used to explore the features of higher order
dynamical systems, and has been analyzed in detail in many publications
\cite{art:Pais_Uhlenbeck50,art:Martinez_Montemayor_Urrutia11}.
Here we study it using the unified formalism.

The configuration space for this system is a $1$-dimensional smooth manifold $Q$
with local coordinate $(q_0)$. Taking natural coordinates
in the higher-order tangent bundles over $Q$, the second-order Lagrangian function
$\Lag \in \Cinfty(\Tan^2Q)$ for this system is locally given  by
$$
\Lag(q_0,q_1,q_2) = \frac{1}{2} \left( q_1^2 - \omega^2q_0^2 - \gamma q_2^2 \right) 
$$
where $\gamma$ is some nonzero real constant, and $\omega$ is a real constant.
$\Lag$ is a regular Lagrangian function, since the Hessian matrix of $\Lag$
with respect to $q_2$ is
$$
\left( \derpars{\Lag}{q_2}{q_2} \right) =- \gamma
$$
which has maximum rank, since we assume that $\gamma$ is nonzero.
Notice that, if we take $\gamma = 0$, then $\Lag$ becomes a first-order
regular Lagrangian function, and thus it is a nonsense to study this system using
the higher-order unified formalism.

As this is a second-order dynamical system, the phase space that we consider is
$$
\xymatrix{
\ & \W = \Tan^3Q \times_{\Tan Q} \Tan^*(\Tan Q) \ar[dl]_-{\pr_1} \ar[dr]^-{\pr_2} & \ \\
\Tan^3Q \ar[dr]_-{\rho^{3}_{1}} & \ & \Tan^*(\Tan Q) \ar[dl]^-{\pi_{\Tan Q}} \\
\ & \Tan Q & \  .
}
$$

Denoting the canonical symplectic
form by $\omega_1 \in \df^2(\Tan^*(\Tan Q))$,
we define the presymplectic form $\Omega \in \pr_2^*\omega_1 \in \df^2(\W)$
with the local expression
$$
\Omega = \d q_0 \wedge \d p^0 + \d q_1 \wedge \d p^1\ ,
$$
The Hamiltonian function $H \in \Cinfty(\W)$
in the unified formalism is $H = \C - (\rho^3_2 \circ \pr_1)^*\Lag$, where
$\C$ is the coupling function, whose local expression is
$$
\C(q_0,q_1,q_2,q_3,p^0,p^1) = p^0q_1 + p^1q_2 \ .
$$
and then the Hamiltonian function can be written locally
\begin{equation*}
H(q_0,q_1,q_2,q_3,p^0,p^1) = p^0q_1 + p^1q_2 - \frac{1}{2} \left( q_1^2 - \omega^2q_0^2 - \gamma q_2^2 \right)
\end{equation*}

As stated in the above sections, we can describe the dynamics for this
system in terms of the integral curves of vector fields $X \in \vf(\W)$ which are
solutions to equation (\ref{eqn:Cap06_EqDinImp}). If we take a generic vector field $X$
in $\W$, given locally by
$$
X = f_0 \derpar{}{q_0} + f_1 \derpar{}{q_1} + F_2 \derpar{}{q_2} + F_3 \derpar{}{q_3} +
 G^0\derpar{}{p^0} + G^1\derpar{}{p^1},
$$
taking into acount that
$$
\d H = \omega^2q_0\d q_0 + (p^0-q_1)\d q_1 + (p^1 + \gamma q_2)\d q_2 + q_1 \d p^0 + q_2 \d p^1 \ ,
$$
from the dynamical equation $\inn(X)\Omega = \d H$, we obtain the following system of linear
equations for the coefficients of $X$
\begin{align}
& f_0 = q_1 \label{eqn:Example1_Semispray2_1} \\
& f_1 = q_2 \label{eqn:Example1_Semispray2_2} \\
& G^0 = - \omega^2 q_0 \label{eqn:Example1_VectorFieldG0} \\
& G^1 = q_1 - p^0 \label{eqn:Example1_VectorFieldG1} \\
& p^1 + \gamma q_2 = 0 \label{eqn:Example1_LegTransformation}
\end{align}
Equations (\ref{eqn:Example1_Semispray2_1}) and (\ref{eqn:Example1_Semispray2_2})
give us the condition of semispray of type $2$ for the vector field $X$.
Furthermore, equation (\ref{eqn:Example1_LegTransformation}) is an algebraic
relation stating that the vector field
$X$ is defined along a submanifold $\W_o$ that can be identified with the graph
of the Legendre-Ostrogradsky map, as we have seen in Propositions
\ref{prop:Cap06_ExistSolEqDin} and \ref{prop:Cap06_W0GrafFL}. Thus, using
(\ref{eqn:Example1_Semispray2_1}), (\ref{eqn:Example1_Semispray2_2}),
(\ref{eqn:Example1_VectorFieldG0}) and (\ref{eqn:Example1_VectorFieldG1}),
the vector field $X$ is given locally by
\begin{equation}
\label{eqn:Example1_VectorFieldX}
X = q_1 \derpar{}{q_0} + q_2 \derpar{}{q_1} + F_2 \derpar{}{q_2} + F_3 \derpar{}{q_3} -
 \omega^2q_0\derpar{}{p^0} + \left(q_1 - p^0\right) \derpar{}{p_1} \ .
\end{equation}

As our goal is to recover the Lagrangian and Hamiltonian solutions
from the vector field $X$, we must require $X$ to be a semispray of
type $1$. Nevertheless, as $\Lag$ is a regular Lagrangian function, this
condition is naturally deduced from the formalism, as we have seen in
(\ref{eqn:Cap06_TanVectFieldX}).

Notice that the functions $F_2$ and $F_3$ in (\ref{eqn:Example1_VectorFieldX})
are not determined until the tangency of the vector field $X$ on $\W_o$ is
required. Recall that the Legendre-Ostrogradsky transformation is the map
$\Leg \colon \Tan^3Q \longrightarrow \Tan^*(\Tan Q)$ given in local coordinates by
\beann
\Leg^*p^0 &=&
\derpar{\Lag}{q_1} - d_T\left(\derpar{\Lag}{q_2}\right) \equiv \derpar{\Lag}{q_1} - d_T\left( p^1 \right) = 
q_1 + \gamma q_3 \\ 
\Leg^*p^1 &=& \derpar{\Lag}{q_2} = - \gamma q_2
\eeann
and, as $\gamma \neq 0$, we see that $\Lag$ is a regular Lagrangian since
$\Leg$ is a (local) diffeomorphism. Then, the submanifold
$\W_o = {\rm graph}\,\Leg$ is defined by
$$
\W_o = \left\{ p \in \W \colon \xi_0(p) = \xi_1(p) = 0 \right\} \ ,
$$
where $\xi_r = p^r - \Leg^*p^r$, $r=1,2$. The diagram for this situation is
$$
\xymatrix{
\ & \W \ar@/_1.25pc/[dddl]_-{\pr_1} \ar@/^1.25pc/[dddr]^-{\pr_2} & \ \\
\ & \ & \ \\
\ & \W_o = {\rm graph}\,\Leg \ar@{^{(}->}[uu]^-{j_o} \ar[dl]_-
{\overline{\pr}_{1}} \ar[dr]^-{\overline{\pr}_{2}} & \ \\
\Tan^3Q \ar@{-->}[rr]^-{\Leg} & \ & \Tan^*(\Tan Q) \ .
}
$$

Next we compute the tangency condition for $X \in \vf(\W)$ given locally by
(\ref{eqn:Example1_VectorFieldX}) on the submanifold $\W_o \hookrightarrow \W$,
by checking if the following identities hold
$$
\Lie(X)\xi_0\vert_{\W_o} = 0  \quad , \quad \Lie(X)\xi_1\vert_{\W_o} = 0 \ .
$$
As we have seen in Section \ref{section:DynamicsW}, these equations give us the
Lagrangian equations for the vector field $X$; that is, on the points of $\W_o$ we obtain
\begin{align}
\Lie(X)\xi_0 = - \omega^2 q_0 - q_2 - \gamma F_3 = 0
 \label{eqn:Example1_EulerLagrangeVectFieldEq} \\
\Lie(X)\xi_1 = \gamma\left(F_2 - q_3\right) = 0  \ .
\label{eqn:Example1_Semispray1}
\end{align}
Equation (\ref{eqn:Example1_Semispray1}) gives us the condition of semispray
of type $1$ for the vector field $X$ (recall that $\gamma \neq 0$), and equation
(\ref{eqn:Example1_EulerLagrangeVectFieldEq}) is the Euler-Lagrange equation for
the vector field $X$. Notice that, as $\gamma$ is nonzero, these equations
give us a unique solution for $F_2$ and $F_3$.
Thus, there is a unique vector field $X \in \vf(\W)$ solution to the equation
$\restric{\left[ \inn(X)\Omega - \d H \right]}{\W_o} = 0$
which is tangent to the submanifold $\W_o \hookrightarrow \W$, and it is given locally by
$$
X = q_1 \derpar{}{q_0} + q_2 \derpar{}{q_1} + q_3 \derpar{}{q_2} - 
\frac{1}{\gamma}\left(\omega^2q_0 + q_2\right) \derpar{}{q_3} - \omega^2q_0\derpar{}{p^0} +
 \left(q_1 - p^0\right) \derpar{}{p_1} \ .
$$
Then, if $\sigma \colon \R \to \W$ is an integral curve of $X$ locally given by
\begin{equation}
\label{eqn:Example1_LocalCoordSigma}
\sigma(t) = \left(q_0(t),q_1(t),q_2(t),q_3(t),p^0(t),p^1(t)\right) \ ,
\end{equation}
and its component functions are solutions to the system
\begin{align}
& \dot{q}_0(t) = q_1(t); \label{eqn:Example1_IntCurveX_Semispray1_1} \\
& \dot{q}_1(t) = q_2(t); \label{eqn:Example1_IntCurveX_Semispray1_2} \\
& \dot{q}_2(t) = q_3(t); \label{eqn:Example1_IntCurveX_Semispray1_3} \\
& \dot{q}_3(t) = -\frac{1}{\gamma}\left(\omega^2q_0(t) + q_2(t)\right);
 \label{eqn:Example1_IntCurveX_EulerLagrange} \\
& \dot{p}^0(t) =  - \omega^2q_0(t); \label{eqn:Example1_IntCurveX_Hamiltonian1} \\
& \dot{p}^1(t) = q_1(t) - p^0(t). \label{eqn:Example1_IntCurveX_Hamiltonian2}
\end{align}

Finally we recover the Lagrangian and Hamiltonian solutions for this system.
For the Lagrangian solutions, as we
have shown in Lemma \ref{lemma:Cap06_LagVectField} and
Theorem \ref{thm:Cap06_CorrX-XL}, the Euler-Lagrange vector field is the unique
semispray of type $1$, $X_\Lag \in \vf(\Tan^3Q)$, such that
$X_\Lag \circ \pr_1 \circ j_o = \Tan\pr_1 \circ X \circ j_o$.
Thus this vector field $X_\Lag$ is locally given by
$$
X_\Lag = q_1 \derpar{}{q_0} + q_2 \derpar{}{q_1} + q_3 \derpar{}{q_2} - 
\frac{1}{\gamma}\left(\omega^2q_0 + q_2\right) \derpar{}{q_3} \ .
$$
For the integral curves of $X_\Lag$ we know from Proposition \ref{prop:Cap06_XLCI}
that if $\sigma \colon \R \to \W$ is an integral curve of $X$, then
$\sigma_\Lag = \pr_1 \circ \sigma$ is an integral curve of $X_\Lag$.
Thus, if $\sigma$ is given locally by (\ref{eqn:Example1_LocalCoordSigma}), then
$\sigma_\Lag$ has the following local expression
\begin{equation}\label{eqn:Example1_LocalCoordSigmaL}
\sigma_\Lag(t) = \left(q_0(t),q_1(t),q_2(t),q_3(t)\right) \ ,
\end{equation}
and its components satisfy equations (\ref{eqn:Example1_IntCurveX_Semispray1_1}),
(\ref{eqn:Example1_IntCurveX_Semispray1_2}), (\ref{eqn:Example1_IntCurveX_Semispray1_3})
and (\ref{eqn:Example1_IntCurveX_EulerLagrange}).
Notice that equations  (\ref{eqn:Example1_IntCurveX_Semispray1_1}),
(\ref{eqn:Example1_IntCurveX_Semispray1_2}) and
(\ref{eqn:Example1_IntCurveX_Semispray1_3})
state that $\sigma_\Lag$ is the canonical lifting of a curve in the basis, that is,
there exists a curve $\gamma \colon \R \to Q$ such that
$\tilde\gamma^3 = \sigma_\Lag$. Furthermore, equation
(\ref{eqn:Example1_IntCurveX_EulerLagrange}) is the Euler-Lagrange equation
for this system.

Now, for the Hamiltonian solutions, as $\Lag$ is a regular Lagrangian,
Theorem \ref{thm:Cap06_CorrX-Xh} states that there exists a unique
vector field $X_h = \Leg_*X_\Lag \in \vf(\Tan^*(\Tan Q))$ which is a solution to
the Hamilton equation. Hence, it is given locally by
\begin{equation*}
X_h = q_1 \derpar{}{q_0} + q_2 \derpar{}{q_1} - \omega^2q_0\derpar{}{p^0} + \left(q_1 - p^0\right) \derpar{}{p_1}
\end{equation*}
For the integral curves of $X_h$, Proposition \ref{prop:Cap06_XhCI} states that
if $\sigma_\Lag \colon \R \to \Tan^3Q$ is an integral curve of $X_\Lag$ coming
from an integral curve $\sigma$ of $X$, then $\sigma_h = \Leg \circ \sigma_\Lag$
is an integral curve of the vector field $X_h$. Therefore, if $\sigma$ is given
locally by (\ref{eqn:Example1_LocalCoordSigma}), then $\sigma_\Lag$ is given
by (\ref{eqn:Example1_LocalCoordSigmaL}) and so $\sigma_h$ can be locally
written
$$
\sigma_h(t) = \left(q_0(t),q_1(t),p^0(t),p^1(t)\right) \ ,
$$
and its components must satisfy equations (\ref{eqn:Example1_IntCurveX_Semispray1_1}),
(\ref{eqn:Example1_IntCurveX_Semispray1_2}), (\ref{eqn:Example1_IntCurveX_Hamiltonian1})
and (\ref{eqn:Example1_IntCurveX_Hamiltonian2}).
Notice that these equations are the standard Hamilton equations for this system.

\subsection{The second-order relativistic particle}

Let us consider a relativistic particle whose action is proportional to its extrinsic curvature. 
 This system was analyzed in \cite{art:Plyushchay88,art:Pisarski86,art:Batlle_Gomis_Pons_Roman88,art:Nesterenko89},
  and here we study it using the Lagrangian-Hamiltonian unified  formalism.
 
The configuration space is a $n$-dimensional smooth manifold $Q$
with local coordinates $(q_0^A)$, $1 \leqslant A \leqslant n$. 
Then, if we take the natural set of coordinates on the higher-order tangent bundles over $Q$,
the second-order Lagrangian function for this system,
$\Lag \in \Cinfty(\Tan^2Q)$, can be written locally as
\begin{equation}
\Lag(q_0^i,q_1^i,q_2^i) = \frac{\alpha}{(q_1^i)^2} \left[ (q_1^i)^2(q_2^i)^2 - (q_1^iq_2^i)^2 \right]^{1/2} 
\equiv \frac{\alpha}{(q_1^i)^2} \sqrt{g}  \ .
 \label{eqn:Example_Lagrangian}
\end{equation}
where $\alpha$ is some nonzero constant. It is a singular Lagrangian, as we
can see by computing the Hessian matrix of $\Lag$ with respect to $q_2^A$,
which is
$$
\left( \frac{\partial^2\Lag}{\partial q_2^B\partial q_2^A} \right) = \begin{cases}
\displaystyle \frac{\alpha}{2(q_1^i)^2\sqrt{g^3}} \left[ \left((q_1^iq_2^i)^2 - 2(q_1^i)^2(q_2^i)^2 \right)q_1^Bq_1^A \right. 
+ (q_1^i)^2(q_1^iq_2^i)(q_2^Bq_1^A-q_1^Bq_2^A)  & 
 \\
\displaystyle \qquad\qquad\quad \left. - (q_1^i)^2(q_2^i)^2q_2^Bq_2^A \right]
& \mbox{ if } B \neq A  
\\
\displaystyle \frac{\alpha}{\sqrt{g^3}}\left[ g - 
(q_2^i)^2q_1^Aq_1^A + 2(q_1^iq_2^i)q_1^Aq_2^A - (q_1^i)^2q_2^Aq_2^A \right] & \mbox{ if } B = A \ ,
\end{cases}
$$
then after a long calculation we obtain that
$\ds \det\left( \frac{\partial^2\Lag}{\partial q_2^B\partial q_2^A} \right) = 0$.
In particular, $\Lag$ is an almost-regular Lagrangian.

As this is a second-order dynamical system, the phase space that we consider is
$$
\xymatrix{
\ & \W = \Tan^3Q \times_{\Tan Q} \Tan^*(\Tan Q) \ar[dl]_-{\pr_1} \ar[dr]^-{\pr_2} & \ \\
\Tan^3Q \ar[dr]_-{\rho^{3}_{1}} & \ & \Tan^*(\Tan Q) \ar[dl]^-{\pi_{\Tan Q}} \\
\ & \Tan Q & \  .
}
$$
As $\Lag$ is almost-regular, the ``natural'' phase space for this system would be 
$\Tan^3Q \times_{\Tan Q} P_o$, where $P_o \hookrightarrow \Tan^*(\Tan Q)$
 denotes the image of the Legendre-Ostrogradsky map.
 However, as we have a set of natural coordinates defined in $\W$, 
 it is easier to work in $\W$ and then to obtain the constraints as a consequence of the formalism.

If $\omega_{1} \in \df^2(\Tan^*(\Tan Q))$ is the canonical symplectic form,
we define the presymplectic form
$\Omega = \pr_2^*\omega_{1}\in \df^2(\W)$,
whose local expression is
$$
\Omega = dq_0^i \wedge dp_i^0 + dq_1^i \wedge dp_i^1 \ .
$$
The Hamiltonian function $H \in \Cinfty(\W)$ is
$H = \C - (\rho^3_2 \circ \pr_1)^*\Lag$,
where $\C$ is the coupling function, whose local expression is
$\C\left(q_0^i,q_1^i,q_2^i,q_3^i,p_i^0,p_i^1\right) = p_i^0q_1^i + p_i^1q_2^i$,
and then the Hamiltonian function can be written locally
$$
H\left(q_0^i,q_1^i,q_2^i,q_3^i,p_i^0,p_i^1\right) =
 p_i^0q_1^i + p_i^1q_2^i - \frac{\alpha}{(q_1^i)^2} \left[ (q_1^i)^2(q_2^i)^2 - (q_1^iq_2^i)^2 \right]^{1/2} \ .
$$

The dynamics for this system are described as the integral curves of  vector fields
$X \in \vf(\W)$ which are solutions to equation (\ref{eqn:Cap06_EqDinImp}).
If we take a generic vector field $X\in\vf(\W)$, given locally by
$$
X = f_0^A \derpar{}{q_0^A} + f_1^A \derpar{}{q_1^A} + F_2^A \derpar{}{q_2^A} + 
F_3^A \derpar{}{q_3^A} + G_A^0 \derpar{}{p_A^0} + G_A^1 \derpar{}{p_A^1} \ ,
$$
taking into account that
\beann
\d H &=& \displaystyle q_1^A\d p_A^0 + q_2^A\d p_A^1 + 
\left[ p^0_A + \frac{\alpha}{((q_1^i)^2)^2\sqrt{g}}\left[ \left((q_1^i)^2(q_2^i)^2 - 2(q_1^iq_2^i)^2\right)q_1^A + 
(q_1^iq_2^i)(q_1^i)^2q_2^A \right] \right]\d q_1^A\\
& &+ \left[ p_A^1 - \frac{\alpha}{(q_1^i)^2\sqrt{g}}\left( (q^i_1)^2 q_2^A - (q_1^iq_2^i)q_1^A \right) \right]\d q_2^A \ ;
\eeann
from the dynamical equation we obtain the following linear systems for the coefficients of $X$
\begin{align}
 & f_0^A = q_1^A \label{eqn:Example_Semispray2_1} \\
 & f_1^A = q_2^A \label{eqn:Example_Semispray2_2} \\
& G_A^0 = 0 \\
 & G_A^1 = - p^0_A - 
\frac{\alpha}{((q_1^i)^2)^2\sqrt{g}}\left[ \left((q_1^i)^2(q_2^i)^2 - 2(q_1^iq_2^i)^2\right)q_1^A + 
(q_1^iq_2^i)(q_1^i)^2q_2^A \right] \label{eqn:Example_VectorFieldG1} \\
 & p_A^1 - \frac{\alpha}{(q_1^i)^2\sqrt{g}}\left( (q^i_1)^2 q_2^A - (q_1^iq_2^i)q_1^A \right) = 0 \ .
\label{eqn:Example_LegTransformation}
\end{align}
Note that from equations \eqref{eqn:Example_Semispray2_1} and \eqref{eqn:Example_Semispray2_2}
we obtain the condition of semispray of type $2$ for $X$.
Furthermore, equations \eqref{eqn:Example_LegTransformation} are algebraic relations
between the coordinates in $\W$ stating that the vector field $X$ is defined along a submanifold
$\W_o$ that is identified with the graph of the Legendre-Ostrogradsky map,
as we stated in Propositions \ref{prop:Cap06_ExistSolEqDin} and \ref{prop:Cap06_W0GrafFL}.
Thus, the vector field $X$ is given locally by
\begin{equation}
\label{eqn:Example_VectorFieldBeforeHolonomy}
X = q_1^A \derpar{}{q_0^A} + q_2^A \derpar{}{q_1^A} + F_2^A \derpar{}{q_2^A} +
F_3^A \derpar{}{q_3^A} + G_A^1 \derpar{}{p_A^1} \ ,
\end{equation}
where the functions $G_A^1$ are determined by \eqref{eqn:Example_VectorFieldG1}.
As we want to recover the Lagrangian solutions from the vector field $X$,
we must require $X$ to be a semispray of type $1$.
This condition reduces the set of vector fields $X \in \vf(\W)$ given by 
\eqref{eqn:Example_VectorFieldBeforeHolonomy} to the following ones
\begin{equation}
\label{eqn:Example_VectorFieldHolonomy}
X = q_1^A \derpar{}{q_0^A} + q_2^A \derpar{}{q_1^A} + q_3^A \derpar{}{q_2^A} + 
F_3^A \derpar{}{q_3^A} + G_A^1\derpar{}{p_A^1} \ .
\end{equation}
Notice that the functions $F_3^A$ are not determinated until the tangency of the vector field $X$
on $\W_o$ is required. Now, the Legendre-Ostrogradsky transformation is the map
$\Leg \colon \Tan^3Q \longrightarrow \Tan^*(\Tan Q)$ locally given by
\beann
\Leg^*(p^0_A) &=& 
\derpar{\Lag}{q_1^A} - d_T\left(\derpar{\Lag}{q_2^A}\right) \equiv \derpar{\Lag}{q^A_1} - d_T \left( p_A^1 \right) =
\\ & &
\frac{\alpha}{(q_1^i)^2\sqrt{g^3}} \left[ \left( (q_2^i)^2g + (q_1^i)^2(q_2^i)^2(q_1^iq_3^i) -
 (q_1^i)^2(q_1^iq_2^i)(q_ 2^iq_3^i) \right)q_1^A\right] + \\ & &
\frac{\alpha}{(q_1^i)^2\sqrt{g^3}} \left[ \left( ((q_1^i)^2)^2(q_2^iq_3^i) - (q_1^i)^2(q_1^iq_2^i)(q_1^iq_3^i) - 
(q_1^iq_2^i)g \right)q_2^A - (q_1^i)^2gq_3^A\right] \\ & &
\\ 
\Leg^*(p^1_A) &=& \derpar{\Lag}{q_2^A} =
\frac{\alpha}{(q^i_1)^2\sqrt{g}} \left[ (q^i_1)^2q_2^A - (q^i_1q^i_2)q^A_1 \right] \ ,
\eeann
and, in fact, $\Lag$ is an almost-regular Lagrangian.
Thus, from the expression in local coordinates of the map $\Leg$,
we obtain the (primary) constraints that define the closed submanifold $P_o={\rm Im}\,\Leg$,
which are
\beq
\phi^{(0)}_1 \equiv p^1_iq_1^i = 0 \quad ; \quad
\phi^{(0)}_2 \equiv (p_i^1)^2 - \frac{\alpha^2}{(q^i_1)^2} = 0 \ ;
\label{eqn:Example_Constraints0}
\eeq
Let $\Leg_o \colon \Tan^{3}Q \to P_o$.
Then, the submanifold $\W_o ={\rm graph}\,\Leg_o$ is defined by
$$
\W_o = \left\{ p \in \W \ \colon \ \xi^A_0(p) = \xi^A_1(p) = \phi^{(0)}_1(p) = \phi^{(0)}_2(p) = 0,
 \ 1 \leqslant A \leqslant \dim Q \right\}
$$
where $\xi_r^A \equiv p_A^r - \Leg^*p_A^r$. The diagram for this situation is
$$
\xymatrix{
\ & \W \ar@/_1.25pc/[dddl]_-{\pr_1} \ar@/^1.25pc/[dddr]^-{\pr_2} & \ \\
\ & \ & \ \\
\ & \W_{P_o} = \Tan^{3}Q \times_{\Tan Q} P_o \ar@{^{(}->}[uu]^-{j_{\W_{P_o}}}
 \ar[dl]_-{\pr_{1,\W_{P_o}}} \ar[dr]^-{\pr_{2,\W_{P_o}}} \ar[ddr]_-{\pr_{2,P_o}}  & \ \\
\Tan^3Q & \ & \Tan^*(\Tan Q) \\
\ & \W_o = {\rm graph}\,\Leg_o \ar@{^{(}->}[uu]^-{j_o} \ar[ul]_-
{\overline{\pr}_{1,P_o}} \ar[r]^-{\overline{\pr}_{2,P_o}} & P_o \ar@{^{(}->}[u]^{j_{P_o}} 
}
$$
Notice that $\W_o$ is a submanifold of $\Tan^3Q \times_{\Tan Q} P_o$,
and that $\W_o$ is the real phase space of the system,
where the dynamics take place.

Next we compute the tangency condition for
$X \in \vf(\W)$ given locally by \eqref{eqn:Example_VectorFieldHolonomy} on the submanifold
 $\W_o \hookrightarrow \W_{P_o} \hookrightarrow \W$, by
checking if the following identities hold
\bea
\Lie(X)\xi^A_0\vert_{\W_o} = 0 & \ , \ &
\Lie(X)\xi^A_1\vert_{\W_o} = 0 
\label{eqn:Example_LagEquations1}\\
\Lie(X)\phi^{(0)}_1\vert_{\W_o} = 0  & \ , \ &
\Lie(X)\phi^{(0)}_2\vert_{\W_o} = 0 \ .
\label{eqn:Example_LieDerConstraints0}
\eea
As we have seen in Section \ref{section:DynamicsW}, equations (\ref{eqn:Example_LagEquations1})
give us the Lagrangian equations for the vector field $X$.
However, equations (\ref{eqn:Example_LieDerConstraints0}) do not hold since
$$
\Lie(X)\phi^{(0)}_1 = \Lie(X)(p^1_iq_1^i) = - p_i^0q_1^i \quad , \quad
\Lie(X)\phi^{(0)}_2 = \Lie(X)((p^1_i)^2 - \alpha^2 / (q_1^i)^2) = - 2p_i^0q_i^1 \ ,
$$
and hence we obtain two first-generation secondary constraints
\beq
\phi^{(1)}_1 \equiv p_i^0 q_1^i = 0 \quad , \quad
\phi^{(1)}_2 \equiv p_i^0p_i^1 = 0 
\label{eqn:Example_Constraints1}
\eeq
that define a new submanifold $\W_1 \hookrightarrow \W_o$.
Now, checking the tangency of the vector field $X$ to this new submanifold,
we obtain
$$
\Lie(X)\phi^{(1)}_1 = \Lie(X)(p_i^0q_1^i) = 0 \quad , \quad
\Lie(X)\phi^{(1)}_2 = \Lie(X)(p_i^0p_i^1) = -(p^0_i)^2 \ ,
$$
and a second-generation secondary constraint appears
\beq
\phi^{(2)} \equiv (p^0_i)^2 = 0 \ ,
\label{eqn:Example_Constraints2}
\eeq
which defines a new submanifold $\W_2 \hookrightarrow \W_1$.
Finally, the tangency of the vector field $X$ on this submanifold
gives no new constraints, since
$$
\Lie(X)\phi^{(2)} = \Lie(X)((p^0_i)^2) = 0 \ .
$$
So we have two primary constraints \eqref{eqn:Example_Constraints0},
 two first-generation secondary constraints \eqref{eqn:Example_Constraints1},
 and a single second-generation secondary constraint \eqref{eqn:Example_Constraints2}.
 Notice that these five constraints only depend on $q_1^A$, $p^0_A$ and $p^1_A$, 
 and so they are $\pr_2$-projectable. Thus, we have the following diagram
$$
\xymatrix{
\ & \ & \W \ar@/_1.25pc/[dddll]_-{\pr_1} \ar@/^1.25pc/[dddrr]^-{\pr_2} & \ & \ \\
\ & \ & \ & \ & \ \\
\ & \ & \W_{P_o} \ar@{^{(}->}[uu]^-{j_{\W_{P_o}}}
 \ar[dll]_-{\pr_{1,\W_{P_o}}} \ar[drr]^-{\pr_{2,\W_{P_o}}} \ar[ddrr]_-{\pr_{2,P_o}} & \  & \ \\
\Tan^3Q & \ & \ & \ & \Tan^*(\Tan Q) \\
S_1 \ar@{^{(}->}[u]^-{j_{S_1}} & \ & \W_o \ar@{^{(}->}[uu]^-{j_o} \ar[ull]_-{\overline{\pr}_{1,P_o}} 
\ar[rr]^-{\overline{\pr}_{2,P_o}} & \ & P_o \ar@{^{(}->}[u]^{j_{P_o}} \\
S_2 \ar@{^{(}->}[u]^-{j_{S_2}} & \ & \W_1 \ar@{^{(}->}[u]^-{j_1} \ar[ull]_-{\overline{\pr}_{1,P_1}}
 \ar[rr]^-{\overline{\pr}_{2,P_1}} & \ & P_1 \ar@{^{(}->}[u]^-{j_{P_1}} \\
\ & \ & \W_2 \ar@{^{(}->}[u]^-{j_2} \ar[ull]_-{\overline{\pr}_{1,P_2}}
 \ar[rr]^-{\overline{\pr}_{2,P_2}} & \ & P_2 \ar@{^{(}->}[u]^-{j_{P_2}}
}
$$
where
\begin{align*}
&P_1 = \left\{ p \in P_o \colon \phi^{(1)}_1(p) = \phi^{(1)}_2(p) = 0 \right\} =\pr_2(\W_1) \\
&P_2 = \left\{ p \in P_o \colon \phi^{(2)}(p) = 0 \right\} =\pr_2(\W_2) \\
&S_1 = \Leg_o^{-1}(P_1) = \pr_1(\W_1)\\
&S_2 = \Leg_o^{-1}(P_2) = \pr_1(\W_2) \ .
\end{align*}
Focusing only on the Legendre-Ostrogradsky map, 
and ignoring the unified part of the diagram, we have
$$
\xymatrix{
\Tan^3 Q \ar[rr]^-{\Leg} \ar[drr]^-{\Leg_o} & \ & \Tan^*(\Tan Q) \\
S_1 \ar@{^{(}->}[u]^-{j_{S_1}} \ar[drr]^-{\Leg_o} & \ & P_o \ar@{^{(}->}[u]^-{j_{P_o}} \\
S_2 \ar@{^{(}->}[u]^-{j_{S_2}} \ar[drr]^-{\Leg_o} & \ & P_1 \ar@{^{(}->}[u]^-{j_{P_1}} \\
\ & \ & P_2 \ar@{^{(}->}[u]^-{j_{P_2}}
}
$$

Notice that we still have to check \eqref{eqn:Example_LagEquations1}.
As we have seen in Section \ref{section:DynamicsW}, we will obtain the following equations
\begin{align}
&\left(F_3^B - d_T\left(q_3^B\right)\right)\frac{\partial^2\Lag}{\partial q_2^B\partial q_2^A} +
 \derpar{\Lag}{q_0^A} - d_T\left(\derpar{\Lag}{q_1^A}\right) + 
 d_T^2\left(\derpar{\Lag}{q_2^A}\right) + 
 \left(F_2^B - q_3^B\right)d_T\left(\frac{\partial^2\Lag}{\partial q_2^B\partial q_2^A}\right) = 0
  \label{eqn:Example_EulerLagrangeInitial} \\
&\left(F_2^B - q_3^B\right)\frac{\partial^2\Lag}{\partial q_2^B\partial q_2^A} = 0 
\label{eqn:Example_Semispray1}
\end{align}
As we have already required the vector field $X$ to be a semispray of type $1$,
equations \eqref{eqn:Example_Semispray1} are satisfied identically
and equations \eqref{eqn:Example_EulerLagrangeInitial} become
\begin{equation}\label{eqn:Example_EulerLagrangeFinal}
\left(F_3^B - d_T\left(q_3^B\right)\right)\frac{\partial^2\Lag}{\partial q_2^B\partial q_2^A} +
 \derpar{\Lag}{q_0^A} - 
d_T\left(\derpar{\Lag}{q_1^A}\right) + d_T^2\left(\derpar{\Lag}{q_2^A}\right) = 0 \ .
\end{equation}
A long calculation shows that this equation is compatible and so no new constraints arise.
Thus, we have no Lagrangian constraint appearing from the semispray condition.
 If some constraint had appeared, it would not be $\Leg_o$-projectable
 (see \cite{art:Gracia_Pons_Roman92})

Thus, the vector fields $X \in \vf(\W)$ given locally by \eqref{eqn:Example_VectorFieldHolonomy}
 which are solutions to the equation
$$
\restric{\left[\inn(X)\Omega - \d H\right]}{\W_2} = 0 \ ,
$$
are  tangent to the submanifold $\W_2 \hookrightarrow \W_o$.
Therefore, taking the vector fields $X_o\in\vf(\W_2)$ such that
$\Tan j_2\circ X_o=X\circ j_2$, the form
$\Omega_o = (j_{\W_{P_o}} \circ j_o \circ j_1 \circ j_2)^*\Omega$,
and the canonical Hamiltonian function $H_o = (j_{\W_{P_o}} \circ j_o \circ j_1 \circ j_2)^*H$,
the above equation leads to
$$
\label{eqn:Example_DynamicalEquationsRestricted}
\inn(X_o)\Omega_o - \d H_o= 0 \ ,
$$
but a simple calculation in local coordinates shows that $H_o=0$, and thus
the last equation becomes
$\inn(X_o)\Omega_o= 0$.

One can easily check that, if the semispray condition is not required at the beginning
and we perform all this procedure with the vector field given by \eqref{eqn:Example_VectorFieldBeforeHolonomy},
the final result is the same. This means that, in this case,
 the semispray condition does not give any additional constraint.

As final results, we recover the Lagrangian and Hamiltonian vector fields from the vector field $X \in \vf(\W)$.
For the Lagrangian vector field, by using Lemma \ref{lemma:Cap06_LagVectField} and
Theorem \ref{thm:Cap06_CorrX-XL} we obtain a semispray of type $2$,
$X_\Lag \in \vf(\Tan^3Q)$, tangent to $S_2$. Thus, requiring the condition of semispray of
type $1$ to be satisfied (perhaps on another submanifold $M_2 \hookrightarrow S_2$), the local
expression for the vector field $X_\Lag$ is
\begin{equation*}
X_\Lag = q_1^A \derpar{}{q_0^A} + q_2^A \derpar{}{q_1^A} + q_3^A \derpar{}{q_2^A} + 
F_3^A \derpar{}{q_3^A} \ .
\end{equation*}
where the functions $F_3^A$ are determined by (\ref{eqn:Example_EulerLagrangeFinal}).
For the Hamiltonian vector fields, recall that $\Lag$ is an almost-regular Lagrangian
function. Thus, we know that there are Euler-Lagrange vector fields which are
$\Leg_o$-projectable on $P_2$, tangent to $P_2$ and solutions to the Hamilton
equation.

\section{Conclusions and outlook}
\label{section:outlook}

After introducing the natural geometric structures needed for describing
 higher-order autonomous dynamical systems, we review their Lagrangian and Hamiltonian formalisms,
 following the exposition made in \cite{book:DeLeon_Rodrigues85}.
 
The main contribution of this work is that
we develop the Lagrangian-Hamiltonian unified formalism for higher-order dynamical systems,
following the ideas of the original article \cite{art:Skinner_Rusk83}.
We pay special attention to showing how the Lagrangian and Hamiltonian dynamics are recovered from 
this, both for regular and singular systems.

A first consideration is to discuss the
fundamental differences between the first-order and the higher-order
unified Lagrangian-Hamiltonian formalisms. In particular:
\begin{itemize}
\item
As there is no canonical pairing between the elements of $\Tan^{2k-1}_qQ$ and of $\Tan^*_q(\Tan^{k-1}Q)$,
in order to define the higher-order coupling function $\C$ in an intrinsic way,
we use the canonical injection that transforms a point in $\Tan^{2k-1}Q$
into a tangent vector along $\Tan^{k-1}Q$.
\item
When the equations that define the Legendre-Ostrogradsky map
are recovered from the unified formalism
(both in the characterization of the compatibility submanifold $\W_o$
as the graph of $\Leg$, and in the equations in local coordinates of the vector field
$X \in \vf(\W)$ solution to the dynamical equations),
the only equations that are recovered are those that define the highest order momentum coordinates,
and the remaining equations that define the map must be recovered
using the relations between the momentum coordinates.
\item
The regularity of the Lagrangian function is more relevant in the higher-order case,
because the condition of semispray of type $1$ (the holonomy condition) of the Lagrangian vector field
cannot be deduced from the dynamical equations if the Lagrangian is singular,
unlike the first-order case, where this holonomy condition is deduced straightforwardly
from the equations independently of the regularity of the Lagrangian function.
When the Lagrangian is singular, we can only ensure that the Lagrangian vector field is a semispray of type $k$.
It is therefore necessary, in general, to require the condition of semispray of type $1$
as an additional condition.

Then, for regular Lagrangian systems, when the tangency condition of the vector field
$X \in \vf(\W)$ solution in the unified formalism along the submanifold $\W_o$ is required,
we obtain not only the Euler-Lagrange equations for the vector field,
but also the remaining $k-1$ systems of equations that the vector field must satisfy to be a semispray of type $1$.
\end{itemize}

As we point out in the introduction,
a previous and quick presentation of a unified formalism for higher-order systems
was outlined in \cite{art:Colombo_Martin_Zuccalli10}.
Our formalism differs from this one, since in that article the authors
take $\Tan^kQ \oplus_{\Tan^{k-1}Q} \Tan^*(\Tan^{k-1}Q)$ as
the phase space in the unified formalism, instead of ours, which is
$\Tan^{2k-1}Q \oplus_{\Tan^{k-1}Q} \Tan^*(\Tan^{k-1}Q)$.
This is a significant difference, since when we want to recover the dynamical
solutions of the Lagrangian formalism from the unified formalism,
the Lagrangian phase space is $\Tan^{2k-1}Q$, instead of $\Tan^kQ$,
which is the bundle where the Lagrangian function is defined.
This fact makes it more natural to obtain the Lagrangian dynamics as well as the
Hamiltonian dynamics, which in turn is obtained from the Lagrangian one using the Legendre-Ostrogradsky map.

By using any suitable generalization of some of the several formalisms for first-order non-autonomous dynamical systems
\cite{book:Abraham_Marsden78,art:Barbero_Echeverria_Martin_Munoz_Roman08,art:Echeverria_Munoz_Roman91},
a future avenue of research consists in generalizing this unified formalism for higher-order
non-autonomous dynamical systems.
This generalization should also be recovered as a particular case of the corresponding unified formalism
for higher-order classical field theories.
As regards this topic, a proposal for a unified formalism for  higher-order classical field theories has recently been made
\cite{art:Campos_DeLeon_Martin_Vankerschaver09,art:Vitagliano10}, which is based on the model presented in \cite{art:Colombo_Martin_Zuccalli10}.
This formulation allows us to improve some
previous models for
describing the Lagrangian and Hamiltonian formalisms.
Nevertheless, some ambiguities arise when considering the solutions of the field equations.
We hope that a suitable extension of our formalism to field theories will enable
these difficulties to be overcome and complete the model given in \cite{art:Campos_DeLeon_Martin_Vankerschaver09,art:Vitagliano10}.

\section*{Acknowledgments}

We acknowledge the financial support of the  {\sl
Ministerio de Ciencia e Innovaci\'on} (Spain), projects
MTM2008-00689 and MTM2009-08166-E.
We also  thank Mr. Jeff Palmer for his assistance in preparing the English
version of the manuscript.

{\small
\bibliography{Bibliografia}
\bibliographystyle{AMS_Mod}
}
\end{document}